\newtheorem{remark}[theorem]{Remark}
\renewcommand{\mathcal}[1]{\mathscr{#1}}
\newcommand{\AC}{\mathsf{AC}}
\DeclareMathOperator{\dt}{dt}
\newcommand{\PP}{\mathsf{PP}}
\DeclareMathOperator{\degthr}{deg_{\pm}}
\newcommand{\OMB}{\text{\rm OMB}}
\newcommand{\QED}{\vbox{\hrule height0.6pt\hbox{%
   \vrule height1.3ex width0.6pt\hskip0.8ex
   \vrule width0.6pt}\hrule height0.6pt
  }}
\newenvironment{annotatedproof}[1]
{\par{\it #1}. \ignorespaces}
{}
\title{The Pattern Matrix Method%
  \thanks{To appear in \emph{SIAM~J.~Comput.}, 2009.
  A preliminary version of this article appeared under the title
  ``The Pattern Matrix Method for Lower Bounds on Quantum Communication''
  in \emph{Proceedings of the {\rm 40}th Annual ACM Symposium on Theory of
  Computing} (STOC), pages 85-94, 2008.}
}
\author{
Alexander~A.~Sherstov\thanks{Department of Computer Science,
The University of Texas at Austin
({\tt sherstov@cs.utexas.edu}).}
}
\renewcommand{\AA}{\mathcal{A}}
\newcommand{\BB}{\mathcal{B}}
\newcommand{\CC}{\mathcal{C}}
\newcommand{\VV}{\mathcal{V}}
\newcommand{\1}{\mathbf{1}}
\newcommand{\tr}{^{\sf T}}
\newcommand{\F}{_\mathrm{F}}
\DeclareMathOperator{\Span}{span}
\DeclareMathOperator{\adeg}{\deg_{1/3}}
\DeclareMathOperator{\degeps}{\deg_{\epsilon}}
\newcommand{\qcc}{Q_{1/3}^*}   
\newcommand{\final}{\text{\sf Final}} 
\newcommand{\init}{\text{\sf Initial}}
\newcommand{\oneeighth}{{\textstyle\frac{1}{8}}}
\newcommand{\mip}{\mathrm{MP}}
\begin{document}
\maketitle

\begin{abstract} 
We develop a novel and powerful technique for communication lower
bounds, the \emph{pattern matrix method}.  Specifically, fix an
arbitrary function $f\colon \zoon\to\zoo$ and let $A_f$ be the
matrix whose columns are each an application of $f$ to some subset
of the variables $x_1,x_2,\dots,x_{4n}.$ We prove that $A_f$ has
bounded-error communication complexity $\Omega(d),$ where $d$ is
the approximate degree of $f.$ This result remains valid in the
quantum model, regardless of prior entanglement. In particular, it
gives a new and simple proof of Razborov's breakthrough quantum lower
bounds for disjointness and other symmetric predicates.  We further
characterize the discrepancy, approximate rank, and approximate
trace norm of $A_f$ in terms of well-studied analytic properties
of $f,$ broadly generalizing several recent results on small-bias
communication and agnostic learning.  The method of this paper has
recently enabled important progress in multiparty communication
complexity.
\end{abstract}

\begin{keywords}
Pattern matrix method, bounded-error communication complexity, quantum
communication complexity, discrepancy, Degree/Discrepancy Theorem,
approximate rank, approximate trace norm, linear programming duality,
approximation and sign-representation of Boolean functions by real
polynomials.
\end{keywords}

\begin{AMS}
03D15, 68Q15, 81P68
\end{AMS}

\pagestyle{myheadings}
\thispagestyle{empty}
\markboth{ALEXANDER A. SHERSTOV}{THE PATTERN MATRIX METHOD}

\section{Introduction} \label{sec:intro}

A central model in communication complexity is the \emph{bounded-error
model}. Let $f\colon X\times Y\to\moo$ be a given function, where
$X$ and $Y$ are finite sets.  Alice receives an input $x\in X,$
Bob receives $y\in Y,$ and their objective is to compute $f(x,y)$
with minimal communication.  To this end, Alice and Bob share an
unlimited supply of random bits. Their protocol is said to \emph{compute
$f$} if on every input $(x,y),$ the output is correct with probability
at least $1-\epsilon.$ The canonical setting is $\epsilon=1/3,$ but
any other parameter $\epsilon\in(0,1/2)$ can be considered.
The \emph{cost} of a protocol is the worst-case number
of bits exchanged on any input.  Depending on the physical nature
of the communication channel, one studies the \emph{classical model},
in which the messages are classical bits $0$ and $1,$ and the more
powerful \emph{quantum model}, in which the messages are quantum
bits and arbitrary prior entanglement is allowed. The communication
complexity in these models is denoted $R_{\epsilon}(f)$ and
$Q^*_{\epsilon}(f),$ respectively.

Bounded-error protocols have been the focus of much research in communication
complexity since the introduction of the area by Yao~\cite{yao79cc}
three decades ago.  A variety of techniques have been developed
for proving lower bounds on classical communication, 
e.g.,~\cite{KS92disj,
razborov90disj,
GHR92,
raz95fourier-cc,
chakrabarti-et-al01info-complexity,
linial07factorization-stoc,
gkkrw07quantum-classical,
sherstov07ac-majmaj}.
There has been consistent progress on quantum communication as 
well~\cite{yao93quantum,
ambainis03sampling,  
buhrman-dewolf01polynomials, 
klauck01interaction, 
klauck01quantum,     
razborov03quantum, 
linial07factorization-stoc}, 
although quantum protocols remain much less understood than 
their classical counterparts. 

The main contribution of this paper is a novel and powerful method
for lower bounds on classical and quantum communication complexity,
the \emph{pattern matrix method}. The method converts 
analytic properties of Boolean functions into lower bounds for the
corresponding communication problems. The analytic properties in
question pertain to the approximation and sign-representation of a
given Boolean function by real polynomials of low degree, which are
among the oldest and most studied objects in theoretical computer
science. In other words, the pattern matrix method takes the wealth
of results available on the representations of Boolean functions by
real polynomials and puts them at the disposal of communication
complexity.

We consider two ways of representing Boolean functions by real
polynomials.  Let $f\colon\zoon\to\moo$ be a given Boolean function.
The \emph{$\epsilon$-approximate degree} of $f,$ denoted $\degeps(f),$
is the least degree of a real polynomial $p$ such that
\mbox{$|f(x)-p(x)|\leq \epsilon$} for all $x\in\zoon.$ There is an
extensive literature on the $\epsilon$-approximate degree of Boolean
functions%
~\cite{nisan-szegedy94degree,
paturi92approx,
kahn96incl-excl,
buhrman-et-al99small-error,
aaronson-shi04distinctness,
ambainis05collision,
sherstov07inclexcl-ccc,
de-wolf08approx-degree},
for the canonical setting $\epsilon=1/3$ and various other settings.
Apart from uniform approximation, the other representation scheme
of interest to us is sign-representation. Specifically, the
\emph{degree-$d$ threshold weight} $W(f,d)$ of $f$ is the minimum
$\sum_{|S|\leq d} |\lambda_S|$ over all integers $\lambda_S$ such
that
\begin{align*}
f(x)\equiv \sign\left( \sum_{S\subseteq\{1,\dots,n\}, \, |S|\leq d}
\lambda_S\chi_S(x)\right),
\end{align*}
where $\chi_S(x) = (-1)^{\sum_{i\in S}x_i}.$ If no such integers
$\lambda_S$ exist, we write $W(f,d)=\infty.$ The threshold weight
of Boolean functions has been heavily studied, both when $W(f,d)$
is infinite%
~\cite{minsky88perceptrons,
aspnes91voting,
krause94depth2mod,
KP98threshold,
KS01dnf,
KOS:02,
odonnell03degree}
and when it is finite%
~\cite{minsky88perceptrons,
myhill-kautz61,
beigel94perceptrons,
vereshchagin95weight,
klivans-servedio06decision-lists,
mlj07sq,
podolskii07perceptrons,
podolskii08perceptrons}.
The notions of uniform approximation and sign-representation are
closely related, as we discuss in Section~\ref{sec:prelim}. Roughly
speaking, the study of threshold weight corresponds to the study
of the $\epsilon$-approximate degree for $\epsilon=1-o(1).$

Having defined uniform approximation and sign-representation for
Boolean functions, we now describe how we use them to prove
communication lower bounds. The central concept in our work is what
we call a \emph{pattern matrix}. Consider the
communication problem
of computing
\[ f(x|_V), \]
where $f\colon\zoo^t\to\moo$ is a fixed Boolean function; the string
$x\in\zoon$ is Alice's input ($n$ is a multiple of $t$); and the
set $V\subset\{1,2,\dots,n\}$ with $|V|=t$ is Bob's input.  In
words, this communication problem corresponds to a situation when
the function $f$ depends on only $t$ of the inputs $x_1,\dots,x_n.$
Alice knows the values of all the inputs $x_1,\dots,x_{n}$ but does
not know which $t$ of them are relevant.  Bob, on the other hand,
knows which $t$ inputs are relevant but does not know their values.
This communication game was introduced and studied in an earlier
work by the author~\cite{sherstov07ac-majmaj}, in the context of
small-bias communication. For the purposes of the introduction,
one can think of the \emph{$(n,t,f)$-pattern matrix} as
the matrix $[f(x|_V)]_{x,V},$ where $V$ ranges over the $(n/t)^t$
sets that have exactly one element from each block of the following
partition:
\begin{align*}
  \{1,\dots,n\} = \left\{1,  2,  \dots,   \frac{n}{t}
  \rule{0mm}{5mm}
  \right\} 
     \cup 
  \left\{\frac{n}{t}+1,   \dots,  \frac{2n}{t}
  \right\}
     \cup 
	 \cdots 
	 \cup
  \left\{\frac{(t-1)n}{t}+1,  \dots,  n\right\}.
\end{align*}
We defer the precise definition to Section~\ref{sec:pattern-matrices}.
Observe that restricting $V$ to be of special form 
only makes our results stronger.

\subsection{Our results}

Our main result is a lower bound on the communication complexity of a pattern
matrix in terms of the $\epsilon$-approximate degree of the base function
$f.$ The lower bound holds for both classical and quantum protocols,
regardless of prior entanglement.

\begin{theorem}[communication complexity]
Let $F$ be the $(n,t,f)$-pattern matrix, 
where $f\colon \zoo^t\to\moo$ is given.
Then for every $\epsilon\in[0,1)$ and every $\delta<\epsilon/2,$
\begin{align}
Q^*_{\delta}(F) &\geq
\frac{1}{4}  \degeps(f)\log_2 \left(\frac{n}{t}\right) - 
\frac12 \log_2\left(\frac{3}{\epsilon-2\delta}\right).
	\nonumber
\intertext{In particular,}
Q^*_{1/7}(F) &>
\frac{1}{4}  \deg_{1/3}(f)\log_2 \left(\frac{n}{t}\right) - 3. 
	\label{eqn:main-cc-bounded}
\end{align}
\label{thm:main-cc}
\end{theorem}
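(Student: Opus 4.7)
The plan is to prove the bound via the generalized discrepancy / approximate trace-norm method for quantum communication. Concretely, I will build a real matrix $\Psi$ of the same shape as $F$, with $\sum_{i,j}|\Psi_{ij}| = 1$ and $\langle F, \Psi\rangle > \epsilon$, whose spectral norm $\|\Psi\|$ is at most $(t/n)^{d/2}/\sqrt{MN}$, where $d = \degeps(f)$ and $M\times N$ are the dimensions of $F$. A standard Razborov/Klauck-style result then produces
\[
  Q^*_\delta(F) \;\geq\; \tfrac{1}{2}\log_2\frac{\epsilon - 2\delta}{3\,\|\Psi\|\sqrt{MN}},
\]
and substituting the spectral-norm bound immediately gives the first inequality of the theorem. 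The specialization (\ref{eqn:main-cc-bounded}) is then the arithmetic fact $\log_2(3/(1/3 - 2/7)) = \log_2 63 < 6$.

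The matrix $\Psi$ is obtained by lifting a dual witness for approximate degree. Since $\degeps(f) = d$, LP duality for the $\ell_1$-approximation program yields a function $\psi\colon \zoo^t\to\mathbb{R}$ satisfying $\sum_z|\psi(z)| = 1$, $\sum_z \psi(z) f(z) > \epsilon$, and $\hat\psi(S) = 0$ for every $|S|<d$. I lift it to pattern-matrix format by setting
\[
  \Psi[x,(V,w)] \;=\; \frac{\psi(x|_V \oplus w)}{2^n (n/t)^t}.
\]
Because the argument $x|_V\oplus w$ is uniformly distributed over $\zoo^t$ as $(x,w)$ ranges uniformly with $V$ fixed, a one-line counting check gives both $\sum_{i,j}|\Psi_{ij}| = 1$ and $\langle F,\Psi\rangle = \langle f,\psi\rangle > \epsilon$.

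The crux is to bound $\|\Psi\|$. I expand $\psi$ in the Fourier basis $\psi = \sum_{|S|\geq d}\hat\psi(S)\chi_S$ (the restriction to $|S|\geq d$ comes from the orthogonality property above) and exploit the tensor structure of the pattern matrix: the column index $V$ picks exactly one coordinate from each of the $t$ equal-sized blocks of $[n]$, and $w\in\zoo^t$ shifts those coordinates. This structure lets me compute the singular value decomposition of each character-lift $\chi_S\mapsto \Psi_S$ explicitly and verify that the various $\Psi_S$ have pairwise orthogonal row and column supports. Consequently, $\|\Psi\|$ equals the \emph{maximum} (not the sum) over $S$ of $|\hat\psi(S)|$ times $\|\Psi_S\|$, and a tensor-product calculation gives $\|\Psi_S\| = (t/n)^{|S|/2}/\sqrt{MN}$. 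Combined with the trivial estimate $|\hat\psi(S)| \leq \|\psi\|_1 = 1$ and the fact that $|S|\geq d$ for every surviving $S$, we obtain the target bound $\|\Psi\|\sqrt{MN}\leq (t/n)^{d/2}$.

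The main obstacle will be this Fourier-analytic singular-value analysis. Showing that the character-lifts have mutually orthogonal singular subspaces is what lets us replace the natural triangle-inequality sum $\sum_S |\hat\psi(S)|\,\|\Psi_S\|$ by a maximum over $S$; without this, the bound would degrade by a factor of up to $2^{t/2}$ and the theorem would fail. Establishing this orthogonality requires decomposing the pattern matrix block-by-block and tracking how a character $\chi_S$ on $\zoo^t$ restricts to each block of the column partition, so that the $t$ blocks contribute independent tensor factors. Once this structural lemma is in hand, the theorem is an arithmetic consequence of combining the spectral-norm bound with the Razborov/Klauck inequality.
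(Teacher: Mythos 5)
Your plan tracks the paper's proof essentially step for step: obtain the LP-dual witness $\psi$ for $\degeps(f)$, lift it to the $(n,t,2^{-n}(n/t)^{-t}\psi)$-pattern matrix $\Psi$, bound $\|\Psi\|$ by decomposing $\Psi$ into character lifts $\Psi_S$ with pairwise orthogonal row and column spaces (the paper's Lemma~\ref{lem:disjoint-spectra} and Theorem~\ref{thm:pattern-spectrum}), and feed $\|\Psi\|$, $\|\Psi\|_1=1$, $\langle F,\Psi\rangle>\epsilon$ into the generalized discrepancy bound (Theorem~\ref{thm:discrepancy-method}). The ideas and the decomposition are all the right ones.

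There is, however, a bookkeeping inconsistency in the spectral-norm step that would trip you up if you tried to write it out. With the standard normalization $\hat\psi(S)=2^{-t}\sum_z\psi(z)\chi_S(z)$ (which your expansion $\psi=\sum_S\hat\psi(S)\chi_S$ presupposes), the singular value of the character lift is
\[
\|\Psi_S\| \;=\; \frac{1}{2^n(n/t)^t}\sqrt{2^{n+t}(n/t)^{t-|S|}} \;=\; 2^t\cdot\frac{(t/n)^{|S|/2}}{\sqrt{MN}},
\]
not $(t/n)^{|S|/2}/\sqrt{MN}$ as you state. Correspondingly, the correct ``trivial'' estimate on the Fourier coefficients is $|\hat\psi(S)|\le 2^{-t}\|\psi\|_1=2^{-t}$ (Proposition~\ref{prop:fourier-coeff-bound} in the paper), not $|\hat\psi(S)|\le 1$. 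The two factors of $2^t$ cancel, so your target $\|\Psi\|\sqrt{MN}\le(t/n)^{d/2}$ is indeed correct, but as written your two intermediate claims are each off by a factor of $2^t$ and would fail under independent verification. Fix the Fourier-coefficient bound to $2^{-t}$ (or, equivalently, keep your formula for $\|\Psi_S\|$ but switch to the unnormalized transform) and the argument matches the paper's exactly; the final arithmetic for $\epsilon=1/3,\delta=1/7$ ($\frac12\log_2 63 < 3$) is fine.
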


Note that Theorem~\ref{thm:main-cc} yields lower bounds for
communication complexity with error probability $\delta$ for any
$\delta\in(0,1/2).$ In particular, apart from bounded-error
communication~(\ref{eqn:main-cc-bounded}), we obtain lower bounds
for communication with small bias, i.e., error probability
$\frac12-o(1).$ In Section~\ref{sec:additional-results}, we derive
another lower bound for small-bias communication, in terms of
threshold weight $W(f,d).$

As R.~de Wolf pointed out to us~\cite{dewolf-personal-oct-2007}, 
\label{rem:dewolf}
the lower bound (\ref{eqn:main-cc-bounded}) for bounded-error
communication is within a polynomial of optimal.  More precisely,
$F$ has a classical deterministic protocol
with cost $O(\deg_{1/3}(f)^6\log (n/t)),$ by the results of Beals
et al.~\cite{beals-et-al01quantum-by-polynomials}.  See
Proposition~\ref{prop:det-upper-bound} for details.  In particular,
Theorem~\ref{thm:main-cc} exhibits a large new class of communication
problems $F$ whose quantum communication complexity is 
polynomially related to their classical complexity, \emph{even}
if prior entanglement is allowed. Before our work, the largest class
of problems with polynomially related quantum and classical
bounded-error complexities was the class of symmetric functions
(see Theorem~\ref{thm:razborov03quantum} below), which is broadly
subsumed by Theorem~\ref{thm:main-cc}.  Exhibiting a polynomial
relationship between the quantum and classical bounded-error
complexities for \emph{all} functions $F\colon X\times Y\to\moo$
is a longstanding open problem.

Pattern matrices are of interest because they occur as submatrices in many
natural communication problems. For example, Theorem~\ref{thm:main-cc}
can be interpreted in terms of function composition. 
Setting $n=4t$ for concreteness, we obtain:
\begin{corollary}
Let $f\colon \zoo^t\to\moo$ be given. 
Define $F\colon \zoo^{4t}\times\zoo^{4t}\to\moo$ by
$
F(x,y) = f(\dots,
(x_{i,1}y_{i,1}\,\vee\,
x_{i,2}y_{i,2}\,\vee\,
x_{i,3}y_{i,3}\,\vee\,
x_{i,4}y_{i,4})
,\dots).
$
Then
\begin{align*}
Q^*_{1/7}(F) > \frac{1}{4}\adeg(f)- 3.
\end{align*}
\label{cor:main}
\end{corollary}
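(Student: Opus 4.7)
The plan is to exhibit the $(4t,t,f)$-pattern matrix as a submatrix of the communication matrix of $F$, and then invoke Theorem~\ref{thm:main-cc} directly. The key observation is that the OR-of-ANDs in each block of size four acts as a multiplexer: under a suitable restriction of Bob's input, it outputs exactly one of Alice's four bits in that block.

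Concretely, I would restrict Bob to the set of ``selector'' inputs $y$ of the form $y_{i,k}=1$ if $k=j_i$ and $y_{i,k}=0$ otherwise, for some choice of $j_1,\ldots,j_t\in\{1,2,3,4\}$. Under such a $y$, each block's OR collapses to $x_{i,k_i}y_{i,k_i}\vee\cdots = x_{i,j_i}$, so
\[
F(x,y) \;=\; f\bigl(x_{1,j_1},\,x_{2,j_2},\,\ldots,\,x_{t,j_t}\bigr) \;=\; f(x|_V),
\]
where $V=\{(i,j_i):1\le i\le t\}$ is a subset of $\{1,\ldots,4t\}$ with exactly one element from each of the $t$ blocks of size $4$. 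As the selector vector $y$ ranges over its $4^t$ values, $V$ ranges over all such subsets. Identifying Alice's input $x\in\{0,1\}^{4t}$ with the row index and $y$ with the column index $V$, the resulting $2^{4t}\times 4^t$ submatrix of $F$'s communication matrix is exactly the $(4t,t,f)$-pattern matrix.

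Since restricting a communication problem to a submatrix can only decrease its complexity, $Q^*_{1/7}(F)\ge Q^*_{1/7}(M)$, where $M$ denotes the $(4t,t,f)$-pattern matrix. Applying Theorem~\ref{thm:main-cc} with $n=4t$ gives $\log_2(n/t)=2$ and hence
\[
Q^*_{1/7}(M) \;>\; \tfrac14\,\adeg(f)\cdot 2 \,-\,3 \;=\; \tfrac12\,\adeg(f)-3,
\]
which is strictly stronger than the stated bound $\tfrac14\adeg(f)-3$. There is no real obstacle here: the only step requiring any thought is spotting the multiplexer reduction, after which the corollary follows immediately from Theorem~\ref{thm:main-cc}.
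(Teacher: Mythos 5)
There is a genuine gap. You have overlooked that the columns of the $(n,t,f)$-pattern matrix are indexed by pairs $(V,w)\in\VV(n,t)\times\zoo^t$, with entries $f(x|_V\oplus w)$, not merely by the sets $V$. Your multiplexer construction, under selector inputs $y$, produces only the entries $f(x|_V)$ — i.e.\ the columns with $w=0^t$. That is a $2^{4t}\times 4^t$ matrix, whereas the $(4t,t,f)$-pattern matrix has $4^t\cdot 2^t=8^t$ columns. You cannot apply Theorem~\ref{thm:main-cc} to the smaller matrix: the theorem is specific to pattern matrices, and restricting to a submatrix only makes the communication complexity smaller, so the inequality goes the wrong way. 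Worse, the $(4t,t,f)$-pattern matrix cannot occur as a submatrix of $[F(x,y)]$ at all, because each block $\bigvee_{k=1}^4 x_{i,k}y_{i,k}$ is monotone in Alice's bits, so it can never produce $\neg x_{i,j_i}$, which is needed for the $w_i=1$ columns when $x$ ranges freely over $\zoo^{4t}$.

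The fix, and what the paper actually does, is to embed the $(2t,t,f)$-pattern matrix. Alice encodes $x'\in\zoo^{2t}$ into $x\in\zoo^{4t}$ via $(x_{i,1},x_{i,2},x_{i,3},x_{i,4})=(x'_{i,1},\neg x'_{i,1},x'_{i,2},\neg x'_{i,2})$, using two of the four slots per block to store a bit and its complement. Bob's four selector choices in block $i$ then produce $x'_{i,1}$, $\neg x'_{i,1}$, $x'_{i,2}$, $\neg x'_{i,2}$, which is exactly $x'_{i,j_i}\oplus w_i$ for the four pairs $(j_i,w_i)\in\{1,2\}\times\{0,1\}$. This gives the full $(2t,t,f)$-pattern matrix (size $2^{2t}\times 4^t$, with both $V'\in\VV(2t,t)$ and $w\in\zoo^t$ present) as a submatrix. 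With $n=2t$, $\log(n/t)=1$, and Theorem~\ref{thm:main-cc} yields $Q^*_{1/7}(F)>\tfrac14\adeg(f)-3$ — not the $\tfrac12\adeg(f)-3$ you claimed. The factor-of-two loss is the price of the encoding: half of Alice's $4t$ bits are spent on complements to simulate the $\oplus w$ structure that the monotone multiplexer cannot provide on its own.
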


As another illustration of Theorem~\ref{thm:main-cc}, we revisit the
quantum communication complexity of symmetric functions. In this
setting Alice has a string $x\in\zoon,$ Bob has a string $y\in\zoon,$
and their objective is to compute $D(\sum x_iy_i)$ for some predicate
$D\colon \zodn\to\moo$ fixed in advance.  This framework encompasses
several familiar functions, such as {\sc disjointness} (determining
if $x$ and $y$ intersect) and {\sc inner product modulo $2$}
(determining if $x$ and $y$ intersect in an odd number of positions).
In a celebrated result, Razborov~\cite{razborov03quantum} established
optimal lower bounds on the quantum communication complexity of
every function of the above form:

\begin{theorem}[Razborov]
Let $D\colon \zodn\to\moo$ be a given predicate.
Put $f(x,y)=D(\sum x_iy_i).$ Then
\begin{align*}
Q^*_{1/3}(f) \geq
\Omega(\sqrt{n\ell_0(D)}+\ell_1(D)),
\end{align*}
where $\ell_0(D)\in\{0,1,\dots,\lfloor n/2\rfloor\}$ and
$\ell_1(D)\in\{0,1,\dots,\lceil n/2\rceil\}$
are the smallest integers such that $D$ is constant in the range
$[\ell_0(D),n-\ell_1(D)].$ 
\label{thm:razborov03quantum}
\end{theorem}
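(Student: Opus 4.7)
The plan is to realize the communication matrix of $f$ as containing an $(n,t,g)$-pattern matrix as a submatrix, and then combine Theorem~\ref{thm:main-cc} with Paturi's classical theorem on the approximate degree of symmetric Boolean functions.

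First I would fix a block-size parameter $t\leq n/2$ and restrict Bob's input $y\in\zoon$ to the indicator vectors $y=\mathbf{1}_V$ of sets $V\subset\{1,\dots,n\}$ with exactly one element from each block of a partition of $\{1,\dots,n\}$ into $t$ equal-size blocks. For such $y$, $\sum_i x_i y_i = \sum_{i\in V} x_i$, so the submatrix of $[D(\sum_i x_i y_i)]_{x,y}$ indexed by these $y$'s is precisely the $(n,t,g)$-pattern matrix with base function $g\colon\zoo^t\to\moo$ defined by $g(z)=D(|z|)$. Since $Q^*_{1/3}$ can only decrease under passing to a submatrix, Theorem~\ref{thm:main-cc} yields
\[
Q^*_{1/3}(f)\;\geq\;\tfrac{1}{4}\,\deg_{\epsilon}(g)\,\log_2(n/t)-O(1)
\]
for any convenient constant $\epsilon\in(2/3,1)$.

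Second, I would invoke Paturi's theorem to lower-bound $\deg_{\epsilon}(g)$. Provided $\ell_0(D)\leq t\leq n-\ell_1(D)$, the symmetric base function $g$ is non-constant at Hamming weights just below $\ell_0(D)$ (inherited from $D$, since $|z|$ takes every value in $\{0,1,\dots,t\}$) while remaining constant on its upper portion. Paturi's formula then gives $\deg_{\epsilon}(g)=\Omega(\sqrt{t\,\ell_0(D)})$, and hence
\[
Q^*_{1/3}(f)\;=\;\Omega\bigl(\sqrt{t\,\ell_0(D)}\,\log(n/t)\bigr).
\]

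Third, I would optimize $t$ to extract the $\sqrt{n\,\ell_0(D)}$ term. Since $t\mapsto\sqrt{t}\log(n/t)$ is maximized at $t=\Theta(n)$, choosing $t$ to be a constant fraction of $n$ lying in the feasibility window $[\ell_0(D),\min(n-\ell_1(D),n/2)]$ delivers $Q^*_{1/3}(f)=\Omega(\sqrt{n\,\ell_0(D)})$. In the extreme regime where $\ell_0(D)$ itself is $\Theta(n)$, one instead takes $t$ near the boundary of the feasibility window, still recovering the claimed bound up to constants.

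Finally, for the additive $\ell_1(D)$ term I would use a separate, simpler reduction: fix $x_i=y_i=1$ for all $i>\ell_1(D)$, which restricts the problem to computing $D'\circ\langle\cdot,\cdot\rangle$ on $\ell_1(D)$ bits, where $D'(k)=D(n-\ell_1(D)+k)$ is a non-constant symmetric predicate because $D'(0)\neq D'(1)$ by definition of $\ell_1(D)$. A direct argument on this induced subproblem yields the required $\Omega(\ell_1(D))$ lower bound, which is then added to the bound from the previous step.

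The main obstacle is the parameter optimization in step three: the unavoidable $\log(n/t)$ factor forces $t$ to be a constant fraction of $n$, so one must verify that a suitable $t$ can always be placed inside the feasibility window $[\ell_0(D),\min(n-\ell_1(D),n/2)]$ across all regimes of $\ell_0(D)$ and $\ell_1(D)$. A secondary bookkeeping issue is that Theorem~\ref{thm:main-cc} with $\delta=1/3$ demands an approximation error $\epsilon>2/3$, so one has to verify that Paturi's asymptotic degree bound carries over at that error level; fortunately this is known to hold for any constant $\epsilon<1$.
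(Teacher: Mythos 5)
Your embedding of a pattern matrix into $[D(\sum x_iy_i)]_{x,y}$ is incorrect. Restricting Bob to indicator vectors $\1_V$ with $V\in\VV(n,t)$ produces only the $w=0^t$ slice of the $(n,t,g)$-pattern matrix: the full pattern matrix has columns indexed by $(V,w)\in\VV(n,t)\times\zoo^t$, and the XOR-with-$w$ part is essential to the spectral computation (Theorem~\ref{thm:pattern-spectrum}) that underlies Theorem~\ref{thm:main-cc}. Since a lower bound for a matrix does not transfer down to a submatrix, you cannot apply Theorem~\ref{thm:main-cc} to your $w=0^t$ slice. To realize the genuine pattern matrix inside $[D(|x\wedge y|)]$, Alice must be able to supply both $x_i$ and $\neg x_i$ so that Bob's column choice can implement the $\oplus w$ part; this costs a factor of roughly $2$ in the ambient dimension and forces the base function to live on about $n/4$ variables, which is exactly what the paper does (taking the $(2\lfloor n/4\rfloor,\lfloor n/4\rfloor,f)$-pattern matrix with $f$ on $\lfloor n/4\rfloor$ bits). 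This in turn means you cannot freely tune $t$ up to $n/2$, which undercuts your ``optimize $t$'' step; the paper simply fixes the block size at $2$.

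Your reduction for the $\ell_1(D)$ term is also flawed. Fixing $x_i=y_i=1$ for all $i>\ell_1(D)$ leaves an inner-product problem on $\ell_1(D)$ coordinates where the shifted predicate $D'$ has $D'(0)\ne D'(1)$, i.e., its jump is at position $1$; the $\sqrt{n\ell}$-type bound then gives only $\Omega(\sqrt{\ell_1(D)})$, not $\Omega(\ell_1(D))$. The paper instead shifts by a carefully chosen amount $k=\ell-\lfloor\frac{\alpha}{1-\alpha}(n-\ell)\rfloor$ (with $\alpha=\oneeighth$) so that \emph{both} the residual problem size $n-k$ and the jump location $\ell-k$ are $\Theta(n-\ell)$, whence $\sqrt{(n-k)(\ell-k)}=\Theta(n-\ell)=\Theta(\ell_1(D))$. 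The same shifting argument is what rigorously handles the regime $\ell_0(D)>n/8$, where your base function $g$ on $\Theta(n)$ bits could become constant and Paturi's theorem gives nothing; your parenthetical ``take $t$ near the boundary'' does not actually resolve this.
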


Using Theorem~\ref{thm:main-cc}, we give a new and simple proof of
Razborov's result. No alternate proof was available prior to this
work, despite the fact that this problem has drawn the attention
of various researchers%
~\cite{ambainis03sampling,
buhrman-dewolf01polynomials, 
klauck01interaction, 
klauck01quantum, 
hoyer-dewolf02disjointness,
linial07factorization-stoc}.  
Moreover, the next-best lower bounds for general predicates were
nowhere close to Theorem~\ref{thm:razborov03quantum}.  To illustrate,
consider the disjointness predicate  $D,$ given by $D(t)=1\Leftrightarrow
t=0.$ Theorem~\ref{thm:razborov03quantum} shows that it has
communication complexity $\Omega(\sqrt n),$ while the next-best
lower bound~\cite{ambainis03sampling, buhrman-dewolf01polynomials}
was only $\Omega(\log n).$

\paragraph{Approximate rank and trace norm}
We now describe some matrix-analytic consequences of our work.
The $\epsilon$-approximate rank of a matrix $F\in\moo^{m\times n},$
denoted $\rk_\epsilon F,$ is the least rank of a real matrix $A$
such that $|F_{ij}-A_{ij}|\leq\epsilon$ for all $i,j.$ This natural
analytic quantity arose in the study of quantum
communication~\cite{yao93quantum, buhrman-dewolf01polynomials,
razborov03quantum} and has since found applications to learning
theory.  In particular, Klivans and Sherstov~\cite{colt07rankeps}
proved that concept classes (i.e., sign matrices) with high approximate
rank are beyond the scope of all known techniques for efficient
learning, in Kearns' well-studied \emph{agnostic
model}~\cite{kearns-et-al94agnostic}.  Exponential lower bounds
were derived in~\cite{colt07rankeps} on the approximate rank of
disjunctions, majority functions, and decision lists, with the corresponding
implications for agnostic learning.  We broadly generalize these
results on approximate rank to \emph{any} functions with high
approximate degree or high threshold weight:

\begin{theorem}[approximate rank]
Let $F$ be the $(n,t,f)$-pattern matrix, 
where $f\colon \zoo^t\to\moo$ is given.
Then for every $\epsilon\in[0,1)$ and every $\delta\in[0,\epsilon],$
\begin{align*}
\rk_\delta F &\geq 
\left(\frac{\epsilon - \delta}{1 + \delta}\right)^2 
\left(\frac nt\right)^{\degeps(f)}.
\label{eqn:main-rank}
\intertext{In addition, for every $\gamma\in(0,1)$ and every integer
$d\geq1,$} 
\rk_{1-\gamma} F &\geq 
\left(\frac{\gamma}{2-\gamma}\right)^2
\min\left\{\left(\frac nt\right)^d,
\frac{W(f,d-1)}{2t}
         \right\}.
\end{align*}
  \label{thm:main-approx-rank}
\end{theorem}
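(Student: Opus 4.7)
The plan is to derive both bounds from a single spectral obstacle to low approximate rank. If $A$ is any matrix approximating $F$ to within $\delta$ entrywise, then combining the singular-value bound $\|A\|_{\mathrm{tr}}\leq\sqrt{\rk A}\cdot\|A\|\F$ with $\|A\|\F\leq(1+\delta)\|F\|\F$ and H\"older's inequality yields, for any real matrix $\Psi$,
\[ |\langle F,\Psi\rangle| \;\leq\; \sqrt{\rk A}\,(1+\delta)\,\|F\|\F\,\|\Psi\| \;+\; \delta\,\|\Psi\|_1, \]
where $\|\cdot\|$ is the spectral norm and $\|F\|\F$ is the square root of the number of entries (since $F$ has $\pm 1$ entries). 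Rearranged, this yields a lower bound on $\rk A$ whenever a dual witness $\Psi$ makes $\langle F,\Psi\rangle$ large relative to $\|\Psi\|_1$ and $\|\Psi\|$; the whole proof reduces to producing such a witness in each case.

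For the first inequality I would invoke LP duality for uniform approximation: $\degeps(f)\geq d$ is equivalent to the existence of a function $\psi\colon\zoo^t\to\mathbb R$ with $\|\psi\|_1=1$, $\langle\psi,f\rangle>\epsilon$, and $\hat\psi(S)=0$ for every $|S|<d$. I then lift $\psi$ to a matrix $\Psi$ on the same index set as $F$ by placing a suitably normalized copy of $\psi(x|_V)$ at position $(x,V)$. A block-Fourier computation on this lift, normalized so that $\|\Psi\|_1=1$, gives $\langle F,\Psi\rangle=\langle\psi,f\rangle>\epsilon$ and $\|\Psi\|\leq(t/n)^{d/2}/\|F\|\F$, the latter using precisely the fact that $\psi$ annihilates every character below degree $d$. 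Substituting into the displayed spectral inequality and squaring yields $\rk_\delta F\geq((\epsilon-\delta)/(1+\delta))^{2}(n/t)^{d}$ with $d=\degeps(f)$, which is the first claim.

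For the threshold-weight bound I would apply LP duality to the sign-representation program: $W(f,d-1)$ large produces a signed measure $\mu$ on $\zoo^t$ with $\|\mu\|_1=1$, $\langle\mu,f\rangle\geq 1-O(\gamma)$, and low-degree Fourier coefficients bounded in absolute value by $1/W(f,d-1)$ rather than being zero. Lifting $\mu$ by the same pattern-matrix construction, the spectral norm of the resulting $\Psi$ now decomposes into two contributions: the high-degree part contributes $(t/n)^{d/2}/\|F\|\F$ exactly as before, while the surviving low-degree Fourier mass contributes a term on the order of $\sqrt{2t/W(f,d-1)}\,/\|F\|\F$, where the factor $2t$ tracks the number of low-degree characters that survive the blocking structure of a pattern matrix. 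Plugging the worse of these two estimates into the same spectral inequality produces exactly the minimum appearing in the theorem.

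The hard part is the spectral-norm estimate on the lifted matrix $\Psi$: one needs a clean block-Fourier expansion of pattern matrices showing that each Fourier character indexed by a set $S\subseteq\{1,\dots,t\}$ contributes operator weight proportional to $(t/n)^{|S|/2}$. Once that decomposition is in hand, the annihilation (respectively, smallness) of the low-degree coefficients of the dual witness automatically purchases the $(n/t)^{d}$ factor, and the rest is normalization bookkeeping. I expect this Fourier machinery to have already been developed in the paper's proof of Theorem~\ref{thm:main-cc}, so that the proof of Theorem~\ref{thm:main-approx-rank} is essentially a repackaging of the same dual witnesses through the spectral rank bound rather than through a communication lower bound.
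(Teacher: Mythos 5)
Your proposal is correct and takes essentially the same route as the paper: the spectral inequality you write down is precisely what results from unfolding Propositions~\ref{prop:approximate-trace-norm} and~\ref{prop:approx-rank-approx-trace-norm}, and the dual witnesses lifted to the $(n,t,2^{-n}(n/t)^{-t}\psi)$-pattern matrix and bounded via Theorem~\ref{thm:pattern-spectrum} are exactly the ones used in the paper's Theorems~\ref{thm:bounded-error-approx-rank} and~\ref{thm:small-bias-approx-rank}. Two small corrections to your threshold-weight paragraph: the dual witness $\psi=f\mu$ has correlation exactly $1$ with $F$ (the $\gamma$ enters only through $\delta=1-\gamma$, not through the witness), and the $\sqrt{2t/W(f,d-1)}$ bound on the low-degree Fourier coefficients comes directly from Freund's dual characterization of threshold weight (Theorem~\ref{thm:dual-weight} applied to $t$ variables), not from counting how many low-degree characters survive the blocking; the spectral-norm estimate takes a max over $S$, not a sum.
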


We derive analogous results for the \emph{approximate trace norm,}
another matrix-analytic notion that has been studied in complexity
theory.  Theorem~\ref{thm:main-approx-rank} is close to optimal for
a broad range of parameters.  See Section~\ref{sec:approx-rank}
for details.

\paragraph{Discrepancy}
The discrepancy of a function $F\colon X\times Y\to\moo,$ denoted
$\disc(F),$ is a combinatorial measure of the complexity of $F$
(small discrepancy corresponds to high complexity).
This complexity measure plays a central role in the study of
communication.  In particular, it fully characterizes membership
in $\PP^{cc},$ the class of communication problems with efficient
small-bias protocols~\cite{klauck01quantum-journal}. Discrepancy
is also known~\cite{LS08learning-cc} be to equivalent to \emph{margin
complexity}, a key notion in learning theory. Finally, discrepancy
is of interest in circuit complexity~\cite{GHR92,
hajnal93threshold-const-depth, nisan93threshold}.  We are able to
characterize the discrepancy of every pattern matrix in terms of
threshold weight:

\begin{theorem}[discrepancy]
Let $F$ be the $(n,t,f)$-pattern matrix, 
for a given function $f\colon \zoo^t\to\moo.$ Then
\begin{align*}
\disc(F) \leq \min_{d=1,\dots,t}
\max\left\{ \left(\frac {2t}{W(f,d-1)}\right)^{1/2}, 
                             \left(\frac tn\right)^{d/2}
         \right\}.
\end{align*}
\label{thm:main-discrepancy}
\end{theorem}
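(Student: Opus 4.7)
The strategy is the signature approach of the pattern matrix method: translate an analytic property of $f$ (here, the threshold weight $W(f,d-1)$) into a dual witness via LP duality, and then convert this witness into a discrepancy bound using the spectral-norm version of the pattern matrix lemma together with the generalized discrepancy method. It suffices to prove the inequality for each fixed $d\in\{1,\dots,t\}$, so fix such a $d$ and write $W := W(f,d-1)$.

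For the first step, I would relax the integer program defining $W$ to a linear program (minimize $\sum_{|S|<d}|\lambda_S|$ subject to $f(x)\sum_S\lambda_S\chi_S(x)\geq 1$ for every $x\in\{0,1\}^t$) and pass to the LP dual. This yields a probability distribution $\phi$ on $\{0,1\}^t$ satisfying $|\sum_x \phi(x) f(x)\chi_S(x)|\leq 1/W$ for every $|S|\leq d-1$. Setting $\psi := \phi f$, the resulting signed measure has $\|\psi\|_1=1$, is sign-aligned with $f$ pointwise, and has Fourier coefficients bounded by $1/(W 2^t)$ on low-degree characters and by $1/2^t$ in general.

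For the second step I would apply the spectral-norm version of the pattern matrix lemma (established earlier in the paper via an explicit Fourier block-diagonalization of the rows of a pattern matrix) to the auxiliary matrix $M_\psi$. That lemma expresses $\|M_\psi\|^2/(|X|\cdot|Y|)$ in terms of the quantities $\hat\psi(T)^2(t/n)^{|T|}$ across Fourier levels $T$. Splitting the bound at $|T|=d$, using the LP-duality estimate on low-degree coefficients, and invoking the decay $(t/n)^{|T|}\leq(t/n)^d$ at higher degrees, one obtains $\|M_\psi\|^2/(|X|\cdot|Y|) \leq 2^{-2t}\max\{2t/W,\,(t/n)^d\}$; the $2t$ factor arises from aggregating the per-coefficient LP bound across the low-degree levels $|T|\leq d-1$.

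For the final step I would invoke the generalized discrepancy method. Sign alignment of $\psi$ with $f$ gives $\langle F, M_\psi\rangle = \sum_{x,V} |\psi(x|_V)| = |X|\cdot|Y|/2^t$; combined with the standard spectral-norm bound $|\mathbf{1}_R^{\mathsf T} M_\psi \mathbf{1}_C| \leq \|M_\psi\|\sqrt{|R|\cdot|C|}$ on rectangles, this yields the generalized discrepancy inequality $\disc(F)\cdot \langle F, M_\psi\rangle \leq \|M_\psi\|\sqrt{|X|\cdot|Y|}$, which upon substitution delivers $\disc(F)\leq \max\{\sqrt{2t/W},(t/n)^{d/2}\}$. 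The main obstacle will be extracting the correct $2t$ constant in the second step: the pattern matrix lemma affords both max-over-levels and weighted-sum-over-levels bounds on $\|M_\psi\|$, and one must pair the LP-duality bound on the low-degree Fourier coefficients with the right aggregation so that the $\sqrt{2t/W}$ term—rather than something exponential in $t$—emerges on the low-degree side, while the geometric decay $(t/n)^{d/2}$ controls the high-degree side.
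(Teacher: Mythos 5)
Your high-level plan is the right one, and it matches the paper's own route: obtain a dual distribution for $W(f,d-1)$, form the associated pattern matrix $\Psi$, bound $\|\Psi\|$ via the spectral theorem for pattern matrices, and invoke the spectral-norm upper bound on discrepancy (the paper's Proposition~\ref{prop:disc2spectral}). However, there is a genuine gap in your first step, and it propagates into a misstatement of the second.

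You assert that LP duality applied to a relaxation of the threshold-weight program yields a distribution $\phi$ with $\left|\sum_x \phi(x)f(x)\chi_S(x)\right|\le 1/W$ for all $|S|\le d-1$. This is not what duality gives. The quantity $W=W(f,d-1)$ is the \emph{integer} threshold weight; relaxing to real coefficients and dualizing produces a distribution certified against the \emph{real} threshold weight $W^{\mathrm{real}}\le W$, so the bound you actually obtain is $1/W^{\mathrm{real}}\ge 1/W$ — the inequality faces the wrong way, and $1/W$ is not a valid upper bound on the balance of the dual distribution. The correct statement, which the paper records as Theorem~\ref{thm:dual-weight}, inequality~(\ref{eqn:weight-upper}) (Freund), is that there exists a distribution $\mu$ with $\max_{|S|\le d-1}\left|\Exp_{\mu}[f\chi_S]\right|\le (2t/W)^{1/2}$. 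The factor $2t$ and, crucially, the square root come from this nontrivial theorem, not from any aggregation.

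This matters for your second step, where you try to recover the $2t$ factor by ``aggregating the per-coefficient LP bound across the low-degree levels.'' Theorem~\ref{thm:pattern-spectrum} gives the spectral norm of a pattern matrix as a \emph{maximum} of $|\hat\psi(S)|(t/n)^{|S|/2}$ over $S$, scaled by $\sqrt{2^{n+t}(n/t)^t}$; there is no summation over Fourier levels, so there is no place for a factor of $t$ to ``arise from aggregating.'' Once you replace the incorrect $1/W$ estimate with Freund's $(2t/W)^{1/2}$ bound, the max splits cleanly into a low-degree part bounded by $2^{-t}(2t/W)^{1/2}$ and a high-degree part bounded by $2^{-t}(t/n)^{d/2}$, and the remainder of your argument — the identity $\Psi = F\circ P$, $\|\Psi\|_1=1$, and Proposition~\ref{prop:disc2spectral} — goes through exactly as you describe and coincides with the paper's proof.
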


As we show in Section~\ref{sec:disc}, Theorem~\ref{thm:main-discrepancy}
is close to optimal. It is a substantial improvement on the author's
earlier work~\cite{sherstov07ac-majmaj}.

As an application of Theorem~\ref{thm:main-discrepancy}, we revisit
the discrepancy of $\AC^0,$ the class of polynomial-size constant-depth
circuits with AND, OR, NOT gates. In an earlier
work~\cite{sherstov07ac-majmaj}, we obtained the first exponentially
small upper bound on the discrepancy of a function in $\AC^0.$ We
used this result in~\cite{sherstov07ac-majmaj} to prove that depth-$2$
majority circuits for $\AC^0$ require exponential size, solving an
open problem due to Krause and Pudl{\'a}k~\cite{krause94depth2mod}.
Using Theorem~\ref{thm:main-discrepancy}, we are able to considerably
sharpen the bound in~\cite{sherstov07ac-majmaj}. Specifically, we
prove:

\begin{theorem}
Let $f(x,y)=
	\bigvee_{i=1}^m
	\bigwedge_{j=1}^{m^2}
	  (x_{ij}\vee y_{ij}).$ 
Then
\begin{align*}
\disc(f) = \exp\{-\Omega(m)\}. 
\end{align*}
   \label{thm:main-mip}
\end{theorem}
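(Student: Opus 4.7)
The strategy is to exhibit $f$ as containing a pattern matrix submatrix with a Minsky--Papert-type base function and then apply Theorem~\ref{thm:main-discrepancy}.

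First, fix for each row $i\in\{1,\dots,m\}$ an arbitrary pairing of the $m^2$ column indices into $m^2/2$ unordered pairs; this partitions the $m^3$ coordinates $(i,j)$ into $t := m^3/2$ blocks of size $n/t=2$, where $n := m^3$. For each set $V$ picking one coordinate per block, define $y(V)\in\zoo^{m^3}$ by $y(V)_{ij}=0$ if $(i,j)\in V$ and $y(V)_{ij}=1$ otherwise. Using $x\vee 0 = x$ and $x\vee 1 = 1$,
\begin{equation*}
f(x, y(V)) \;=\; \bigvee_{i=1}^m \bigwedge_{(i,j)\in V} x_{ij} \;=\; g(x|_V),
\end{equation*}
where $g\colon\zoo^{m^3/2}\to\moo$ is defined by $g(z) = \bigvee_{i=1}^m\bigwedge_{b=1}^{m^2/2} z_{ib}$. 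Thus the rows of the communication matrix of $f$ indexed by $\{y(V)\}_V$ form exactly the $(m^3,\,m^3/2,\,g)$-pattern matrix $F_g$.

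Discrepancy is monotone under passing to a submatrix: any distribution on $F_g$ extends by zero to a distribution on $f$ whose rectangle-restrictions remain rectangles, and so $\disc(f)\leq\disc(F_g)$. Applying Theorem~\ref{thm:main-discrepancy} to $F_g$,
\begin{equation*}
\disc(F_g) \;\leq\; \min_{d\geq 1} \max\!\left\{\sqrt{\tfrac{m^3}{W(g,\,d-1)}},\;\; 2^{-d/2}\right\}.
\end{equation*}
The function $g$ is a Minsky--Papert variant with bottom fan-in $m^2/2=\Theta(m^2)$; the standard symmetrization (fixing $m-1$ of the block-zero-counts to $1$ and letting the remaining one range over $\{0,\dots,m^2/2\}$) reduces sign-representation of $g$ to signing the indicator of ``$c=0$'' on this interval, which requires degree $\Omega(\sqrt{m^2/2}) = \Omega(m)$. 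Hence $\degthr(g)=\Omega(m)$ and consequently $W(g,d-1)=\infty$ for some $d=\Omega(m)$; at this $d$ the first term vanishes and the bound becomes $\disc(F_g) \leq 2^{-\Omega(m)} = \exp(-\Omega(m))$, yielding the theorem.

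The main obstacle is confirming the threshold-degree lower bound for the variant $g$, which is a minor rerun of the classical Minsky--Papert symmetrization with a slightly different block size. The pattern-matrix embedding itself and the submatrix monotonicity of $\disc$ are entirely routine.
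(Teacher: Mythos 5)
The central step of your argument does not go through as written: you claim that the submatrix $\bigl[g(x|_V)\bigr]_{x,V}$, obtained by taking the columns $\{y(V)\}_V$, is ``exactly the $(m^3,\,m^3/2,\,g)$-pattern matrix.'' But the $(n,t,\phi)$-pattern matrix is defined (Section~\ref{sec:pattern-matrices}) with columns indexed by \emph{pairs} $(V,w)\in\VV(n,t)\times\zoo^t$, with entry $\phi(x|_V\oplus w)$. Your construction produces only the columns with $w=0^t$, which is a submatrix of the pattern matrix, not the pattern matrix itself. The XOR shift $w$ is not cosmetic --- the entire spectral analysis in Theorem~\ref{thm:pattern-spectrum}, and hence the discrepancy bound of Theorem~\ref{thm:main-discrepancy}, hinges on the orthogonality that the sum over $w\in\zoo^t$ provides. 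Since discrepancy only increases when passing to a submatrix, the bound on the full pattern matrix does not transfer to your $w=0$ slice, and the application of Theorem~\ref{thm:main-discrepancy} is unjustified.

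The reason the fix is not automatic is that $x_{ij}\vee y_{ij}$ can only \emph{select} or \emph{discard} a bit of Alice's input (via $y_{ij}=0$ or $y_{ij}=1$); it can never negate one, whereas $x|_V\oplus w$ requires negating the selected bits whenever $w_b=1$. The paper handles this by downsizing: it embeds the $(8d^3,4d^3,\mip_d)$-pattern matrix with $d=\lfloor m/2\rfloor$ (or $d=\Theta(m)$ with a smaller constant in the $m^2$-fan-in version), so that for each coordinate of $\mip_d$ one can devote \emph{four} of Alice's positions --- holding $\xi,\neg\xi,\xi',\neg\xi'$ for the two variables of that block --- and let Bob's $y$ keep exactly the one that realizes $(V_b,w_b)$. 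That costs a constant factor in $d$ but makes the \emph{full} pattern matrix a genuine submatrix of $[f(x,y)]_{x,y}$; then $\degthr(\mip_d)\geq d$ (Minsky--Papert) and equation~(\ref{eqn:disc-upper-thrdeg}) give $\disc(f)\leq 2^{-d/2}=\exp\{-\Omega(m)\}$. Your threshold-degree estimate for the variant $g$ is in fact correct (fix all but $\Theta(m)$ of the ORs and use $\Theta(m^2)$ of each AND's inputs to restrict to a standard Minsky--Papert function), but that part is moot until the embedding is repaired.
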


We defer the new circuit implications and other discussion to 
Sections~\ref{sec:disc} and~\ref{sec:app-discrepancy}.
Independently of the work in~\cite{sherstov07ac-majmaj}, Buhrman
et al.~\cite{buhrman07pp-upp} exhibited another function in $\AC^0$
with exponentially small discrepancy:

\par{\sc Theorem} (Buhrman et al.). {\it
Let $f:\zoon\times\zoon\to\moo$ be given by 
$f(x,y)= \sign\left(1 + \sum_{i=1}^n (-2)^i x_iy_i\right).$
Then 
\begin{align*}
\disc(f) = \exp\{-\Omega(n^{1/3})\}.
\end{align*}
}

Using Theorem~\ref{thm:main-discrepancy}, we give a new and simple
proof of this result.

\subsection{Our techniques}

The setting in which to view our work is the \emph{generalized
discrepancy method,} a straightforward but very useful principle
introduced by Klauck~\cite{klauck01quantum} and reformulated in its
current form by Razborov~\cite{razborov03quantum}. Let $F(x,y)$ be
a Boolean function whose bounded-error communication complexity is
of interest.  The generalized discrepancy method asks for a Boolean
function $H(x,y)$ and a distribution $\mu$ on $(x,y)$-pairs such
that:
\begin{itemize}
\item[(1)] the functions $F$ and $H$ have correlation $\Omega(1)$ 
under $\mu$;  and 
\item[(2)] all low-cost protocols
have negligible advantage in computing $H$ under $\mu.$
\end{itemize}
If such $H$ and $\mu$ indeed exist, it follows that no low-cost
protocol can compute $F$ to high accuracy (otherwise it would be a
good predictor for the hard function $H$ as well). This method
applies broadly to many models of communication, as we discuss in
Section~\ref{sec:discrepancy}.  It generalizes Yao's original
discrepancy method~\cite{ccbook}, in which $H=F.$ The advantage of
the generalized version is that it makes it possible, in theory,
to prove lower bounds for functions such as {\sc disjointness,} to
which the traditional method does not apply.

The hard part, of course, is finding $H$ and $\mu$ with the desired
properties. Except in rather restricted
cases~\cite[Thm.~4]{klauck01quantum}, it was not known how to do
it. As a result, the generalized discrepancy method was of limited
practical use prior to this paper. Here we overcome this difficulty,
obtaining $H$ and $\mu$ for a broad range of problems, namely, the
communication problems of computing $f(x|_V).$

Pattern matrices are a crucial first ingredient of our solution.
We derive an exact, closed-form expression for the singular values
of a pattern matrix and their multiplicities. This spectral information
reduces our search from $H$ and $\mu$ to a much smaller and simpler
object, namely, a function $\psi\colon \zoo^t\to\Re$ with certain
properties.  On the one hand, $\psi$ must be well-correlated with
the base function $f.$ On the other hand, $\psi$ must be orthogonal
to all low-degree polynomials.  We establish the existence of such
$\psi$ by passing to the \emph{linear programming dual} of the
approximate degree of~$f.$ Although the approximate degree and
its dual are classical notions, we are not aware of any previous
use of this duality to prove communication lower bounds.

For the results that feature threshold weight, we combine the above
program with the dual characterization of threshold weight.  To
derive the remaining results on approximate rank, approximate trace
norm, and discrepancy, we apply our main technique along with several
additional matrix-analytic and combinatorial arguments.


\subsection{Recent work on multiparty complexity} 
The method of this paper has recently enabled important progress
in multiparty communication complexity by a number of researchers.
Lee and Shraibman~\cite{lee-shraibman08disjointness} and Chattopadhyay
and Ada~\cite{chatt-ada08disjointness} observed that our method
adapts in a straightforward way to the multiparty model, thereby
obtaining much improved lower bounds on the communication complexity of
{\sc disjointness} for up to $\log\log n$ players.  David and
Pitassi~\cite{pitassi08np-rp} ingeniously combined this line of
work with the probabilistic method, establishing a separation of
the communication classes {\sf NP$^{cc}_k$} and {\sf BPP$^{cc}_k$}
for up to $k=(1-\epsilon)\log n$ players.  Their construction was
derandomized in a follow-up paper by David, Pitassi, and
Viola~\cite{david-pitassi-viola08bpp-np}, resulting in an explicit
separation.  See the survey article~\cite{dual-survey} for a unified
guide to these results, complete with all the key proofs.  A
very recent development is due to Beame and
Huynh-Ngoc~\cite{beame-huyn-ngoc08multiparty-eccc}, who continue
this line of research with improved multiparty lower bounds for
$\AC^0$ functions.

\subsection{Organization}
We start with a thorough review of technical preliminaries in
Section~\ref{sec:prelim}. The two sections that follow are concerned
with the two principal ingredients of our technique, the pattern
matrices and the dual characterization of the approximate degree
and threshold weight.  Section~\ref{sec:pattern-matrix-method}
integrates them into the generalized discrepancy method and establishes
our main result, Theorem~\ref{thm:main-cc}. In
Section~\ref{sec:additional-results}, we prove an additional version
of our main result using threshold weight.  We characterize the
discrepancy of pattern matrices in Section~\ref{sec:disc}.
Approximate rank and approximate trace norm are studied next, in
Section~\ref{sec:approx-rank}. We illustrate our main result in
Section~\ref{sec:razborovs-result} by giving a new proof
of Razborov's quantum lower bounds. As another illustration, we
study the discrepancy of $\AC^0$ in Section~\ref{sec:app-discrepancy}.
We conclude with some remarks on the well-known log-rank conjecture
in Section~\ref{sec:logrank} and a discussion of related work in
Section~\ref{sec:shi-zhu}.

\section{Preliminaries} \label{sec:prelim}

%
%

We view Boolean functions as mappings $X\to\moo$ for a finite set $X,$
where $-1$ and $1$ correspond to ``true'' and ``false,'' respectively.
Typically, the domain will be $X=\zoon$ or $X=\zoon\times\zoon.$ A
\emph{predicate} is a mapping $D\colon \zodn\to\moo.$ The
notation $[n]$ stands for the set $\{1,2,\dots,n\}.$ For a set
$S\subseteq[n],$ its \emph{characteristic vector} $\1_S\in\zoon$ is
defined by
\[ (\1_S)_i=
\begin{cases}
1 & \text{if $i\in S,$} \\
0 & \text{otherwise.}
\end{cases}
\]
For $b\in\zoo,$ we put $\neg b=1-b.$ For $x\in\zoon,$ we define
$|x| = x_1+\cdots+x_n.$ For $x,y\in\zoon,$ the notation $x\wedge
y\in\zoon$ refers as usual to the component-wise conjunction of $x$
and $y.$ Analogously, the string $x\vee y$ stands for 
the component-wise disjunction of $x$ and $y.$
In particular, $|x\wedge y|$ is the number of
positions in which the strings $x$ and $y$ both have a $1.$ Throughout
this manuscript, ``$\log$" refers to the logarithm to base $2.$ As
usual, we denote the base of the natural logarithm by $\e=2.718\dots.$
For any mapping $\phi\colon X\to\Re,$ where $X$ is a finite set,
we adopt the standard notation $\|\phi\|_\infty = \max_{x\in
X}|\phi(x)|.$ We adopt the standard definition of the sign function:
\begin{align*}
\sign t = 
\begin{cases}
   -1 & \text{if $t<0,$}\\
   0 & \text{if $t=0,$}\\
   1 & \text{if $t>0.$}
\end{cases}
\end{align*}

Finally, we recall the Fourier transform over $\Z_2^n.$ Consider the
vector space of functions $\zoon\to\Re,$ equipped with the inner
product
\[\langle f,g\rangle = 2^{-n} \sum_{x\in\zoon}f(x)g(x).\]
For $S\subseteq[n],$ define $\chi_S\colon \zoon\to\moo$ by
$\chi_S(x) =(-1)^{\sum_{i\in S} x_i}.$
Then $\{\chi_S\}_{S\subseteq[n]}$ is an orthonormal basis for the
inner product space in question.  As a result, every function
$f\colon \zoon\to\Re$ has a unique representation of the form
\[f(x)=\sum_{S\subseteq[n]} \hat f(S)\,\chi_S(x),\] where $\hat
f(S)=\langle f,\chi_S\rangle$. The reals $\hat f(S)$ are
called the \emph{Fourier coefficients of $f.$}
The degree of $f,$ denoted $\deg(f),$ is the quantity 
$\max\{|S|:\hat f(S)\ne 0\}.$ The orthonormality
of $\{\chi_S\}$ immediately yields \emph{Parseval's identity}:
\begin{align}
  \sum_{S\subseteq[n]} \hat f(S)^2 
    = \langle f,f\rangle = \Exp_x[f(x)^2]. \label{eqn:parsevals}
\end{align}
The following fact is immediate from the definition of $\hat f(S)$:
\begin{proposition}
Let $f\colon \zoon\to\Re$ be given. Then
\[ \max_{S\subseteq[n]} |\hat f(S)| 
   \leq 2^{-n} \sum_{x\in\zoon} |f(x)|.\]
 \label{prop:fourier-coeff-bound}
\end{proposition}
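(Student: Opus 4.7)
The plan is to simply unfold the definition of the Fourier coefficient and apply the triangle inequality. By the definition of the inner product used in the excerpt, we have
\[
\hat f(S) = \langle f, \chi_S\rangle = 2^{-n}\sum_{x\in\zoon} f(x)\,\chi_S(x).
\]
Taking absolute values and pulling them inside the sum gives
\[
|\hat f(S)| \leq 2^{-n}\sum_{x\in\zoon} |f(x)|\,|\chi_S(x)|.
\]
The key (trivial) observation is that $\chi_S(x) = (-1)^{\sum_{i\in S}x_i}\in\{-1,+1\}$, so $|\chi_S(x)|=1$ for every $x\in\zoon$ and every $S\subseteq[n]$. Substituting this bound yields $|\hat f(S)|\leq 2^{-n}\sum_x |f(x)|$, and since the right-hand side does not depend on $S$, the same bound holds for $\max_S |\hat f(S)|$.

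There is no real obstacle here; the whole content is the triangle inequality together with $|\chi_S|\equiv 1$. The only thing one might pause over is making sure the normalization of $\hat f(S)$ matches the one fixed in the excerpt (namely the $2^{-n}$-weighted inner product rather than an unnormalized sum), which is exactly what produces the factor $2^{-n}$ on the right-hand side of the claimed inequality.
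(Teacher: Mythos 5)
Your proof is correct and is exactly the argument the paper has in mind when it calls the proposition ``immediate from the definition of $\hat f(S)$'': unfold $\hat f(S)=2^{-n}\sum_x f(x)\chi_S(x)$, apply the triangle inequality, and use $|\chi_S(x)|=1$. Nothing further to add.
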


A Boolean function $f\colon\zoon\to\moo$ is called \emph{symmetric}
if $f(x)$ is uniquely determined
by $\sum x_i.$ Equivalently, a Boolean function $f$ is symmetric
if and only if
\[ f(x_1,x_2,\dots,x_n) =
f(x_{\sigma(1)},x_{\sigma(2)},\dots,x_{\sigma(n)}) \]
for all inputs $x\in\zoon$ and all permutations $\sigma\colon[n]\to[n].$
Note that there is a one-to-one correspondence between predicates
and symmetric Boolean functions. Namely, one associates a predicate
$D$ with the symmetric function $f(x)\equiv D(\sum x_i).$

\subsection{Matrix analysis} \label{sec:matanal}
We draw freely on basic notions from matrix analysis. In particular,
we assume familiarity with the singular value decomposition; positive
semidefinite matrices; matrix similarity; matrix trace and its
properties; the Kronecker product and its spectral properties; the
relation between singular values and eigenvalues; and eigenvalue
computation for matrices of simple form. An excellent reference on
the subject is~\cite{matanal}. The review below is limited to
notation and the more substantial results.

The symbol $\Re^{m\times n}$ refers to
the family of all $m\times n$ matrices with real entries.
We specify matrices by their generic entry, e.g., $A=[F(i,j)]_{i,j}.$
In most matrices that arise in this work, the
exact ordering of the columns (and rows) is irrelevant. In such cases
we describe a matrix by the notation $[F(i,j)]_{i\in I,\, j\in
J},$ where $I$ and $J$ are some index sets. We denote the rank of
$A\in\Re^{m\times n}$ by $\rk A.$ We also write
\[ \|A\|_\infty   =    \max_{i,j} \;|A_{ij}|,
\qquad\qquad
   \|A\|_1   =    \sum_{i,j}|A_{ij}|.
\]
We denote the singular values of $A$ by
$\sigma_1(A)\geq\sigma_2(A)\geq\cdots\geq\sigma_{\min\{m,n\}}(A)\geq0.$
Recall that the spectral norm, trace norm, and Frobenius norm of
$A$ are given by
\begin{align*}
\|A\|
\phantom{\F{_\Sigma}}
&= \max_{x\in\Re^n,\; \|x\|=1} \|Ax\| = \sigma_1(A),\\
\|A\|_\Sigma
\phantom{\F}
&= \sum \sigma_i(A),\\
\|A\|\F
\phantom{_\Sigma}
&= \sqrt{\sum A_{ij}^2} = \sqrt{\sum \sigma_i(A)^2}.
\end{align*}
For a square matrix $A\in\Re^{n\times n},$ its trace is given by $\trace
A=\sum A_{ii}.$

Recall that every matrix $A\in\Re^{m\times n}$ has a singular
value decomposition $A=U\Sigma V\tr,$ where $U$ and $V$ are orthogonal
matrices and $\Sigma$ is diagonal with entries
$\sigma_1(A),\sigma_2(A),\dots,\sigma_{\min\{m,n\}}(A).$ For
$A,B\in\Re^{m\times n},$ we write $\langle A, B\rangle =
\sum A_{ij}B_{ij}=\trace(AB\tr).$ 
A~useful consequence of the singular value decomposition is:
\begin{equation}
\langle A,B\rangle \leq \|A\|\;\|B\|_\Sigma
\qquad\qquad (A,B\in\Re^{m\times n}).
\label{eqn:hoffman-wielandt-conseq}
\end{equation}

Following~\cite{razborov03quantum}, we define the
\emph{$\epsilon$-approximate trace norm} of a matrix $F\in\Re^{m\times
n}$ by
\begin{align*}
\|F\|_{\Sigma,\epsilon} = \min\{ \|A\|_\Sigma : \|F-A\|_\infty\leq
\epsilon\}.
\end{align*}
The next proposition is a trivial consequence of
(\ref{eqn:hoffman-wielandt-conseq}).
\begin{proposition}
Let $F\in\Re^{m\times n}$ and $\epsilon\geq0.$ Then
\begin{align*}
\|F\|_{\Sigma,\epsilon} \geq \sup_{\Psi\ne 0} 
\,\frac{\langle F,\Psi\rangle - \epsilon \|\Psi\|_1}
{\|\Psi\|}.
\end{align*}
\label{prop:approximate-trace-norm}
\end{proposition}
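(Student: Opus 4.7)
The plan is to prove the lower bound by exhibiting that every admissible approximant $A$ of $F$ has trace norm at least as large as the stated ratio, and then taking the infimum over $A$. Fix an arbitrary $A\in\Re^{m\times n}$ with $\|F-A\|_\infty\leq\epsilon$, and let $\Psi\ne 0$ be arbitrary.

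First I would bound $\langle A,\Psi\rangle$ from below using a standard Hölder-type inequality. Writing $\langle A,\Psi\rangle = \langle F,\Psi\rangle - \langle F-A,\Psi\rangle$ and estimating the second term entry-wise,
\begin{equation*}
|\langle F-A,\Psi\rangle| \;\leq\; \|F-A\|_\infty\,\|\Psi\|_1 \;\leq\; \epsilon\,\|\Psi\|_1,
\end{equation*}
so that $\langle A,\Psi\rangle \geq \langle F,\Psi\rangle - \epsilon\,\|\Psi\|_1$.

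Next I would bound $\langle A,\Psi\rangle$ from above by invoking inequality~(\ref{eqn:hoffman-wielandt-conseq}) in the form that pairs spectral norm against trace norm. Applying it with the roles of the two matrices switched (or, equivalently, observing that $\langle A,\Psi\rangle = \langle \Psi,A\rangle$), one obtains
\begin{equation*}
\langle A,\Psi\rangle \;\leq\; \|\Psi\|\,\|A\|_\Sigma.
\end{equation*}
Combining the two bounds gives $\|A\|_\Sigma \geq (\langle F,\Psi\rangle - \epsilon\|\Psi\|_1)/\|\Psi\|$, and since this holds for every admissible $A$, taking the infimum over such $A$ yields $\|F\|_{\Sigma,\epsilon}\geq (\langle F,\Psi\rangle-\epsilon\|\Psi\|_1)/\|\Psi\|$. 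Taking the supremum over $\Psi\ne 0$ finishes the argument.

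There is no real obstacle here; the proposition is a one-line consequence of~(\ref{eqn:hoffman-wielandt-conseq}) combined with the entry-wise Hölder bound $|\langle M,N\rangle|\leq \|M\|_\infty\|N\|_1$. The only point deserving a moment of care is the direction in which~(\ref{eqn:hoffman-wielandt-conseq}) is invoked, to ensure the spectral norm falls on $\Psi$ and the trace norm on $A$ (so that we get a lower bound on $\|A\|_\Sigma$ rather than on $\|\Psi\|_\Sigma$).
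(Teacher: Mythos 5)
Your argument is exactly the paper's: bound $\langle A,\Psi\rangle$ from below by $\langle F,\Psi\rangle-\epsilon\|\Psi\|_1$ via the entrywise H\"older estimate, from above by $\|\Psi\|\,\|A\|_\Sigma$ via~(\ref{eqn:hoffman-wielandt-conseq}), and compare. No gaps; the proof matches the paper's in both structure and substance.
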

\begin{proof}
Fix any $\Psi\ne0$ and $A$ such that $\|F-A\|_\infty\leq \epsilon.$
Then $\langle A,\Psi\rangle \leq\|A\|_\Sigma \|\Psi\|$ by
(\ref{eqn:hoffman-wielandt-conseq}).  On the other hand,
$\langle A,\Psi\rangle
   \geq \langle F,\Psi\rangle - \|A-F\|_\infty\|\Psi\|_1 
   \geq \langle F,\Psi\rangle - \epsilon\|\Psi\|_1.
$
Comparing these two estimates of $\langle A,\Psi\rangle$ gives the sought
lower bound on $\|A\|_\Sigma.$
\qquad
\end{proof}

Following~\cite{buhrman-dewolf01polynomials},
we define the \emph{$\epsilon$-approximate rank}
of a matrix $F\in \Re^{m\times n}$ by
\begin{align*}
\rk_\epsilon F = \min\{ \rk A : \|F-A\|_\infty\leq
\epsilon\}.
\end{align*}
The approximate rank and approximate trace norm are related by virtue of
the singular value decomposition, as follows.
\begin{proposition}
Let $F\in\Re^{m\times n}$ and $\epsilon\geq 0$ be given. Then
\begin{align*}
\rk_\epsilon F \geq 
\frac{(\|F\|_{\Sigma,\epsilon})^2}
     {\sum_{i,j} (|F_{ij}|+\epsilon)^2}.
\end{align*}
\label{prop:approx-rank-approx-trace-norm}
\end{proposition}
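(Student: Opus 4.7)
The plan is to take a matrix $A$ that witnesses the $\epsilon$-approximate rank of $F$ and bound its rank from below by a ratio of two norms of $A$ itself, then translate each of those norms into quantities involving $F$. Concretely, fix $A$ with $\|F-A\|_\infty\leq\epsilon$ and $\rk A=\rk_\epsilon F=:r$. The starting identity is the singular value decomposition: writing $\sigma_1(A)\geq\cdots\geq\sigma_r(A)>0$ for the nonzero singular values, Cauchy--Schwarz gives
\begin{equation*}
\|A\|_\Sigma \;=\; \sum_{i=1}^{r}\sigma_i(A)
    \;\leq\; \sqrt{r}\,\sqrt{\sum_{i=1}^{r}\sigma_i(A)^2}
    \;=\; \sqrt{\rk A}\cdot\|A\|\F.
\end{equation*}
Squaring and rearranging yields the basic inequality $\rk A \geq \|A\|_\Sigma^{2}/\|A\|\F^{2}$, which reduces the problem to lower-bounding $\|A\|_\Sigma$ and upper-bounding $\|A\|\F$ in terms of $F$ and $\epsilon$.

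For the numerator, the definition of $\|F\|_{\Sigma,\epsilon}$ as the minimum of $\|A'\|_\Sigma$ over all $A'$ with $\|F-A'\|_\infty\leq\epsilon$ immediately gives $\|A\|_\Sigma\geq\|F\|_{\Sigma,\epsilon}$ since $A$ itself is a competitor in that minimization. For the denominator, the entrywise bound $|A_{ij}|\leq|F_{ij}|+\epsilon$ (a direct consequence of $\|F-A\|_\infty\leq\epsilon$) gives
\begin{equation*}
\|A\|\F^{\,2} \;=\; \sum_{i,j} A_{ij}^{2}
  \;\leq\; \sum_{i,j} (|F_{ij}|+\epsilon)^{2}.
\end{equation*}
Chaining these two estimates through the basic inequality produces exactly the claimed lower bound on $\rk_\epsilon F$.

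I do not anticipate a real obstacle here: the entire argument is a two-line combination of Cauchy--Schwarz on the singular values with the defining properties of $\|\cdot\|_{\Sigma,\epsilon}$ and $\|\cdot\|_\infty$. The only mild subtlety is being careful that the Cauchy--Schwarz step uses the rank $r$ (not $\min\{m,n\}$) as the number of nonzero terms, which is what turns the inequality into a genuine lower bound on the rank rather than a trivial bound involving the matrix dimensions.
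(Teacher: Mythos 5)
Your proof is correct and follows essentially the same approach as the paper: fix $A$ achieving $\rk_\epsilon F$, apply Cauchy--Schwarz to the nonzero singular values to get $\|A\|_\Sigma\leq\sqrt{\rk A}\,\|A\|\F$, lower-bound $\|A\|_\Sigma$ by $\|F\|_{\Sigma,\epsilon}$ by definition, and upper-bound $\|A\|\F$ entrywise via $|A_{ij}|\leq|F_{ij}|+\epsilon$. The paper compresses this into a single chain of inequalities, but the ideas are identical.
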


\begin{annotatedproof}{Proof 
\textup{(adapted from~\cite{colt07rankeps})}}
Fix $A$ with $\|F-A\|_\infty\leq\epsilon.$ Then
\begin{align*}
   \qquad\qquad
  \|F\|_{\Sigma,\epsilon}\leq \|A\|_\Sigma
   \leq \|A\|\F \sqrt{\rk A} 
   \leq \left(\sum_{i,j} (|F_{ij}| + \epsilon)^2\right)^{1/2}\sqrt{\rk A}.
   \qquad\qquad\QED
\end{align*}
\end{annotatedproof}


We will also need a well-known bound on the trace norm of a
matrix product, which we state with a proof for the reader's
convenience.

\begin{proposition}
For all real matrices $A$ and $B$ of compatible dimensions,
\begin{align*}
\|AB\|_\Sigma \leq \|A\|\F \; \|B\|\F.
\end{align*}
\label{prop:bound-on-trace-of-product}
\end{proposition}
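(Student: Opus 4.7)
My plan is to reduce this trace-norm bound to a Cauchy--Schwarz inequality in the Frobenius inner product, via a singular value decomposition of $AB$.

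First I would take a compact SVD $AB = U\Sigma V\tr$, in which $U$ and $V$ have orthonormal columns and $\Sigma$ is diagonal with the singular values of $AB$ on its diagonal. The cyclic invariance of the trace then gives
\begin{align*}
\|AB\|_\Sigma
= \trace(\Sigma)
= \trace(U\tr AB V)
= \langle UV\tr, AB\rangle
= \langle A\tr UV\tr,\, B\rangle,
\end{align*}
so Cauchy--Schwarz applied to the Frobenius inner product yields
$\|AB\|_\Sigma \leq \|A\tr UV\tr\|\F \cdot \|B\|\F$.

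It remains to check that $\|A\tr UV\tr\|\F \leq \|A\|\F$. Since $V$ has orthonormal columns, $V\tr V = I$, so $(UV\tr)(UV\tr)\tr = UU\tr$ is an orthogonal projection. Using the cyclic property of the trace again,
\begin{align*}
\|A\tr UV\tr\|\F^2
= \trace(VU\tr A A\tr UV\tr)
= \trace(AA\tr \cdot UU\tr)
\leq \trace(AA\tr)
= \|A\|\F^2,
\end{align*}
where the inequality uses that $AA\tr$ and $I - UU\tr$ are both positive semidefinite and hence have nonnegative trace inner product.

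No step here is deep; the whole argument is essentially the standard Cauchy--Schwarz proof of a H\"older-type inequality among the trace, spectral, and Frobenius norms. The only point at which I would take some care is the passage $\|A\tr UV\tr\|\F \leq \|A\|\F$, which amounts to saying that post-multiplication by a matrix of spectral norm at most $1$ cannot increase the Frobenius norm; keeping straight the dimensions of $U$ and $V$ in the compact SVD is the only bit of bookkeeping that really needs attention.
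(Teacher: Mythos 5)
Your proof is correct, and it shares the paper's core strategy: a singular value decomposition of $AB$ followed by Cauchy--Schwarz. The bookkeeping after the SVD differs, though, in a way that is worth noting. The paper takes $U,V$ to be full square orthogonal matrices, writes $\|AB\|_\Sigma = \sum_i (u_i\tr A)(Bv_i)$, and applies Cauchy--Schwarz twice, first to each inner product to get $\|A\tr u_i\|\,\|Bv_i\|$ and then to the sum over $i$; orthogonality of $U$ and $V$ then gives $\|U\tr A\|\F = \|A\|\F$ and $\|BV\|\F = \|B\|\F$ for free, keeping $A$ and $B$ on symmetric footing throughout. You instead use a compact SVD and apply Cauchy--Schwarz once in the Frobenius inner product to $\langle A\tr UV\tr, B\rangle$, which pushes all the work onto the auxiliary fact $\|A\tr UV\tr\|\F \leq \|A\|\F$, which you prove via the projector $UU\tr$ and the nonnegativity of the trace of a product of PSD matrices. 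Both routes are sound. The paper's version is slightly tighter in that the orthogonality of square $U,V$ makes the final Frobenius-norm identifications immediate without any PSD argument; yours has the merit of isolating the reusable fact that post-multiplication by a partial isometry cannot increase the Frobenius norm. One tiny point of care: the equality $\trace(U\tr ABV) = \langle UV\tr, AB\rangle$ requires a cyclic shuffle ($\trace(U\tr ABV) = \trace(VU\tr AB) = \trace((UV\tr)\tr AB)$), which is fine but deserves a word given the paper defines $\langle X,Y\rangle = \trace(XY\tr)$.
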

\begin{qedproof}
Write the singular value decomposition $AB=U\Sigma V\tr.$ Let
$u_1,u_2,\dots$ and $v_1,v_2,\dots$ stand for the columns of $U$
and $V,$ respectively.  By definition, $\|AB\|_\Sigma$ is the sum
of the diagonal entries of $\Sigma.$ We have:
\begin{align*}
\|AB\|_\Sigma &= \sum (U\tr ABV)_{ii} = \sum (u_i\tr A)(Bv_i)
 \leq \sum \|A\tr u_i\|\; \|Bv_i \| \\
 &\leq \sqrt{\sum \|A\tr u_i\|^2}\sqrt{\sum \|Bv_i \|^2 }
 = \|U\tr A\|\F \; \|BV\|\F  = \|A\|\F \; \|B\|\F. \qquad\QED
\end{align*}
\end{qedproof}

\subsection{Approximation and sign-representation} \label{sec:approx}
For a function $f\colon\zoon\to\Re,$ we define
\begin{align*}
E(f,d) = \min_p \|f - p\|_\infty,
\end{align*}
where the minimum is over real polynomials of degree up to $d.$ The
\emph{$\epsilon$-approximate degree} of $f,$ denoted $\degeps(f),$
is the least $d$ with $E(f,d)\leq\epsilon.$ In words, the
$\epsilon$-approximate degree of $f$ is the least degree of a
polynomial that approximates $f$ uniformly within $\epsilon.$

For a Boolean function $f\colon\zoon\to\moo,$ the $\epsilon$-approximate
degree is of particular interest for $\epsilon=1/3.$ The choice of
$\epsilon=1/3$ is a convention and can be replaced by any other
constant in $(0,1),$ without affecting $\degeps(f)$ by more than a
multiplicative constant.  Another well-studied notion is the
\emph{threshold degree} $\degthr(f),$ defined for a Boolean function
$f\colon\zoon\to\moo$ as the least degree of a real polynomial $p$ with 
$f(x)\equiv \sign p(x).$ In words, $\degthr(f)$ is the least degree of a
polynomial that represents $f$ in sign.

So far we have considered representations of Boolean
functions by \emph{real} polynomials. Restricting the polynomials
to have \emph{integer} coefficients yields another heavily studied
representation scheme. The main complexity measure here is the sum
of the absolute values of the coefficients.  Specifically, for a
Boolean function $f\colon\zoon\to\moo,$ its \emph{degree-$d$ threshold
weight} $W(f,d)$ is defined to be the minimum $\sum_{|S|\leq d}
|\lambda_S|$ over all integers $\lambda_S$ such that
\begin{align*}
f(x)\equiv \sign\left( \sum_{S\subseteq\{1,\dots,n\}, \, |S|\leq d}
\lambda_S\chi_S(x)\right).
\end{align*}
If no such integers $\lambda_S$ can be found, we put $W(f,d)=\infty.$ It is
straightforward to verify that the following three conditions are
equivalent: $W(f,d)=\infty$; $E(f,d)=1$; $d<\degthr(f).$
In all expressions involving $W(f,d),$ we adopt 
the standard convention that $1/\infty=0$ and $\min\{t,\infty\} =
t$ for any real $t.$

As one might expect, representations of Boolean functions by real and
integer polynomials are closely related. In particular, we have the
following relationship between $E(f,d)$ and $W(f,d).$

\begin{theorem}
Let $f\colon \zoon\to\moo$ be given. Then for $d=0,1,\dots,n,$
\begin{align*}
\frac1{1-E(f,d)} \leq W(f,d) \leq \frac2{1-E(f,d)}
\left\{
{n\choose 0}+{n\choose 1}+\cdots+{n\choose d}\right\}^{3/2},
\end{align*}
with the convention that $1/0=\infty.$
\label{thm:E-vs-W}
\end{theorem}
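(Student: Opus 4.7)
The bound splits naturally into two parts, and the lower half is straightforward. The upper half requires a rounding argument and is where most of the work lies.

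\textbf{Lower bound.} The inequality is vacuous when $W(f,d) = \infty$. Otherwise let integers $\lambda_S$ witness $f(x) = \sign(\sum_{|S|\leq d} \lambda_S \chi_S(x))$ with $\sum |\lambda_S| = W(f,d)$. I would consider the rescaled polynomial $p(x) := W(f,d)^{-1}\sum_{|S|\leq d} \lambda_S \chi_S(x)$, which has degree $\leq d$ and satisfies $|p(x)| \leq 1$ for every $x \in \zoon$. Because $\sum \lambda_S \chi_S(x)$ is a nonzero integer sharing its sign with $f(x)$, we have $f(x)p(x) \geq 1/W(f,d)$. Combined with $|p(x)|\leq 1$, this forces $|f(x)-p(x)| \leq 1 - 1/W(f,d)$ for every $x$, so $E(f,d) \leq 1 - 1/W(f,d)$, which is the claimed lower bound after rearrangement.

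\textbf{Upper bound.} Put $E := E(f,d)$, assume $E < 1$ (else the bound is vacuous), and fix a polynomial $p = \sum_{|S|\leq d} \hat{p}(S)\,\chi_S$ of degree $\leq d$ with $\|f - p\|_\infty \leq E$. Write $N = \binom{n}{0} + \binom{n}{1} + \cdots + \binom{n}{d}$. The plan is to amplify $p$ by a large integer $M$ and round: take $\lambda_S$ to be the nearest integer to $M\hat{p}(S)$, so $|\lambda_S - M\hat{p}(S)| \leq 1/2$. The aggregate rounding error $|\sum (\lambda_S - M\hat{p}(S))\chi_S(x)|$ is at most $N/2$, whereas $f(x)\cdot Mp(x) \geq M(1-E)$ pointwise. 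Choosing $M$ just above $N/(2(1-E))$ makes $\sum \lambda_S \chi_S(x)$ a nonzero integer whose sign agrees with $f(x)$ on every input, producing a valid integer sign-representation.

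To bound its weight I would estimate $\sum |\lambda_S| \leq M \sum |\hat{p}(S)| + N/2$, then apply Cauchy--Schwarz across the $N$ indices with $|S|\leq d$ together with Parseval's identity~(\ref{eqn:parsevals}): $\sum_{|S|\leq d} |\hat{p}(S)| \leq \sqrt{N}\,\|p\|_2 \leq \sqrt{N}\,\|p\|_\infty \leq \sqrt{N}(1+E)$, using $\|p\|_\infty \leq 1 + E$. Substituting $M \approx N/(2(1-E))$ yields a bound of order $N^{3/2}/(1-E)$, with the exact leading constant to be pinned down.

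\textbf{Main obstacle.} The only delicate point is to match the constant $2$ in the advertised upper bound. A direct rounding with $M = \lceil(N/2+1)/(1-E)\rceil$ produces a bound of the form $(N^{3/2}+O(\sqrt{N}+N))/(1-E)$, which is $\leq 2N^{3/2}/(1-E)$ once $N$ is moderately large; the handful of small cases (very small $n$, or $d=0$ with non-constant $f$) must be verified separately, typically by observing that either $f$ is constant (so $W(f,d) = 1$) or $E(f,d) = 1$ (making the claimed bound $\infty$ and hence trivial). Balancing $M$ and carefully absorbing the $N/2$ and $M\sqrt{N}$ lower-order terms into the main $N^{3/2}/(1-E)$ term is the one step that requires care.
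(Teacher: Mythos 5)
Your proposal is correct and follows essentially the same route as the paper: the lower bound by rescaling the integer sign-representation, and the upper bound by rounding $M\hat p(S)$ to the nearest integer, checking the sign via a pointwise error estimate $|f(x)-q(x)/M| \le (1-\delta) + N/(2M) < 1$, and bounding the weight via Cauchy--Schwarz and Parseval. The one place you flagged as requiring care is handled cleanly in the paper by taking $M = 3N/(4\delta)$ where $\delta = 1-E(f,d)$: the weight bound becomes $\tfrac12 N + 2M\sqrt N = \tfrac12 N + \tfrac{3}{2}N\sqrt N/\delta$, and since $\delta \le 1 \le \sqrt N$ always (because $N\ge\binom n0=1$), the term $\tfrac12 N$ is absorbed into $\tfrac12 N\sqrt N/\delta$, giving $2N\sqrt N/\delta$ uniformly with no case analysis on small $n$ or $d$.
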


Since Theorem~\ref{thm:E-vs-W} is not directly used in our derivations,
we defer its proof to Appendix~\ref{sec:E-vs-W}. Similar
statements have been noted earlier by several authors~\cite{KP98threshold,
buhrman07pp-upp}.  We close this section with Paturi's tight
estimate~\cite{paturi92approx} of the approximate degree for each
symmetric Boolean function.

\begin{theorem}[Paturi]
Let $f\colon\zoon\to\moo$ be a given function such that $f(x)\equiv
D(\sum x_i)$ for some predicate $D\colon\zodn\to\moo.$ Then
\begin{align*}
\adeg (f) =\Theta\left(\sqrt{ n\ell_0(f)} + \sqrt{n\ell_1(f)}\right),
\end{align*}
where
$\ell_0(D)\in\{0,1,\dots,\lfloor n/2\rfloor\}$ and
$\ell_1(D)\in\{0,1,\dots,\lceil n/2\rceil\}$
are the smallest integers such that $D$ is constant in the range
$[\ell_0(D),n-\ell_1(D)].$ 
\label{thm:paturi}
\end{theorem}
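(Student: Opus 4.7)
The plan is to reduce the theorem to univariate polynomial approximation on the integer grid $\{0,1,\dots,n\}$ and then invoke the classical Chebyshev extremal theory. Since $f(x)\equiv D(|x|)$ is symmetric, any univariate polynomial $q$ of degree $d$ satisfying $|q(k)-D(k)|\le 1/3$ for every $k\in\{0,1,\dots,n\}$ lifts to the multivariate approximant $q(x_1+\cdots+x_n)$ of the same degree. Conversely, Minsky--Papert symmetrization, i.e., averaging any multivariate $1/3$-approximant of $f$ over all permutations of its variables, produces a symmetric polynomial that depends on $x$ only through $|x|$ and hence corresponds to a univariate polynomial of the same degree with the same grid-approximation property. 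Thus $\adeg(f)=\Theta(d^\ast)$, where $d^\ast$ is the minimum degree of such a univariate $q$, and it suffices to show $d^\ast=\Theta(\sqrt{n\ell_0(D)}+\sqrt{n\ell_1(D)})$.

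For the lower bound, fix $q$ of degree $d=d^\ast$. By the minimality of $\ell_0(D)$ we have $D(\ell_0(D))\ne D(\ell_0(D)-1)$, hence $|q(\ell_0(D))-q(\ell_0(D)-1)|\ge 4/3$, while $|q(k)|\le 4/3$ for every integer $k\in[0,n]$. The case $d\ge c\sqrt{n}$ for an absolute constant $c$ already gives $d=\Omega(\sqrt{n\ell_0(D)})$ since $\ell_0(D)\le n/2$, so we may assume $d$ is smaller. A Chebyshev interpolation estimate of Coppersmith--Rivlin / Ehlich--Zeller type then transfers boundedness from the grid to the interval $[0,n]$ up to a constant, after which the Bernstein inequality
\[
|q'(x)|\;\le\;\frac{d\,\|q\|_{[0,n]}}{\sqrt{x(n-x)}}
\]
applied near $x\approx\ell_0(D)$, combined with the mean value theorem applied to the jump, forces $d=\Omega(\sqrt{n\ell_0(D)})$. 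A symmetric argument near $n-\ell_1(D)$ gives $d=\Omega(\sqrt{n\ell_1(D)})$.

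For the upper bound, set $L=\sqrt{n\ell_0(D)}+\sqrt{n\ell_1(D)}$ and let $c=D(\ell_0(D))$ be the bulk value of $D$ on $[\ell_0(D),n-\ell_1(D)]$. For each exceptional integer $k\in[0,\ell_0(D)-1]\cup[n-\ell_1(D)+1,n]$, I construct a spike polynomial $s_k$ of degree $O(L)$ with $s_k(k)=1$ and $|s_k(j)|\le 1/(3n)$ for every other integer $j\in[0,n]$. The construction composes a Chebyshev polynomial $T_m$ (with $m$ tuned to the distance of $k$ from the bulk) with an affine map sending $[\ell_0(D),n-\ell_1(D)]$ into $[-1,1]$ and $k$ to a point slightly outside $[-1,1]$, exploiting the rapid growth $T_m(1+\delta)\ge \tfrac12(1+\sqrt{2\delta})^m$; a short interpolation factor zeros out the remaining exceptional integers, and one normalizes. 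The parameters can be tuned so that the total degree is $O(L)$. Then $q(t)=c+\sum_k (D(k)-c)\,s_k(t)$ approximates $D$ within $1/3$ on the grid.

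The main obstacle is the Bernstein-type derivative estimate near the grid endpoints: one must simultaneously pass from discrete boundedness on $\{0,\dots,n\}$ to a continuous bound on $[0,n]$ without losing in the degree, and carefully exploit the factor $1/\sqrt{x(n-x)}$, since a jump close to an endpoint is proportionally more expensive in degree than a jump in the interior. This endpoint sensitivity is exactly what produces the asymmetric $\sqrt{n\ell_0(D)}$ and $\sqrt{n\ell_1(D)}$ contributions rather than a uniform $\sqrt{n\cdot n}=n$ bound.
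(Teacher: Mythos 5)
The paper does not prove Paturi's theorem; it states it as a known result and cites Paturi~\cite{paturi92approx}, so there is no in-paper argument to compare against. Evaluating your sketch on its own terms, the symmetrization step and the overall Chebyshev/Bernstein framework are in the right spirit, but the lower-bound half has a genuine gap.

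The problematic line is ``the case $d\ge c\sqrt n$ for an absolute constant $c$ already gives $d=\Omega(\sqrt{n\ell_0(D)})$ since $\ell_0(D)\le n/2$.'' This is false: $\ell_0(D)\le n/2$ only gives the \emph{upper} bound $\sqrt{n\ell_0(D)}\le n/\sqrt2$, so $d\ge c\sqrt n$ does not imply $d=\Omega(\sqrt{n\ell_0(D)})$ unless $\ell_0(D)=O(1)$. For a predicate such as $\parity$, where $\ell_0(D)=\Theta(n)$, the target is $d=\Omega(n)$, and your case split would have declared victory at $d\ge c\sqrt n$. The remaining argument (grid-to-interval transfer followed by Bernstein) only controls $\|q\|_{[0,n]}$ when $d=O(\sqrt n)$; the Ehlich--Zeller/Coppersmith--Rivlin bound degrades like $\exp(\Theta(d^2/n))$, which is useless precisely in the range $c\sqrt n \le d \ll n$ that the dichotomy fails to cover. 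So the regime $\sqrt n\lesssim d\lesssim n$ --- exactly the regime where $\ell_0$ or $\ell_1$ is $\omega(1)$ and the theorem has content --- is not handled. Paturi's actual argument needs a more careful mechanism (e.g., a rescaled Chebyshev change of variable and an oscillation count, or a dual witness), not a transfer estimate that is only constant-loss up to degree $\sqrt n$.

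The upper-bound sketch is closer to a correct plan, but as stated the per-spike error target $|s_k(j)|\le 1/(3n)$ at every other grid point generically forces degree $\Theta(\sqrt{n\log n})$ for a spike near the boundary, overshooting $L$ by a $\sqrt{\log n}$ factor; to stay within $O(L)$ one has to let the spikes have larger pointwise error away from the bulk and argue that the signed sum still stays below $1/3$, or abandon the sum-of-spikes decomposition in favor of a single shifted Chebyshev polynomial that kills the entire exceptional range at once. Neither refinement is in your proposal, so the upper bound as written also leaves a (milder) gap.
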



\subsection{Quantum communication} \label{sec:quantum-model}

This section reviews the quantum model of communication complexity.
We include this review mainly for completeness; our proofs rely solely on
a basic matrix-analytic property of such protocols and on no other
aspect of quantum communication.

There are several equivalent ways to describe a quantum communication
protocol. Our description closely follows
Razborov~\cite{razborov03quantum}.  Let $\AA$ and $\BB$ be complex
finite-dimensional Hilbert spaces.  Let $\CC$ be a Hilbert space
of dimension $2,$ whose orthonormal basis we denote by
$|0\rangle,\;|1\rangle.$  Consider the tensor product
$\AA\otimes\CC\otimes\BB,$ which is itself a Hilbert space with an
inner product inherited from $\AA,$ $\BB,$ and $\CC.$ The \emph{state}
of a quantum system is a unit vector in $\AA\otimes\CC\otimes\BB,$
and conversely any such unit vector corresponds to a distinct quantum
state.  The quantum system starts in a given state and traverses a
sequence of states, each obtained from the previous one via a unitary
transformation chosen according to the protocol.  Formally, a
\emph{quantum communication protocol} is a finite sequence of unitary
transformations
\[U_1\otimes I_\BB,\quad
 I_\AA\otimes U_2,\quad
 U_3\otimes I_\BB,\quad
 I_\AA\otimes U_4,\quad
 \dots,\quad
U_{2k-1}\otimes I_\BB,\quad
I_\AA\otimes U_{2k},\]
where: $I_\AA$ and $I_\BB$ are the identity transformations in $\AA$
and $\BB,$ respectively; $U_1,U_3,\dots,U_{2k-1}$ are unitary
transformations in $\AA\otimes\CC$; and $U_2,U_4,\dots,U_{2k}$ are
unitary transformations in $\CC\otimes\BB.$ The \emph{cost} of the
protocol is the length of this sequence, namely, $2k.$ On Alice's
input $x\in X$ and Bob's input $y\in Y$ (where $X,Y$ are given finite sets),
the computation proceeds as follows.

\begin{enumerate}
\item The quantum system starts out in an initial state $\init(x,y).$
\item Through successive applications of the above unitary
transformations, the system reaches the state
\[
 \final(x,y)   =    (I_\AA\otimes U_{2k})
 (U_{2k-1}\otimes I_\BB)
 \cdots
 (I_\AA\otimes U_2)
 (U_1\otimes I_\BB)\;\init(x,y). \]
\item Let $v$ denote the projection of $\final(x,y)$ onto
$\AA\otimes\Span(|1\rangle) \otimes\BB.$ The output of the protocol
is $1$ with probability $\langle v,v\rangle,$ and $0$ with the
complementary probability
$1-\langle v,v\rangle.$
\end{enumerate}

All that remains is to specify how the initial state
$\init(x,y)\in\AA\otimes\CC\otimes\BB$ is constructed from
$x,y.$ It is here that the model with prior entanglement
differs from the model without prior entanglement.
In the model without prior entanglement, $\AA$  and $\BB$ have
orthonormal bases 
$\{|x,w\rangle: x\in X,\; w\in W \}$ and 
$\{|y,w\rangle: y\in Y,\; w\in W \},$ 
respectively, where $W$ is a finite set corresponding to the private
workspace of each of the parties. The initial state is the pure
state
\[ 
\init(x,y) = |x,0\rangle\,|0\rangle\,|y,0\rangle, 
\]
where $0\in W$ is a certain fixed element.  In the model with prior
entanglement, the spaces $\AA$ and $\BB$ have orthonormal bases
$\{|x,w,e\rangle:x\in X,\;w\in W,\; e\in E\}$ and 
$\{|y,w,e\rangle:y\in Y,\;w\in W,\; e\in E\},$
respectively, where $W$ is as before and $E$ is a finite set
corresponding to the prior entanglement.  The initial state is now 
the entangled state
\[ \init(x,y) =
\frac{1}{\sqrt{|E|}} \sum_{e\in E} 
|x,0,e\rangle\,|0\rangle\,|y,0,e\rangle.
\]
Apart from finite size, no assumptions are made about $W$ or $E.$
In particular, the model with prior entanglement allows for an
unlimited supply of entangled qubits.  This mirrors the unlimited
supply of shared random bits in the classical public-coin randomized
model.

Let $f\colon X\times Y\to\moo$ be a given function. A quantum
protocol $P$ is said to compute $f$ with error $\epsilon$ if
\[
\Prob\left[f(x,y)=(-1)^{P(x,y)}\right]\geq 1-\epsilon 
\]
for all $x,y,$
where the random variable $P(x,y)\in\zoo$ is the output of the
protocol on input $(x,y).$ Let $Q_\epsilon(f)$ denote the least cost of
a quantum protocol without prior entanglement that computes
$f$ with error $\epsilon.$ Define $Q^*_\epsilon(f)$ analogously for
protocols with prior entanglement.  The precise choice of a constant
$0<\epsilon<1/2$ affects $Q_\epsilon(f)$ and $Q^*_\epsilon(f)$
by at most a constant factor, and thus the setting $\epsilon=1/3$
entails no loss of generality.

Let $D\colon \zodn\to\moo$ be a predicate. We associate with
$D$ the function $f\colon \zoon\times\zoon\to\moo$ defined by
$f(x,y)    =    D(\sum x_iy_i).$
We let $Q_\epsilon(D) = Q_\epsilon(f)$ and $Q^*_\epsilon(D) =
Q^*_\epsilon(f).$ More generally, by computing $D$ in the quantum
model we mean computing the associated function $f.$ We write
$R_\epsilon(f)$ for the least cost of a classical public-coin
protocol for $f$ that errs with probability at most $\epsilon$ on
any given input. Another classical model that figures in this paper
is the \emph{deterministic} model. We let $D(f)$ denote the
deterministic communication complexity of $f.$ Throughout this
paper, by the communication complexity of a Boolean matrix
$F=[F_{ij}]_{i\in I,\,j\in J}$ we will mean the communication
complexity of the associated function $f\colon I\times J\to\moo,$
given by $  f(i,j) = F_{ij}. $

\subsection{The generalized discrepancy method} \label{sec:discrepancy}

The generalized discrepancy method is an intuitive and elegant technique for
proving communication lower bounds.  A starting point
in our discussion is the following fact due to 
Linial and Shraibman~\cite[Lem.~10]{linial07factorization-stoc}, with
closely analogous statements established earlier by Yao~\cite{yao93quantum}, 
Kremer~\cite{kremer95thesis}, and Razborov~\cite{razborov03quantum}.

\begin{theorem}
Let $X, Y$ be finite sets.  Let $P$ be a quantum protocol $($with or
without prior entanglement$)$ with cost $C$ qubits and input sets $X$
and $Y.$ Then
\[ \Big[\Exp[P(x,y)]\Big]_{x,y} = AB \]
for some real matrices $A, B$ with $\|A\|\F \leq 2^C\sqrt{|X|}$
\;and\;
$\|B\|\F \leq 2^C\sqrt{|Y|}.$
\label{thm:protocol2matrix}
\end{theorem}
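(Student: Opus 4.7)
The plan is to establish the factorization via a branch decomposition of the final quantum state, following Yao and Kremer.

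First, by induction on the round number $j$, I would show that the amplitude of any computational basis state $|a, c, b\rangle$ in the post-round-$j$ state $|\psi_j(x, y)\rangle$ admits a separable form
$\langle a, c, b \,|\, \psi_j(x, y)\rangle = \sum_{s \in S_j} \phi_s(x, a, c)\, \psi_s(y, b, c),$
where each $\phi_s$ depends only on Alice's input, each $\psi_s$ only on Bob's, and $|S_j|$ grows by at most a factor of $2$ per round (so $|S_{2k}| \leq 2^{2k} \cdot |E| = 2^C \cdot |E|$ in the entangled model; $|E| = 1$ without entanglement). The base case is immediate from the explicit form of $\init(x, y)$. For the inductive step, the unitary $U_{j+1}$ acts as $U_{j+1} \otimes I_\BB$ or $I_\AA \otimes U_{j+1}$; decomposing it along the computational basis of $\CC$ as $\sum_{c, c'} U^{c, c'} \otimes |c\rangle\langle c'|$ shows each old branch spawns at most two new branches (indexed by the new channel value), with only the corresponding party's factor modified, preserving separability.

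Second, the acceptance probability is the squared norm of the projection of $\final(x, y)$ onto $\AA \otimes \Span(|1\rangle) \otimes \BB$. Substituting the branch decomposition at $j = 2k$ with $c = 1$ and expanding the squared modulus yields
$\Exp[P(x, y)] = \sum_{s, s' \in S_{2k}} \tilde{A}_{s, s'}(x)\, \tilde{B}_{s, s'}(y),$
where $\tilde{A}_{s, s'}(x) = \sum_a \phi_s(x, a, 1) \overline{\phi_{s'}(x, a, 1)}$ and analogously for $\tilde{B}$. Since the left-hand side is real, splitting $\tilde{A}, \tilde{B}$ into real and imaginary components (a constant-factor blow-up in inner dimension) yields real matrices $A$ and $B$ with $AB$ equal to the matrix of expectations; the rows of $A$ are indexed by $x$ and the columns of $B$ by $y$, with the shared inner index ranging over pairs $(s, s') \in S_{2k}^2$.

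Third, by Cauchy--Schwarz on the $a$-sum, $\sum_{s, s'} |\tilde{A}_{s, s'}(x)|^2 \leq \big(\sum_{s} \|\phi_s(x, \cdot, 1)\|^2\big)^2$. A careful induction tracking the norms through the protocol---exploiting that unitary evolution preserves total norm and that the $|E|$ factor appearing in the entangled case is compensated by the $1/\sqrt{|E|}$ normalization of the initial state---yields $\sum_s \|\phi_s(x, \cdot, 1)\|^2 \leq 2^C$, hence $\|A\|\F \leq 2^C \sqrt{|X|}$; $B$ is handled symmetrically. The main obstacle is precisely this norm accounting: one must delicately balance the growth of branch counts against the conservation of norms by unitaries, being especially careful in the entangled model where the initial state is itself a sum of $|E|$ separable terms, and in ensuring that the accumulated factor comes out to exactly $2^C$ rather than a larger constant such as $4^C$.
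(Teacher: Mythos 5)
The paper does not prove this theorem; it cites Linial and Shraibman (Lemma 10) with Yao, Kremer, and Razborov as precursors, so there is no internal proof to compare against. Your first two steps (the separable branch expansion of $\langle a,c,b\,|\,\psi_j\rangle$ with $|S_j|$ at most doubling per round, and the resulting expression of $\Exp[P(x,y)]$ as a bilinear form in per-party Gram entries, then realified) are the correct Yao--Kremer skeleton and go through.

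The gap is in step three, and specifically in the entangled model. The claim $\sum_s\|\phi_s(x,\cdot,1)\|^2\leq 2^C$ is false there. In the base case $\init(x,y)=|E|^{-1/2}\sum_e|x,0,e\rangle\,|0\rangle\,|y,0,e\rangle$ there are $|E|$ branches, and however you distribute the scalar $|E|^{-1/2}$ between the two tensor factors $\phi_e=\lambda_e|x,0,e\rangle$ and $\psi_e=\mu_e|y,0,e\rangle$ (with $\lambda_e\mu_e=|E|^{-1/2}$), Cauchy--Schwarz gives $\bigl(\sum_e|\lambda_e|^2\bigr)\bigl(\sum_e|\mu_e|^2\bigr)\geq|E|$, so at least one of $\sum_e\|\phi_e\|^2$ and $\sum_e\|\psi_e\|^2$ is $\geq\sqrt{|E|}$ already at round zero. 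The induction you outline cannot repair this: an Alice move \emph{preserves} $\sum_s\|\phi_s\|^2$ (by unitarity of $U_j$ restricted to $\AA\otimes\CC$) and \emph{doubles} $\sum_s\|\psi_s\|^2$, and symmetrically for a Bob move, so both quantities are nondecreasing. ``Unitary evolution preserves total norm'' controls $\|\psi_j\|=1$, which is a different quantity from $\sum_s\|\phi_s\|^2$, and the $|E|^{-1/2}$ factor does not cancel the $|E|$-fold branching in the way your sketch asserts. Moreover the Cauchy--Schwarz bound $\sum_{s,s'}|\tilde A_{s,s'}|^2\leq\bigl(\sum_s\|\phi_s\|^2\bigr)^2$ is itself very lossy in exactly this regime, since the initial $\psi_e$ are mutually orthogonal and so the Gram matrix starts out with only $|E|$ (not $|E|^2$) nonzero entries; a working proof of the entangled case has to exploit this structure (e.g., by tracking the spectral norm and trace of the operator $\sum_s|\phi_s\rangle\langle\phi_s|$ rather than its trace alone, together with a global rescaling of $A$ against $B$), which is a genuinely different accounting from the one you describe rather than a more careful version of it. The unentangled case ($|E|=1$) of your argument is fine.
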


Theorem~\ref{thm:protocol2matrix} states that the matrix of acceptance
probabilities of every low-cost protocol
$P$ has a nontrivial factorization. This transition from quantum
protocols to matrix factorization is a standard technique and has
been used by various authors in various contexts.

The generalized discrepancy method was
first applied by Klauck~\cite[Thm.~4]{klauck01quantum}
and reformulated more broadly by Razborov~\cite{razborov03quantum}.
The treatment in~\cite{razborov03quantum} is informal.
In what follows, we propose a precise formulation of the
generalized discrepancy method and supply a proof.

\begin{theorem}[generalized discrepancy method]
Let $X,Y$ be finite sets and $f\colon X\times Y\to\moo$ a given function.
Let $\Psi=[\Psi_{xy}]_{x\in X,\,y\in Y}$ be any real matrix with $\|\Psi\|_1=1.$
Then for each $\epsilon>0,$
\[ 4^{Q_\epsilon(f)} \geq 4^{Q^*_\epsilon(f)} \geq
\frac{\langle \Psi, F\rangle - 2\epsilon}{3\,\|\Psi\|\sqrt{|X|\,|Y|}}, \]
where $F=[f(x,y)]_{x\in X,\,y\in Y}.$
\label{thm:discrepancy-method}
\end{theorem}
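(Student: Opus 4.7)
\begin{annotatedproof}{Proof plan}
The inequality $4^{Q_\epsilon(f)} \geq 4^{Q^*_\epsilon(f)}$ is immediate, because any protocol without prior entanglement is a special case of one with prior entanglement (take $|E|=1$). So the real task is to lower-bound $Q^*_\epsilon(f).$ Let $P$ be an optimal quantum protocol with prior entanglement computing $f$ with error $\epsilon$ and cost $C = Q^*_\epsilon(f).$ The plan is to apply Theorem~\ref{thm:protocol2matrix} to the matrix $E = [\Exp[P(x,y)]]_{x,y},$ convert $E$ into a sign-valued approximation of $F,$ and then test this approximation against $\Psi$ via the spectral/trace-norm duality in~(\ref{eqn:hoffman-wielandt-conseq}).

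Concretely, set $\Pi = J - 2E,$ where $J$ is the all-ones $|X|\times|Y|$ matrix. The correctness guarantee $\Pr[f(x,y)=(-1)^{P(x,y)}]\geq 1-\epsilon$ translates into $|F_{xy}-\Pi_{xy}|\leq 2\epsilon$ for every $x,y.$ Hence, using $\|\Psi\|_1=1,$
\begin{align*}
\langle\Psi,F\rangle \;=\; \langle\Psi,\Pi\rangle + \langle\Psi,F-\Pi\rangle
\;\leq\; \langle\Psi,\Pi\rangle + 2\epsilon.
\end{align*}
The remaining step is to upper-bound $\langle\Psi,\Pi\rangle.$ By Theorem~\ref{thm:protocol2matrix}, $E=AB$ with $\|A\|\F\leq 2^C\sqrt{|X|}$ and $\|B\|\F\leq 2^C\sqrt{|Y|}.$ Using Proposition~\ref{prop:bound-on-trace-of-product} together with the fact that $J$ has rank $1$ and Frobenius norm $\sqrt{|X|\,|Y|},$
\begin{align*}
\|\Pi\|_\Sigma \;\leq\; \|J\|_\Sigma + 2\|AB\|_\Sigma
\;\leq\; \sqrt{|X|\,|Y|} + 2\cdot 4^C\sqrt{|X|\,|Y|}
\;\leq\; 3\cdot 4^C\sqrt{|X|\,|Y|},
\end{align*}
since $4^C\geq 1.$ Now inequality~(\ref{eqn:hoffman-wielandt-conseq}) gives
\begin{align*}
\langle\Psi,\Pi\rangle \;\leq\; \|\Psi\|\cdot\|\Pi\|_\Sigma
\;\leq\; 3\cdot 4^C\,\|\Psi\|\sqrt{|X|\,|Y|}.
\end{align*}
Combining the two displays and solving for $4^C$ yields the claimed bound.

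The conceptual work has already been done by Theorem~\ref{thm:protocol2matrix}, which turns the quantum protocol into a low-complexity matrix factorization; the only thing requiring care is the shift $\Pi=J-2AB$ that turns the $[0,1]$-valued acceptance matrix into a sign-approximating matrix while keeping control of the trace norm (hence the constant $3$ rather than $2$). Everything else is a one-line application of H\"older-type duality between the spectral and trace norms.
\end{annotatedproof}
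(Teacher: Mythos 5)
Your proof is correct and follows essentially the same route as the paper's: apply Theorem~\ref{thm:protocol2matrix} to the acceptance-probability matrix, shift by $J$ to obtain a matrix within $2\epsilon$ of $F$, and compare the two estimates of $\langle\Psi,\cdot\rangle$ via the spectral/trace-norm duality~(\ref{eqn:hoffman-wielandt-conseq}) and Proposition~\ref{prop:bound-on-trace-of-product}. The only differences are notational (you write $\Pi=J-2E$ where the paper writes $J-2\Pi$ with $\Pi$ the acceptance matrix) and that you make explicit the simplification $2\cdot 4^C+1\leq 3\cdot 4^C$ that the paper performs implicitly in the final comparison.
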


\begin{proof}
Let $P$ be a quantum protocol with prior entanglement that computes
$f$ with error $\epsilon$ and cost $C.$ Put
\[ \Pi    =    \Big[ \Exp[P(x,y)] \Big]_{x\in X,\,y\in Y}. \]
Then  we can write $F = (J-2\Pi) + 2E,$ where $J$ is the all-ones
matrix and $E$ is some matrix with $\|E\|_\infty \leq \epsilon.$
As a result,
\begin{align}
\langle \Psi,J-2\Pi\rangle 
&=    \langle \Psi, F\rangle - 2\,\langle \Psi,E\rangle \nonumber\\
&\geq \langle \Psi, F\rangle - 2\epsilon\,\|\Psi\|_1 \nonumber\\
&=  \langle \Psi, F\rangle - 2\epsilon.
\label{eqn:lower-inner-product}
\end{align}
	On the other hand, Theorem~\ref{thm:protocol2matrix}
	guarantees the existence of matrices $A$ and $B$ with $AB=\Pi$
	and $\|A\|\F\,\|B\|\F\leq 4^C\sqrt{|X|\,|Y|}.$ Therefore,
\begin{align}
\langle \Psi,J-2\Pi\rangle 
  &\leq \|\Psi\|\; \|J-2\Pi\|_\Sigma  
     &&\text{by (\ref{eqn:hoffman-wielandt-conseq})} \nonumber\\
  &\leq \|\Psi\|\; \left(\!\sqrt{|X|\,|Y|} + 2\,\|\Pi\|_\Sigma \right) 
     &&\text{since $\|J\|_\Sigma=\sqrt{|X|\,|Y|}$} \nonumber\\
  &\leq \|\Psi\|\; \left(\!\sqrt{|X|\,|Y|} +  2\,\|A\|\F\;\|B\|\F \right)   
     &&\text{by Prop.~\ref{prop:bound-on-trace-of-product}} 
	 \nonumber \\
  &\leq \|\Psi\|\; \left(2\cdot 4^{C}+1\right)\sqrt{|X|\;|Y|}.
\label{eqn:upper-inner-product}
\end{align}
The theorem follows by comparing (\ref{eqn:lower-inner-product}) and
(\ref{eqn:upper-inner-product}). \qquad
\end{proof}

\begin{remark}{\rm
Theorem~\ref{thm:discrepancy-method} is not to be
confused with Razborov's \emph{multidimensional} technique, also
found in~\cite{razborov03quantum}, which we will have no
occasion to use or describe. 
}
\end{remark}


We will now abstract away the particulars of 
Theorem~\ref{thm:discrepancy-method} 
and articulate the fundamental mathematical technique in
question.
This will clarify the generalized discrepancy method and show
that it is simply an extension of Yao's original 
discrepancy method~\cite[\S3.5]{ccbook}.  
Let $f\colon X\times Y\to\moo$ be a given function
whose communication complexity we wish to estimate. The underlying
communication model is \emph{irrelevant} at this point.  Suppose we can
find a function $h\colon X\times Y\to\moo$ and a distribution $\mu$ on
$X\times Y$ that satisfy the following two properties.
\begin{enumerate}
	\item \emph{Correlation.}
	The functions $f$ and $h$ are well correlated under $\mu$:
	\begin{equation}
	\Exp_{(x,y)\sim\mu}\left[f(x,y)h(x,y)\right] 
			\geq \epsilon,
	\label{eqn:correlation-f-g}
	\end{equation}
	where $\epsilon>0$ is a given constant.

	\item \emph{Hardness.}
	No low-cost protocol $P$ in the given model of communication
	can compute $h$ to a substantial advantage under $\mu.$
	Formally, if $P\colon X\times Y\to\zoo$ 
	is a protocol in the given model with cost
	$C$ bits, then
	\begin{equation}
	\Exp_{(x,y)\sim\mu}\left[h(x,y)\Exp\left[(-1)^{P(x,y)}\right]\right]
	\leq 2^{O(C)}\gamma,
	\label{eqn:hardness-under-mu}
	\end{equation}
	where $\gamma=o(1).$ The inner expectation in
	(\ref{eqn:hardness-under-mu}) is over the internal operation
	of the protocol on the fixed input $(x,y).$
\end{enumerate}

If the above two conditions hold, we claim that any protocol in the
given model that computes $f$ with error at most $\epsilon/3$ on
each input must have cost $\Omega(\log\{\epsilon/\gamma\}).$ 
Indeed, let $P$ be a protocol with
$\Prob[P(x,y)\ne f(x,y)]\leq \epsilon/3$ for all $x,y.$ Then standard
manipulations reveal:
\begin{align*}
   \Exp_{\mu}
       \left[h(x,y)\Exp\left[(-1)^{P(x,y)}\right]\right]
 \geq \Exp_{\mu}\left[f(x,y)h(x,y)\right] - 2\cdot
        \frac{\epsilon}{3}
  \geq 
         \frac{\epsilon}3,
\end{align*}
where the last step uses (\ref{eqn:correlation-f-g}).
In view of (\ref{eqn:hardness-under-mu}), this shows that $P$ must
have cost $\Omega(\log \{\epsilon/\gamma\}).$

We attach the term \emph{generalized discrepancy method} to this
abstract framework.  Readers with background in communication
complexity will note that the original discrepancy method of
Yao~\cite[\S3.5]{ccbook} corresponds to the case when $f=h$ and the
communication takes place in the two-party randomized model.

The purpose of our abstract discussion was to expose the
fundamental mathematical technique in question, which is 
independent of the communication model.
Indeed, the communication model enters the picture only in the proof
of~(\ref{eqn:hardness-under-mu}). It is here that the analysis must
exploit the particularities of the model. To place an upper bound
on the advantage under $\mu$ in the quantum model with entanglement,
as we see from~(\ref{eqn:upper-inner-product}), 
one considers the quantity $\|\Psi\|\sqrt{|X|\,|Y|},$ where
$\Psi=[h(x,y)\mu(x,y)]_{x,y}.$ In the classical randomized model, the
quantity to estimate happens to be
\[ 
\max_{\rule{0mm}{4mm}\substack{S\subseteq X,\\T\subseteq Y}}
\quad
	\left| \sum_{x\in S} \sum_{y\in T} \mu(x,y) h(x,y)\right|
	, \]
which is known as the \emph{discrepancy} of $h$ under $\mu.$


\section{Duals of approximation and sign-representation} \label{sec:berr-morth}
Crucial to our work are the dual characterizations of the uniform approximation
and sign-representation of Boolean functions by real polynomials.
As a starting point, we recall a classical result from approximation
theory due to Ioffe and Tikhomirov~\cite{ioffe-tikhomirov68duality}
on the duality of norms. A more recent treatment is available in 
the textbook of DeVore and Lorentz~\cite{devore-lorentz93approx-book},
p.~61,~Thm.~1.3. We provide a short and elementary proof of this
result in Euclidean space, which will suffice for our purposes. 
We let $\Re^X$ stand for the linear space of real
functions on the set $X.$

\begin{theorem}[Ioffe and Tikhomirov]
Let $X$ be a finite set. Fix $\Phi \subseteq\Re^X$ and a function
$f\colon X\to\Re.$ Then
\begin{align}
 \min_{\phi\in\Span(\Phi)} \|f - \phi\|_\infty = 
  \max_\psi \left\{ \sum_{x\in X} f(x)\psi(x) \right\},
\label{eqn:berr-morth}
\end{align}
where the maximum is over all functions $\psi\colon X\to\Re$ such that
\begin{align*}
\sum_{x\in X} |\psi(x)| \leq 1
\end{align*}
and, for each $\phi\in\Phi,$
\begin{align*}
\sum_{x\in X}\phi(x)\psi(x)=0.
\end{align*}
\label{thm:berr-morth}
\end{theorem}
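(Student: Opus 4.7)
The plan is to obtain (\ref{eqn:berr-morth}) as an instance of finite-dimensional linear-programming duality, where the easy ``$\geq$'' direction comes from the standard duality between the $\ell_\infty$ and $\ell_1$ norms on $\Re^X$ and the reverse ``$\leq$'' direction from a convex-separation argument. First, for weak duality, I would fix any $\psi$ feasible on the right (so $\sum_x|\psi(x)|\leq 1$ and $\sum_x\phi(x)\psi(x)=0$ for every $\phi\in\Phi$) together with any $\phi\in\Span(\Phi)$, and write
\[
\sum_{x\in X}f(x)\psi(x) \;=\; \sum_{x\in X}(f(x)-\phi(x))\psi(x) \;\leq\; \|f-\phi\|_\infty\,\|\psi\|_1 \;\leq\; \|f-\phi\|_\infty,
\]
where the first equality uses orthogonality of $\psi$ to $\Span(\Phi)$. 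Taking the infimum over $\phi$ and the supremum over $\psi$ yields that the right-hand side of (\ref{eqn:berr-morth}) is at most the left-hand side.

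For strong duality, let $\epsilon^\star$ denote the left-hand side of (\ref{eqn:berr-morth}), and consider in $\Re^X$ the subspace $S=\Span(\Phi)$ together with the open ball $B=\{g\in\Re^X:\|f-g\|_\infty<\epsilon^\star\}$. These two convex sets are disjoint, since any element of $S\cap B$ would contradict the definition of $\epsilon^\star$ as a minimum. The standard finite-dimensional separating-hyperplane theorem then supplies a nonzero $\psi\in\Re^X$ and a constant $c$ with $\langle\psi,\phi\rangle\leq c\leq\langle\psi,g\rangle$ for all $\phi\in S$ and $g\in B$. Because $S$ is a linear subspace, the first inequality forces $\langle\psi,\phi\rangle=0$ for every $\phi\in S$ (and in particular for every $\phi\in\Phi$) and $c=0$. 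Evaluating the second inequality at $g=f+v$ with $\|v\|_\infty<\epsilon^\star$ arbitrary gives $\langle\psi,f\rangle\geq-\langle\psi,v\rangle$, and by letting $v$ approach $-\epsilon^\star\sign(\psi)$ pointwise the right-hand side tends to $\epsilon^\star\|\psi\|_1$. Hence $\langle\psi,f\rangle\geq\epsilon^\star\|\psi\|_1$, and rescaling $\psi$ to unit $\ell_1$-norm produces a feasible $\psi$ with $\sum_xf(x)\psi(x)\geq\epsilon^\star$, completing the identity.

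The step I expect to require the most care is the separation argument: it is essential to separate $S$ from the \emph{open} ball (so that the optimizer on the left is not itself a point of $B$) in order that the resulting estimate $\langle\psi,f\rangle\geq\epsilon^\star\|\psi\|_1$ comes out with the tight constant rather than something weaker. The attainment of the maximum asserted in the statement is not a real difficulty: the feasible set $\{\psi:\|\psi\|_1\leq 1,\;\psi\perp\Span(\Phi)\}$ is a compact subset of the finite-dimensional space $\Re^X,$ and $\psi\mapsto\sum_xf(x)\psi(x)$ is continuous, so the supremum is realized. Everything beyond these two points is routine manipulation of $\ell_\infty/\ell_1$ duality.
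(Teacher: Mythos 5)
Your proof is correct, but it follows a genuinely different route from the paper. The paper writes out an explicit linear program whose optimum is the left-hand side of~(\ref{eqn:berr-morth}) (minimize $\epsilon$ subject to $|f(x)-\sum\alpha_i\phi_i(x)|\leq\epsilon$), derives its LP dual by ``standard manipulations,'' observes both programs are feasible, and invokes LP strong duality as a black box. You instead prove weak duality by hand via the $\ell_\infty/\ell_1$ H\"older pairing, then get the matching direction by separating the subspace $\Span(\Phi)$ from the open $\ell_\infty$-ball of radius $\epsilon^\star$ around $f$ and reading off the dual certificate from the separating functional. Your approach is more self-contained (it does not presuppose LP duality machinery, only the finite-dimensional separating-hyperplane theorem, which is of course what underlies LP duality anyway) and makes the geometric content explicit; the paper's is shorter to state because it outsources the hard step. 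Both are valid, and both rely on finite dimensionality to guarantee attainment of the minimum and maximum.

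Two small points you should handle explicitly for completeness. First, when $\epsilon^\star=0$ the open ball $B$ is empty and the separation argument does not apply; but then $f\in\Span(\Phi)$ (the minimum is attained), every feasible $\psi$ gives $\sum_x f(x)\psi(x)=0$, and $\psi\equiv 0$ achieves this, so the identity holds trivially. Second, your statement that ``$c=0$'' is slightly overclaimed: the separation only yields $\sup_{\phi\in\Span(\Phi)}\langle\psi,\phi\rangle=0\leq c\leq\inf_{g\in B}\langle\psi,g\rangle$, and all you need (and use) is the consequence $\langle\psi,g\rangle\geq 0$ for $g\in B$. Neither point affects the correctness of the argument.
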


\begin{proof}
The theorem holds trivially when $\Span(\Phi)=\{0\}.$ Otherwise,
let $\phi_1,\dots,\phi_k$ be a basis for $\Span(\Phi).$ 
Observe that the left member of (\ref{eqn:berr-morth})
is the optimum of the following linear program in the variables
$\epsilon,\alpha_1,\dots,\alpha_k$:
\begin{align*}
\framebox{\parbox{0.75\textwidth}{
		$\begin{aligned}
		\text{minimize:} \quad & \epsilon \hspace{2cm}\\
		\text{subject to:} \quad &
			   \left| f(x)  -\sum_{i=1}^k\alpha_i\phi_i(x)\right|\leq
				 \epsilon            &&\quad \text{for each } x\in X, \\
			  &\alpha_i\in \Re       &&\quad \text{for each }i,\\
			  &\epsilon\geq 0.\\
		\end{aligned}$
}}
\end{align*}
Standard manipulations reveal the dual:
\begin{align*}
\framebox{\parbox{0.75\textwidth}{
		$\begin{aligned}
		\text{maximize:} \quad & \sum_{x\in X}\psi_x f(x) \\
		\text{subject to:} \quad 
			   &\sum_{x\in X}|\psi_x| \leq 1, \\
			  &\sum_{x\in X}\psi_x\phi_i(x) = 0
					  &&\quad \text{for each }i, \\
			  &\psi_x \in \Re &&\quad\text{for each }x\in X.
		\end{aligned}$
}}
\end{align*}

\noindent
Both programs are clearly feasible and thus have the same finite
optimum.  We have already observed that the optimum of first program
is the left-hand side of (\ref{eqn:berr-morth}). 
Since $\phi_1,\dots,\phi_k$ form a basis
for $\Span(\Phi),$ the optimum of the second program is by definition
the right-hand side of (\ref{eqn:berr-morth}). \qquad
\end{proof}

As a corollary to Theorem~\ref{thm:berr-morth}, we obtain a dual
characterization of the approximate degree.

\newpage

\begin{theorem}[approximate degree]
Fix $\epsilon\geq0.$ Let $f\colon\zoon\to\Re$ be given, $d=
\degeps(f)\geq1.$ Then there is a function $\psi\colon\zoon\to\Re$
such that
\begin{align*}
&\;\,\hat\psi(S)=0 &&  (|S|<d),\\
&\sum_{x\in\zoon}|\psi(x)| =1, \\
&\sum_{x\in\zoon}\psi(x)f(x) > \epsilon.
\end{align*}
\label{thm:dual-approx}
\end{theorem}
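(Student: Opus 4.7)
The plan is to derive this theorem as a direct application of the Ioffe--Tikhomirov duality (Theorem~\ref{thm:berr-morth}), instantiated with the right family $\Phi$.

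First, I would set $\Phi = \{\chi_S : S\subseteq[n],\; |S| < d\}$, so that $\Span(\Phi)$ is precisely the space of real polynomials on $\zoon$ of degree at most $d-1$. By the definition of $d = \degeps(f)$ as the \emph{least} degree achieving uniform error at most $\epsilon$, and the assumption $d\geq 1$, no polynomial in $\Span(\Phi)$ approximates $f$ to within $\epsilon$; that is,
\begin{align*}
\min_{\phi\in\Span(\Phi)} \|f-\phi\|_\infty \;>\; \epsilon.
\end{align*}

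Next, applying Theorem~\ref{thm:berr-morth}, the right-hand side of (\ref{eqn:berr-morth}) also exceeds $\epsilon$, so there exists $\psi\colon\zoon\to\Re$ with $\sum_x|\psi(x)|\leq 1$, $\sum_x \psi(x)\chi_S(x) = 0$ for every $|S|<d$, and $\sum_x f(x)\psi(x) > \epsilon$. The orthogonality condition is exactly $\hat\psi(S) = 0$ for $|S|<d$, recalling that $\hat\psi(S) = 2^{-n}\sum_x \psi(x)\chi_S(x)$. Thus two of the three conclusions hold, and the inner product with $f$ strictly exceeds $\epsilon$.

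Finally, to promote the constraint $\sum_x|\psi(x)|\leq 1$ to equality, I would simply rescale: note $\psi\not\equiv 0$ (since otherwise $\sum f\psi = 0$, contradicting $>\epsilon$), so setting $\psi' = \psi / \sum_x|\psi(x)|$ gives $\sum_x|\psi'(x)| = 1$, preserves the orthogonality $\hat{\psi'}(S)=0$ for $|S|<d$, and only increases the correlation with $f$, yielding $\sum_x f(x)\psi'(x) > \epsilon$ as required.

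There is no real obstacle here; the substantive work has already been done in Theorem~\ref{thm:berr-morth}. The only thing to be careful about is the strict inequality $\sum f\psi > \epsilon$, which follows because $d$ is the \emph{smallest} degree with $E(f,d)\leq\epsilon$, forcing $E(f,d-1)>\epsilon$ strictly, and because both the primal infimum and dual supremum in (\ref{eqn:berr-morth}) are attained (the feasible regions are closed and the primal is coercive, so LP duality yields equality of optima).
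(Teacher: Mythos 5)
Your proof is correct and follows essentially the same route as the paper's: instantiate Theorem~\ref{thm:berr-morth} with $\Phi = \{\chi_S : |S|<d\}$, observe that the minimality of $d=\degeps(f)$ forces the left side of (\ref{eqn:berr-morth}) to exceed $\epsilon$, and take a maximizing $\psi$ for the right side. The paper leaves the normalization $\sum_x|\psi(x)|=1$ implicit (at the optimum, the constraint $\sum|\psi|\leq 1$ must be tight since the objective is positive, so any maximizer already satisfies it); your explicit rescaling step makes the same point.
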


\begin{proof}
Set $X=\zoon$ and $\Phi = \{\chi_S: |S|< d \}\subset\Re^X.$ Since
$\degeps(f)=d,$ we conclude that 
\begin{align*}
\min_{\phi\in \Span(\Phi)}\|f-\phi\|_\infty >\epsilon.
\end{align*}
In view of Theorem~\ref{thm:berr-morth},  we can take $\psi$ to be
any function for which the maximum is achieved in~(\ref{eqn:berr-morth}).
\qquad
\end{proof}

We now state the dual characterization of the threshold degree, which
is better known as Gordan's Transposition Theorem~\cite[\S7.8]{lpbook}.

\begin{theorem}[threshold degree]
Let $f\colon\zoon\to\moo$ be given, $d=\degthr(f).$ Then
there is a distribution $\mu$ over $\zoon$ with
\begin{align*}
  \Exp_{x\sim\mu}[f(x)\chi_S(x)]=0 &&(|S|<d).
\end{align*}
\label{thm:thresh-or-distr}
\end{theorem}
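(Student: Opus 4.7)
The statement is a direct consequence of a theorem of the alternative in linear programming, as the name ``Gordan's Transposition Theorem'' suggests. My plan is to set up the right primal/dual pair and read off the conclusion.

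First I would translate the hypothesis $\degthr(f)\ge d$ into an infeasibility statement. A polynomial $p(x)=\sum_{|S|<d}\lambda_S\chi_S(x)$ sign-represents $f$ if and only if $f(x)p(x)>0$ for every $x\in\zoon$, since $f(x)\in\{-1,+1\}$ forces $p(x)$ to have the same strict sign. Because $\degthr(f)=d$, no such polynomial of degree less than $d$ exists; equivalently, the strict system
\begin{equation*}
  \sum_{|S|<d}\lambda_S\,f(x)\chi_S(x)\;>\;0 \qquad \text{for all }x\in\zoon
\end{equation*}
has no solution in the variables $\{\lambda_S\}_{|S|<d}\in\Re$. (The case $d=0$ makes the statement vacuous, so I assume $d\ge 1$.)

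Next I would apply Gordan's Transposition Theorem to the matrix $A\in\Re^{\zoon\times\{S:|S|<d\}}$ whose entries are $A_{x,S}=f(x)\chi_S(x)$. Gordan's theorem states that exactly one of the following holds: either there exists $\lambda$ with $A\lambda>0$ componentwise, or there exists a nonzero vector $\mu\ge 0$ with $A^{\sf T}\mu=0$. Having ruled out the former, we obtain a nonnegative function $\mu\colon\zoon\to\Re$, not identically zero, such that
\begin{equation*}
  \sum_{x\in\zoon}\mu(x)\,f(x)\chi_S(x)\;=\;0 \qquad (|S|<d).
\end{equation*}
Rescaling by $1/\sum_x\mu(x)$ turns $\mu$ into a probability distribution, and dividing the orthogonality equation by the same constant preserves it. This yields the desired distribution with $\Exp_{x\sim\mu}[f(x)\chi_S(x)]=0$ for every $|S|<d$.

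The only real content of the argument is the passage from sign-representability to the strict linear system, plus invoking the theorem of the alternative; both are routine. The mild subtlety, and the only place where one could slip, is the strictness: one must use that $f$ is $\pm 1$-valued (so $p$ cannot vanish on the support of $f$) in order to conclude that non-sign-representability is the \emph{strict} infeasibility statement to which Gordan applies. Apart from this, there is no real obstacle and the proof should be only a few lines.
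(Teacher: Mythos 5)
Your argument is correct and is precisely the route the paper has in mind: the paper labels this statement ``Gordan's Transposition Theorem'' and defers the derivation to a reference, and your reduction — translating $\degthr(f)\ge d$ into the infeasibility of the strict system $\sum_{|S|<d}\lambda_S\,f(x)\chi_S(x)>0$ and then invoking Gordan's theorem of the alternative to produce a nonzero nonnegative $\mu$ with $A^{\sf T}\mu=0$, followed by normalization — is exactly that derivation. The only nuance you could make slightly more explicit is that ``$f(x)\equiv\sign p(x)$'' in the paper's definition of $\degthr$ already forces $p(x)\ne 0$ everywhere (since $f(x)\in\{-1,1\}$ and $\sign 0=0$), which is what justifies the \emph{strict} inequality $f(x)p(x)>0$ needed for Gordan; you do flag this and it is the right thing to check.
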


See~\cite{sherstov07ac-majmaj} for a derivation of
Theorem~\ref{thm:thresh-or-distr} using linear programming duality.
Alternately, it can be derived as a corollary to
Theorem~\ref{thm:berr-morth}.  We close this section with one final
dual characterization, corresponding to sign-representation by
integer polynomials.

\begin{theorem}[threshold weight]
Fix a function $f\colon\zoon\to\moo$ and an integer 
$d\geq\degthr(f).$ Then for every distribution $\mu$ on $\zoon,$
\begin{align}
\max_{|S|\leq d} \left\lvert \Exp_{x\sim\mu}[f(x)\chi_S(x)]\right\rvert
	&\geq \frac{1}{W(f,d)}.
\label{eqn:weight-lower}
\intertext{Furthermore, there exists a distribution $\mu$ such that}
\max_{|S|\leq d} \left\lvert \Exp_{x\sim\mu}[f(x)\chi_S(x)]\right\rvert
	&\leq \left(\frac{2n}{W(f,d)}\right)^{1/2}.
\label{eqn:weight-upper}
\end{align}
\label{thm:dual-weight}
\end{theorem}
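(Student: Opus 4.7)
The two inequalities are rather different in character and I would prove them separately. For the lower bound~(\ref{eqn:weight-lower}), fix integers $\{\lambda_S\}_{|S|\leq d}$ that realize the threshold weight, so that $\sum_{|S|\leq d}|\lambda_S|=W(f,d)$ and $f(x)\equiv\sign p(x)$ with $p(x)=\sum_S\lambda_S\chi_S(x)$. Since $p$ is integer-valued and never zero (its sign is $f$), $f(x)p(x)=|p(x)|\geq 1$ at every point. Averaging under $\mu$ gives $\sum_S\lambda_S\,\Exp_{x\sim\mu}[f(x)\chi_S(x)]\geq 1$, and the triangle inequality collapses this to
\[
W(f,d)\cdot\max_{|S|\leq d}\bigl|\Exp_{x\sim\mu}[f(x)\chi_S(x)]\bigr|\;\geq\;1,
\]
which is precisely~(\ref{eqn:weight-lower}).

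For the upper bound~(\ref{eqn:weight-upper}), my plan has two stages: first, use LP duality to pass from the minimization over $\mu$ to a maximization over signed measures on the characters; second, round a real sign-representation of $f$ to an integer one. Let $\gamma^\star=\min_\mu\max_{|S|\leq d}|\Exp_{x\sim\mu}[f(x)\chi_S(x)]|$. Linearizing $|\cdot|$ by splitting each constraint into $\pm$ parts, then dualizing the resulting LP exactly as in the proof of Theorem~\ref{thm:berr-morth}, yields
\[
\gamma^\star \;=\; \max\Bigl\{\min_{x\in\zoon} f(x)\sum_{|S|\leq d}\nu_S\chi_S(x)\;:\;\sum_{|S|\leq d}|\nu_S|\leq 1\Bigr\}.
\]
Fix some $\nu$ attaining this maximum, normalized so that $\sum|\nu_S|=1,$ and set $p_\nu=\sum_S\nu_S\chi_S,$ so that $f(x)p_\nu(x)\geq\gamma^\star$ for every $x\in\zoon.$

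I would then convert $\nu$ into an integer sign-representation of $f$ of degree $d$ and total weight $O(n/(\gamma^\star)^2)$ by randomized sampling. Viewing $|\nu_S|$ as a probability distribution on $\{S:|S|\leq d\},$ draw i.i.d.\ samples $S_1,\dots,S_m$ from it with attached signs $\sigma_i=\sign\nu_{S_i},$ and let $\tilde p(x)=\sum_{i=1}^m\sigma_i\chi_{S_i}(x).$ Each summand lies in $\{-1,+1\}$ and has mean $p_\nu(x),$ so Hoeffding's inequality gives
\[
\Pr\bigl[\,|\tilde p(x)-m\,p_\nu(x)|\geq m\gamma^\star\,\bigr]\;\leq\;2\exp\!\bigl(-m(\gamma^\star)^2/2\bigr).
\]
A union bound over the $2^n$ inputs shows that for $m$ a small constant times $n/(\gamma^\star)^2$ there exists a realization with $|\tilde p(x)-m\,p_\nu(x)|<m\gamma^\star$ at every $x,$ and on any such realization $\sign\tilde p(x)=\sign p_\nu(x)=f(x).$ Collecting like terms produces integer coefficients $\lambda_S=\sum_i\sigma_i\1[S_i=S]$ with $\sum|\lambda_S|\leq m,$ whence $W(f,d)\leq m,$ which rearranges to~(\ref{eqn:weight-upper}).

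The main obstacle will be extracting the precise constant ``$2n$'' in~(\ref{eqn:weight-upper}) from the concentration step: the plain Hoeffding bound above yields only $m\gtrsim 2(n+1)(\ln 2)/(\gamma^\star)^2,$ which is a small constant factor away from the claim. Matching the stated constant would need either a sharper tail estimate (for example a Bernstein-type bound that exploits $\Exp Y_i^2\leq 1-(\gamma^\star)^2$) or a light derandomization; the duality step and the rounding scheme itself are both routine once this is in place.
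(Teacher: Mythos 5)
The paper does not prove Theorem~\ref{thm:dual-weight}; it attributes (\ref{eqn:weight-lower}) to Hajnal et al.\ and (\ref{eqn:weight-upper}) to Freund, pointing the reader to Goldmann et al.\ for a unified treatment. So there is no in-paper proof to compare against, and the right question is whether your reconstruction is sound. It is.

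Your lower-bound argument is exactly the standard one and is correct: integer coefficients force $f(x)p(x)=|p(x)|\geq1$ pointwise, averaging under $\mu$ and applying the triangle inequality (together with $\sum|\lambda_S|=W(f,d)$) gives (\ref{eqn:weight-lower}). Your upper-bound argument is also correct, and it is essentially Freund's probabilistic route: the minimax/LP-duality step identifies $\gamma^\star=\min_\mu\max_S|\Exp_\mu[f\chi_S]|$ with $\max_\nu\min_x f(x)p_\nu(x)$ over $\ell_1$-bounded $\nu$ (a clean two-player zero-sum game, so the von~Neumann/LP argument you invoke works verbatim), and the sampling-plus-Hoeffding-plus-union-bound step converts the optimal real sign-representer $p_\nu$ into an integer one of weight $O(n/(\gamma^\star)^2)$. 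Normalizing $\nu$ to $\sum|\nu_S|=1$ is legitimate since, at a positive optimum, scaling $\nu$ up only increases $\min_x f(x)p_\nu(x)$; and $\gamma^\star>0$ because $d\geq\degthr(f)$.

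The one place you are unduly pessimistic is the constant. Your Hoeffding calculation needs $m(\gamma^\star)^2/2>(n+1)\ln2$, i.e.\ $m$ of order $2(n+1)\ln2/(\gamma^\star)^2$, and $2\ln2\approx1.386<2$. Accounting for the ceiling, one gets $W(f,d)\leq m\leq 2(n+1)\ln2/(\gamma^\star)^2+1$, which rearranges to $(\gamma^\star)^2\leq 2n/W(f,d)$ precisely when $(\gamma^\star)^2\leq 2n(1-\ln2)-2\ln2$; since $\gamma^\star\leq1$ always, this holds for all $n\geq4$, and for smaller $n$ the inequality (\ref{eqn:weight-upper}) is vacuous unless $W(f,d)>2n$, which you can handle by hand. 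So no Bernstein-type refinement or derandomization is needed to hit the stated constant $2n$; plain Hoeffding already suffices.
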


Inequalities (\ref{eqn:weight-lower}) and (\ref{eqn:weight-upper})
are originally due to Hajnal et al.~\cite{hajnal93threshold-const-depth} and
Freund~\cite{freund90boosting}, respectively.  For an integrated
treatment of both results, see Goldmann et al.~\cite{GHR92}, Lem.~4
and Thm.~10.

\section{Pattern matrices} \label{sec:pattern-matrices}

We now turn to the second ingredient of our proof, a certain family
of real matrices that we introduce. Our goal
here is to explicitly calculate their singular values. As we shall
see later, this provides a convenient means to generate hard
communication problems. 

Let $t$ and $n$ be positive integers, where $t<n$ and $t\mid n.$ Partition
$[n]$ into $t$ contiguous blocks, each with $n/t$ elements:
\[
[n] = \left\{1,  2,  \dots,   \frac{n}{t}\right\} 
       \cup 
	   \left\{\frac{n}{t}+1,   \dots,  \frac{2n}{t}\right\}
         \cup \cdots \cup
         \left\{\frac{(t-1)n}{t}+1,  \dots,  n\right\}.
\]
Let $\VV(n,t)$ denote the family of subsets $V\subseteq[n]$
that have exactly one element in each of these blocks (in particular,
$|V|=t$).  Clearly, $|\VV(n,t)|=(n/t)^t.$ For a bit string $x\in\zoon$
and a set $V\in\VV(n,t),$ define the \emph{projection of $x$ onto
$V$} by
\[ x|_V  =      (x_{i_1},x_{i_2},\dots,x_{i_t}) \in  \zoo^t, \]
where $i_1<i_2<\cdots< i_t$ are the elements of $V.$ We are ready for a
formal definition of our matrix family.

\begin{definition}[pattern matrix]
{\rm 
For $\phi\colon \zoo^t\to\Re,$ the \emph{$(n,t,\phi)$-pattern matrix} is
the real matrix $A$ given by
\[ A = \Big[\phi(x|_V\oplus
w)\Big]_{x\in\zoon,\,(V,w)\in\VV(n,t)\times\zoo^t}  \;. \]
In words, $A$ is the matrix of size $2^n$~by~$(n/t)^t2^t$ whose rows are
indexed by strings $x\in\zoon,$ whose columns are indexed by pairs
$(V,w)\in\VV(n,t)\times\zoo^t,$ and whose entries are given by
$A_{x,(V,w)}= \phi(x|_V\oplus w).$
}
\end{definition}

The logic behind the term ``pattern matrix" is as follows: a mosaic
arises from repetitions of a pattern in the same way that $A$ arises
from applications of $\phi$ to various subsets of the variables.
Our approach to analyzing the singular values of a pattern matrix
$A$ will be to represent it as the sum of simpler matrices and
analyze them instead.  For this to work, we should be able to
reconstruct the singular values of $A$ from those of the simpler
matrices. Just when this can be done is the subject of the following
lemma.

\begin{lemma}[singular values of a matrix sum]
Let $A,B$ be real matrices with $AB\tr=0$ and $A\tr B=0.$
Then the nonzero singular values of $A+B,$ counting
multiplicities, are $\sigma_1(A),\dots,\sigma_{\rk
A}(A),\sigma_1(B),\dots,\sigma_{\rk B}(B).$
\label{lem:disjoint-spectra}
\end{lemma}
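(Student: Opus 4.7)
The plan is to pass to $(A+B)(A+B)\tr$ and use the orthogonality hypotheses to diagonalize it blockwise, thereby reducing the claim about singular values of $A+B$ to a statement about eigenvalues of $AA\tr$ and $BB\tr$ separately.

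First, I would expand
\begin{align*}
(A+B)(A+B)\tr = AA\tr + AB\tr + BA\tr + BB\tr = AA\tr + BB\tr,
\end{align*}
where the cross terms vanish thanks to $AB\tr = 0$ (and its transpose). So the nonzero singular values of $A+B$, squared, are exactly the nonzero eigenvalues of the positive semidefinite matrix $M := AA\tr + BB\tr$.

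Next, I would exploit $A\tr B = 0$. It says that every column of $A$ is orthogonal to every column of $B$, so the column space of $A$ (which contains that of $AA\tr$) is orthogonal to the column space of $B$ (which contains that of $BB\tr$). Concretely, if $u$ is an eigenvector of $AA\tr$ with nonzero eigenvalue $\lambda$, then $u = \lambda^{-1}AA\tr u$ lies in the column space of $A$; hence $B\tr u = 0$, whence $BB\tr u = 0$, whence $Mu = \lambda u$. Symmetrically, every nonzero-eigenvalue eigenvector of $BB\tr$ is an eigenvector of $M$ with the same eigenvalue. These two families of eigenvectors are mutually orthogonal, and on the orthogonal complement of their joint span (which contains $\ker AA\tr \cap \ker BB\tr$) the operator $M$ is zero. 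Therefore the multiset of nonzero eigenvalues of $M$ is the disjoint union of those of $AA\tr$ and $BB\tr$.

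Finally, taking square roots, the nonzero singular values of $A+B$ (with multiplicity) are the concatenation of the nonzero singular values of $A$ and those of $B$, which is exactly the claim. The only substantive step is the orthogonal-column-spaces observation in the middle; everything else is a direct computation from the hypotheses, so I do not anticipate a real obstacle.
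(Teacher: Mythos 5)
Your proof is correct and follows essentially the same route as the paper: both expand $(A+B)(A+B)\tr = AA\tr + BB\tr$ using $AB\tr = 0$, and both then show the eigenvectors of $AA\tr$ and $BB\tr$ belonging to nonzero eigenvalues form a single orthonormal family. The only cosmetic difference is how that orthogonality is verified — the paper computes $\trace(AA\tr BB\tr) = \trace(A \cdot 0 \cdot B\tr) = 0$ and deduces $\langle u_i, v_j\rangle = 0$ from nonnegativity, whereas you argue directly that $A\tr B = 0$ makes the column spaces of $A$ and $B$ orthogonal and each nonzero-eigenvalue eigenvector lives in the corresponding column space; both are clean one-step uses of the second hypothesis.
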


\begin{proof}
The claim is trivial when $A=0$ or $B=0,$ so assume otherwise.
Since the singular values of $A+B$ are precisely the square roots of the
eigenvalues of $(A+B)(A+B)\tr,$ it suffices
to compute the spectrum of the latter matrix. Now,
\begin{align}
(A+B)(A+B)\tr 
&= AA\tr + BB\tr + \underbrace{AB\tr}_{=0} + \underbrace{BA\tr}_{=0} 
   \nonumber \\
&= AA\tr + BB\tr.   \label{eqn:aat-bbt}
\end{align}
Fix spectral decompositions 
\[ AA\tr = \sum_{i=1}^{\rk A}\sigma_i(A)^2 u_iu_i\tr, \qquad
   BB\tr = \sum_{j=1}^{\rk B}\sigma_j(B)^2 v_jv_j\tr. \]
Then
\begin{align}
\sum_{i=1}^{\rk A}\; \sum_{j=1}^{\rk B} 
    \sigma_i(A)^2\sigma_j(B)^2 
    \langle u_i,v_j\rangle^2 
	\;\;
&=\;\; \left\langle 
	\sum_{i=1}^{\rk A} \sigma_i(A)^2 u_iu_i\tr,
	\sum_{j=1}^{\rk B} \sigma_j(B)^2 v_jv_j\tr \right\rangle \nonumber\\
&=\;\; \langle AA\tr, BB\tr\rangle           \nonumber \\
&=\;\; \trace(AA\tr BB\tr)                   \nonumber \\
&=\;\; \trace(A\cdot 0\cdot B\tr)            \nonumber \\
&=\;\; 0.   \label{eqn:orthogonal-eigenvectors}
\end{align}
Since $\sigma_i(A)\,\sigma_j(B)>0$ for all $i,j,$ it follows from
(\ref{eqn:orthogonal-eigenvectors}) that $\langle u_i,v_j\rangle=0$
for all $i,j.$ Put differently, the vectors $u_1,\dots,u_{\rk
A},v_1,\dots,v_{\rk B}$ form an orthonormal set. Recalling
(\ref{eqn:aat-bbt}), we conclude that the spectral decomposition
of $(A+B)(A+B)\tr$ is
\[\sum_{i=1}^{\rk A}\sigma_i(A)^2 u_iu_i\tr + 
\sum_{j=1}^{\rk B}\sigma_j(B)^2 v_jv_j\tr, \]
and thus the nonzero eigenvalues of $(A+B)(A+B)\tr$ are as claimed.
\qquad
\end{proof}


We are ready for the main result of this section.

\begin{theorem}[singular values of a pattern matrix]
Let $\phi\colon \zoo^t\to\Re$ be given. Let $A$ be the $(n,t,\phi)$-pattern
matrix. Then the nonzero singular values of $A,$ counting 
multiplicities, are:
\[ \bigcup_{S:\hat\phi(S)\ne0} \left\{ 
\sqrt{ 2^{n+t} \left( \frac{n}{t}\right)^t} 
\cdot |\hat\phi(S)| \left( \frac{t}{n}\right)^{|S|/2},
\quad\quad \text{repeated } \left(\frac{n}{t}\right)^{|S|}
           \text{ times}
\right\}.
\]
In particular,
\[ 
\|A\| \;=\; \sqrt{ 2^{n+t} \left( \frac{n}{t}\right)^t} 
\;\max_{S\subseteq[t]} 
    \left\{ |\hat\phi(S)| \left( \frac{t}{n}\right)^{|S|/2} \right\}.
\]
\label{thm:pattern-spectrum}
\end{theorem}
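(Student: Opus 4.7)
The plan is to expand $\phi$ in its Fourier basis and decompose $A$ accordingly. For any $S \subseteq [t]$, one has $\chi_S(x|_V \oplus w) = \chi_S(x|_V)\chi_S(w) = \chi_{V(S)}(x)\chi_S(w),$ where $V(S) = \{V_i : i \in S\} \subseteq [n]$ denotes the subset of $V$ sitting in the blocks indexed by $S$. Consequently
\begin{equation*}
A \;=\; \sum_{S \subseteq [t]} \hat\phi(S) \, A_S,
\end{equation*}
where $A_S$ is the $(n,t,\chi_S)$-pattern matrix. I will (i) show that the summands $\hat\phi(S) A_S$ have pairwise orthogonal row and column spaces, so that an iterated application of Lemma~\ref{lem:disjoint-spectra} yields the spectrum of $A$ as the union of the spectra of the individual $\hat\phi(S) A_S$; and (ii) compute the spectrum of each $A_S$ explicitly.

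For (i), a direct computation gives, for $S \ne T,$
\begin{equation*}
(A_S A_T\tr)_{xy} = \Biggl(\sum_V \chi_{V(S)}(x)\chi_{V(T)}(y)\Biggr) \cdot \sum_w \chi_S(w)\chi_T(w) = 0,
\end{equation*}
because the $w$-sum vanishes by orthogonality of characters on $\zoo^t$. Similarly one finds $(A_S\tr A_T)_{(V,w),(V',w')} = \chi_S(w)\chi_T(w') \cdot 2^n \cdot \1[V(S) = V'(T)],$ and the key observation is that $V(S)$ contains exactly one index from the $i$-th block of $[n]$ for each $i \in S$ and nothing from the remaining blocks, so $V(S) = V'(T)$ forces $S = T$. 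A short induction then extends Lemma~\ref{lem:disjoint-spectra} from two to finitely many pairwise-orthogonal summands.

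For (ii), observe the rank-one factorization $A_S[x,(V,w)] = M_S[x,V]\cdot\chi_S(w),$ where $M_S = [\chi_{V(S)}(x)]_{x,V}$. Since the row vector $[\chi_S(w)]_w$ has Euclidean norm $\sqrt{2^t}$, the singular values of $A_S$ are exactly $\sqrt{2^t}$ times those of $M_S$. To spectrum-analyze $M_S$, compute $(M_S\tr M_S)_{V,V'} = 2^n \, \1[V(S) = V'(S)];$ after permuting columns so that equal values of $V(S)$ are grouped, this is block-diagonal with $(n/t)^{|S|}$ identical blocks (one per possible value of $V(S)$), each equal to $2^n$ times the $(n/t)^{t-|S|}\times(n/t)^{t-|S|}$ all-ones matrix. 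An all-ones $m\times m$ matrix has eigenvalue $m$ once and $0$ otherwise, so $M_S\tr M_S$ has nonzero eigenvalue $2^n(n/t)^{t-|S|}$ with multiplicity $(n/t)^{|S|}$. Taking square roots, multiplying by $\sqrt{2^t}$, and rewriting as $\sqrt{2^{n+t}(n/t)^t} \cdot (t/n)^{|S|/2}$ produces the claimed list of singular values for $A_S$; scaling each by $|\hat\phi(S)|$ and taking the union over $S$ with $\hat\phi(S) \ne 0$ yields the formula in the theorem, and the spectral-norm statement is the maximum.

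The main delicate step is the bookkeeping in (i): the \emph{block support} of $V(S)$ must be identified with $S$ itself, and this is what forces the cross terms $A_S\tr A_T$ to vanish for $S \ne T$. Once that is clear, the rest is essentially a Kronecker-style eigenvalue calculation for a tensor-like matrix whose ``second factor'' is a single character row.
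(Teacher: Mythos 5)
Your proof is correct and follows essentially the same route as the paper: you decompose $A=\sum_S\hat\phi(S)A_S$, verify $A_SA_T\tr=A_S\tr A_T=0$ for $S\ne T$ (spelling out the block-support argument that the paper leaves implicit in an underbrace), apply Lemma~\ref{lem:disjoint-spectra} iteratively, and recover the spectrum of each $A_S$ via a Kronecker-type factorization. The only cosmetic difference is that you factor $A_S$ itself as $M_S$ tensored with the row vector $[\chi_S(w)]_w$ before taking the Gram matrix, whereas the paper writes $A_S\tr A_S$ directly as a Kronecker product; the computations are the same.
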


\begin{proof}
For each $S\subseteq[t],$ let $A_S$ be the $(n,t,\chi_S)$-pattern
matrix.  Thus,
\begin{equation}
A = \sum_{S\subseteq[t]} \hat\phi(S) A_S.
\label{eqn:A-as-sum}
\end{equation}
Fix arbitrary $S,T\subseteq[t]$ with $S\ne T.$ Then
\begin{align}
A_SA_T\tr &= \left[ \sum_{V\in\VV(n,t)}\; \sum_{w\in\zoo^t} 
                    \chi_S(x|_V\;\oplus\; w)
					\;\chi_T(y|_V\;\oplus\; w)  
              \right]_{x,y}         \nonumber \\
            &= 
			\left[\rule{0mm}{8mm} \right.
			       \sum_{V\in\VV(n,t)}
                    \chi_S(x|_V)
					\;\chi_T(y|_V)  
                    \underbrace{\sum_{w\in\zoo^t}\chi_S(w)\;\chi_T(w)}_{=0}
              \left.\rule{0mm}{8mm}
			  \right]_{x,y} \nonumber\\
            &= 0.  \label{eqn:aat}
\end{align}
Similarly,
\begin{align}
A_S\tr A_T 
  = \left[\rule{0mm}{8mm}\right. 
     \chi_S(w)\;\chi_T(w') 
      \underbrace{\sum_{x\in\zoo^n} \chi_S(x|_V)\;\chi_T(x|_{V'})}_{=0}
  \left.\rule{0mm}{8mm}\right]_{(V,w),(V',w')}
  = 0. 
\label{eqn:ata}
\end{align}
By (\ref{eqn:A-as-sum})--(\ref{eqn:ata}) and
Lemma~\ref{lem:disjoint-spectra}, the nonzero singular values of
$A$ are the union of the nonzero singular values of all $\hat\phi(S)A_S,$
counting multiplicities. Therefore, the proof will be complete once
we show that the only nonzero singular value of $A_S\tr A_S$ is
$2^{n+t}(n/t)^{t-|S|},$ with multiplicity $(n/t)^{|S|}.$ 
It is convenient to write this matrix as the Kronecker product
\[ A_S\tr A_S  \;\;=\;\; [\chi_S(w)\chi_S(w')]_{w,w'}\;\otimes \;
\left[\sum_{x\in\zoon}\chi_S(x|_V)\;\chi_S(x|_{V'})\right]_{V,V'}.
\]
The first matrix in this factorization has rank~$1$ and entries
$\pm1,$ which means that its only nonzero singular value is $2^t$ with
multiplicity~$1.$ The other matrix, call it $M,$ is permutation-similar
to
\[ 2^n\begin{bmatrix}
J & & &  \\
& J & & \\
& & \ddots & \\
& & & J
\end{bmatrix},
\]
where $J$ is the all-ones square matrix of order $(n/t)^{t-|S|}.$
This means that the only nonzero singular value of $M$ is
$2^n(n/t)^{t-|S|}$ with multiplicity $(n/t)^{|S|}.$ It follows from
elementary properties of the Kronecker product that the spectrum
of $A_S\tr A_S$ is as claimed.
\qquad
\end{proof}

\section{Pattern matrix method using uniform approximation} 
\label{sec:pattern-matrix-method}

The previous two sections examined relevant dual representations and 
the spectrum of pattern matrices. Having
studied these notions in their pure and basic form, we now apply
our findings to communication complexity. Specifically, we establish
the \emph{pattern matrix method} for communication complexity,
which gives strong lower bounds for every pattern matrix generated
by a Boolean function with high approximate degree.

\begin{restatetheorem}{thm:main-cc}
Let $F$ be the $(n,t,f)$-pattern matrix, 
where $f\colon \zoo^t\to\moo$ is given.
Then for every $\epsilon\in[0,1)$ and every $\delta<\epsilon/2,$
\begin{align}
Q^*_{\delta}(F) &\geq
\frac{1}{4}  \degeps(f)\log \left(\frac{n}{t}\right) - 
\frac12 \log\left(\frac{3}{\epsilon-2\delta}\right).
\label{eqn:pattern-matrix-general-error}
\intertext{In particular,}
Q^*_{1/7}(F) &>
\frac{1}{4}  \deg_{1/3}(f)\log \left(\frac{n}{t}\right) - 3. 
\label{eqn:pattern-matrix-bounded-error}
\end{align}
\end{restatetheorem}

\begin{proof}
Since (\ref{eqn:pattern-matrix-general-error}) immediately implies
(\ref{eqn:pattern-matrix-bounded-error}), we will focus on the former
in the remainder of the proof.
Let $d=\degeps(f)\geq1.$ By Theorem~\ref{thm:dual-approx},
there is a function $\psi\colon \zoo^t\to\Re$ such that:
\begin{align}
&\;\,\hat\psi(S)=0 &&(|S|<d),
	\label{eqn:psi-fourier-coeffs} \\
&\sum_{z\in\zoo^t}|\psi(z)| =1,
	\label{eqn:psi-bounded} \\
&\sum_{z\in\zoo^t}\psi(z)f(z) > \epsilon.
	\label{eqn:psi-correl}
\end{align}
Let $\Psi$ be the
$(n,t,2^{-n}(n/t)^{-t}\psi)$-pattern matrix. Then 
(\ref{eqn:psi-bounded}) and (\ref{eqn:psi-correl}) show that
\begin{equation}
\|\Psi\|_1 =1,  \qquad    \langle F, \Psi\rangle > \epsilon.
\label{eqn:K-bounded,K-M-correl}
\end{equation}
Our last task is to calculate $\|\Psi\|.$ By (\ref{eqn:psi-bounded})
and Proposition~\ref{prop:fourier-coeff-bound},
\begin{equation}
\max_{S\subseteq[t]} |\hat\psi(S)| \leq 2^{-t}.
	\label{eqn:max-fourier-coeff-psi}
\end{equation}
Theorem~\ref{thm:pattern-spectrum} yields, in view of
(\ref{eqn:psi-fourier-coeffs}) and (\ref{eqn:max-fourier-coeff-psi}):
\begin{equation}
\|\Psi\|
\leq  \left(\frac{t}{n}\right)^{d/2} 
      \left(2^{n+t} \left(\frac{n}{t}\right)^t\right)^{-1/2}. 
\label{eqn:K-norm}
\end{equation}
Now (\ref{eqn:pattern-matrix-general-error})
follows from (\ref{eqn:K-bounded,K-M-correl}), 
(\ref{eqn:K-norm}), and Theorem~\ref{thm:discrepancy-method}.
\qquad
\end{proof}

Theorem~\ref{thm:main-cc} gives
lower bounds not only for bounded-error communication but also for
communication protocols with error probability $\frac12-o(1).$ 
For example, if a function $f\colon\zoo^t\to\moo$ requires a polynomial of
degree $d$ for approximation within $1-o(1),$ equation
(\ref{eqn:pattern-matrix-general-error}) gives a lower bound for 
small-bias communication. We will complement and refine
that estimate in the next
section, which is dedicated to small-bias communication.

We now prove the corollary to Theorem~\ref{thm:main-cc} on function
composition, stated in the introduction.

\begin{annotatedproof}{Proof of Corollary~\textup{\ref{cor:main}}}
The $(2t,t,f)$-pattern matrix occurs as a submatrix of
$[F(x,y)]_{x,y\in\zoo^{4t}}.$ 
\qquad\QED
\end{annotatedproof}

Finally, we show that the lower
bound~(\ref{eqn:pattern-matrix-bounded-error}) derived above for
bounded-error communication complexity is tight up to a polynomial factor,
even for deterministic protocols. The proof follows a well-known
argument in the literature~\cite{bcw98quantum,
beals-et-al01quantum-by-polynomials}, as pointed out to us by R.~de
Wolf~\cite{dewolf-personal-oct-2007}.

\begin{proposition}[on the tightness of
Theorem~\ref{thm:main-cc}]
Let $F$ be the $(n,t,f)$-pattern matrix, 
where $f\colon \zoo^t\to\moo$ is given. Then
\begin{align*}
D(F) \leq O(\dt(f)\log(n/t)) \leq O(\deg_{1/3}(f)^6\log (n/t)),
\end{align*}
where $\dt(f)$ is the least depth of a decision tree for $f.$
In particular, \textup{(\ref{eqn:pattern-matrix-bounded-error})}
is tight up to a polynomial factor.
\label{prop:det-upper-bound}
\end{proposition}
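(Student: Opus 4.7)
The plan is to establish the first inequality $D(F)\leq O(\dt(f)\log(n/t))$ by a direct simulation of an optimal decision tree for $f$, and then to invoke the well-known polynomial relationship between decision-tree depth and approximate degree for the second inequality. Both ingredients are standard; the content of the proposition is simply the observation that, when combined with Theorem~\ref{thm:main-cc}, they pin down the complexity of $F$ up to a polynomial.

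For the first inequality, I would fix an optimal depth-$\dt(f)$ decision tree $T$ computing $f$, and describe a deterministic protocol in which Bob simulates $T$. Recall that Alice holds $x\in\zoon$ and Bob holds $(V,w)\in\VV(n,t)\times\zoo^t$, and they must compute $F_{x,(V,w)}=f(x|_V\oplus w)$. Whenever $T$ queries the $i$-th input bit for some $i\in[t]$, Bob knows the unique element $v_i\in V$ lying in the $i$-th block of $[n]$; he sends the index of $v_i$ within its block, using $\lceil\log(n/t)\rceil$ bits. Alice replies with $x_{v_i}$, using one bit. Bob then XORs the received bit with $w_i$ to recover $(x|_V\oplus w)_i$ and proceeds down $T$. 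After at most $\dt(f)$ such exchanges, $T$ outputs $f(x|_V\oplus w)$, which Bob broadcasts. The total communication is $\dt(f)\cdot(\lceil\log(n/t)\rceil+1)+1=O(\dt(f)\log(n/t))$.

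For the second inequality, I would invoke the theorem of Beals, Buhrman, Cleve, Mosca, and de~Wolf~\cite{beals-et-al01quantum-by-polynomials}, which states that $\dt(f)=O(\deg_{1/3}(f)^6)$ for every Boolean function $f$. Chaining this with the first inequality gives $D(F)\leq O(\deg_{1/3}(f)^6\log(n/t))$.

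Tightness of (\ref{eqn:pattern-matrix-bounded-error}) then follows by comparing with Theorem~\ref{thm:main-cc}, which guarantees $Q^*_{1/7}(F)>\frac{1}{4}\deg_{1/3}(f)\log(n/t)-3$. Since $D(F)\geq R_{1/7}(F)\geq Q^*_{1/7}(F)$, the upper and lower bounds on the communication complexity of $F$ agree up to a polynomial factor. There is no real obstacle in this argument; the only subtlety is the care needed to count $\log(n/t)$ rather than $\log n$ bits per query, which is possible precisely because $\VV(n,t)$ restricts $V$ to have exactly one element per block of size $n/t$.
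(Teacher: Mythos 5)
Your proof is essentially identical to the paper's: both simulate an optimal decision tree for $f$ at a per-query cost of roughly $\log(n/t)+O(1)$ bits (exploiting that $V$ has exactly one element per block), and both invoke the Beals et al.\ bound $\dt(f)\leq O(\deg_{1/3}(f)^6)$ to finish. Your version spells out the per-round message exchange and the tightness comparison a bit more explicitly, but the argument is the same.
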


\begin{proof}
Beals al.~\cite[Cor.~5.6]{beals-et-al01quantum-by-polynomials}
prove that $\dt(f)\leq O(\deg_{1/3}(f)^6)$ for all Boolean functions $f.$
Therefore, it suffices to prove an upper bound of $O(d\log(n/t))$ on the
deterministic communication complexity of $F,$ where $d=\dt(f).$

The needed deterministic protocol is well-known.
Fix a depth-$d$ decision tree for $f.$ 
Let $(x,(V,w))$ be a given input. Alice and Bob start at the root of the
decision tree, labeled by some variable $i\in\{1,\dots,t\}.$ By exchanging
$\lceil\log(n/t)\rceil+2$ bits, Alice and Bob determine $(x|_V)_i\oplus
w_i\in\zoo$ and take the corresponding branch of the tree. The process
repeats until a leaf is reached, at which point both parties learn
$f(x|_V\oplus w).$
\qquad
\end{proof}


\section{Pattern matrix method using threshold weight} 
\label{sec:additional-results}

As we have already mentioned,
Theorem~\ref{thm:main-cc} of the previous
section can be used to obtain lower bounds not only for bounded-error
communication but also small-bias communication. In the latter case,
one first needs to show that the base function $f\colon\zoo^t\to\moo$
cannot be approximated pointwise within $1-o(1)$ by a real polynomial
of a given degree $d.$ In this section, we derive a different lower
bound for small-bias communication, this time using the assumption
that the threshold weight $W(f,d)$ is high. We will see that this
new lower bound is nearly optimal and closely related to the lower
bound in Theorem~\ref{thm:main-cc}.

\newpage

\begin{theorem}[pattern matrix method using threshold weight]
Let $F$ be the $(n,t,f)$-pattern matrix, 
where $f\colon \zoo^t\to\moo$ is given.
Then for every integer $d\geq1$ and real $\gamma\in(0,1),$
\begin{align}
Q^*_{1/2-\gamma/2}(F) &\geq
\frac14\min\left\{
 d\log \frac nt, \;\, \log \frac {W(f,d-1)}{2t}
   \right\} - \frac12 \log \frac 3\gamma.
\label{eqn:quantum-weight}
\intertext{In particular,}
Q^*_{1/2-\gamma/2}(F) &\geq
\frac 14\degthr(f)\log \left(\frac nt\right) - 
   \frac12 \log \frac 3\gamma.
\label{eqn:quantum-threshold-deg}
\end{align}
\label{thm:pattern-matrix-method-small-bias}
\end{theorem}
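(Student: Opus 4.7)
The strategy is to mirror the proof of Theorem~\ref{thm:main-cc}, only substituting the threshold-weight duality (Theorem~\ref{thm:dual-weight}) for the approximate-degree duality (Theorem~\ref{thm:dual-approx}). The plan is to construct a matrix $\Psi$ that (i) is $\ell_1$-normalized, (ii) is perfectly correlated with $F$ in the sense $\langle F,\Psi\rangle=1$, and (iii) has small spectral norm; the generalized discrepancy method (Theorem~\ref{thm:discrepancy-method}), invoked with $\delta=\frac{1}{2}-\frac{\gamma}{2}$, will then convert these three properties into~(\ref{eqn:quantum-weight}).

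First I would invoke inequality~(\ref{eqn:weight-upper}) of Theorem~\ref{thm:dual-weight}, applied with domain $\zoo^t$ and parameter $d-1$, to obtain a distribution $\mu$ on $\zoo^t$ satisfying
\[
\max_{|S|\leq d-1}\;\bigl|\Exp_{x\sim\mu}[f(x)\chi_S(x)]\bigr|\;\leq\;\sqrt{2t/W(f,d-1)};
\]
the degenerate case $W(f,d-1)=\infty$ (equivalently $d\leq\degthr(f)$) is handled analogously by Theorem~\ref{thm:thresh-or-distr}, under the convention $1/\infty=0$. Set $\psi(x):=\mu(x)f(x)$ and let $\Psi$ be the $(n,t,2^{-n}(n/t)^{-t}\psi)$-pattern matrix. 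A direct unwinding of the pattern-matrix definition (exactly as in the proof of Theorem~\ref{thm:main-cc}) yields
$\|\Psi\|_1=\sum_z|\psi(z)|=\sum_z\mu(z)=1$
and
$\langle F,\Psi\rangle=\sum_z\psi(z)f(z)=\sum_z\mu(z)f(z)^2=1$, establishing (i) and (ii) with the best possible constants.

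The remaining, and only genuinely new, step is to bound $\|\Psi\|$, and this is where the argument diverges from that of Theorem~\ref{thm:main-cc}: the function $\psi$ is no longer guaranteed to have vanishing low-order Fourier coefficients, so I must estimate $\max_{S\subseteq[t]}|\hat\psi(S)|(t/n)^{|S|/2}$ from Theorem~\ref{thm:pattern-spectrum} by splitting into two regimes. For $|S|<d$ the choice of $\mu$ gives $|\hat\psi(S)|=2^{-t}|\Exp_\mu[f\chi_S]|\leq 2^{-t}\sqrt{2t/W(f,d-1)}$; for $|S|\geq d$ I would instead apply the trivial bound $|\hat\psi(S)|\leq 2^{-t}$ of Proposition~\ref{prop:fourier-coeff-bound} and note that $(t/n)^{|S|/2}\leq (t/n)^{d/2}$. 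Combining these estimates gives
\[
\max_{S\subseteq[t]}|\hat\psi(S)|(t/n)^{|S|/2}\;\leq\; 2^{-t}\max\bigl\{\sqrt{2t/W(f,d-1)},\;(t/n)^{d/2}\bigr\}.
\]
Feeding this into Theorem~\ref{thm:pattern-spectrum} and then into Theorem~\ref{thm:discrepancy-method}, the exponential prefactors cancel against $\sqrt{|X|\,|Y|}=\sqrt{2^{n+t}(n/t)^t}$ precisely as in the proof of Theorem~\ref{thm:main-cc}, leaving
\[
4^{Q^*_{\delta}(F)}\;\geq\;\frac{1-2\delta}{3\,\max\bigl\{\sqrt{2t/W(f,d-1)},\,(t/n)^{d/2}\bigr\}}.
\]
Setting $\delta=\frac{1}{2}-\frac{\gamma}{2}$ and taking $\log_2$ turns the $\max$ in the denominator into the $\min$ of~(\ref{eqn:quantum-weight}). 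Finally, the threshold-degree consequence~(\ref{eqn:quantum-threshold-deg}) is the special case $d=\degthr(f)$, for which $W(f,d-1)=\infty$ eliminates the threshold-weight term from the minimum.
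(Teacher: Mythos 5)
Your proof is correct and follows the paper's argument essentially step for step: same choice of $\mu$ via Theorem~\ref{thm:dual-weight} (with Theorem~\ref{thm:thresh-or-distr} covering $W(f,d-1)=\infty$), same $\psi=f\mu$ and pattern matrix $\Psi$, same two-regime bound on $\|\Psi\|$ via Theorem~\ref{thm:pattern-spectrum} and Proposition~\ref{prop:fourier-coeff-bound}, and the same invocation of Theorem~\ref{thm:discrepancy-method} with $\delta=\tfrac12-\tfrac\gamma2$. The only cosmetic difference is that you spell out the Fourier-coefficient case split ($|S|<d$ versus $|S|\geq d$) a bit more explicitly than the paper does.
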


\begin{proof}
Letting $d=\degthr(f)$ in (\ref{eqn:quantum-weight}) yields
(\ref{eqn:quantum-threshold-deg}), since $W(f,d-1)=\infty$ in that
case.  In the remainder of the proof, we focus on
(\ref{eqn:quantum-weight}) alone.

We claim that there exists a distribution $\mu$ on $\zoo^t$ such that
\begin{align}
\max_{|S|<d} \left\lvert \Exp_{z\sim\mu}[f(z)\chi_S(z)]\right\rvert
	&\leq \left(\frac{2t}{W(f,d-1)}\right)^{1/2}.
	\label{eqn:ortho-distr}
\end{align}
For $d\leq\degthr(f),$ the claim holds by
Theorem~\ref{thm:thresh-or-distr} since $W(f,d-1)=\infty$ in that case.
For $d>\degthr(f),$ the claim holds by Theorem~\ref{thm:dual-weight}.

Now, define $\psi\colon\zoo^t\to\Re$ by $\psi(z)=f(z)\mu(z).$ 
It follows from (\ref{eqn:ortho-distr}) that
\begin{align}
&\;\,|\hat\psi(S)|\leq 2^{-t}\left(\frac{2t}{W(f,d-1)}\right)^{1/2}
 &&(|S|<d),
	\label{eqn:weight-psi-fourier-coeffs} \\
&\sum_{z\in\zoo^t}|\psi(z)| =1,
	\label{eqn:weight-psi-bounded} \\
&\sum_{z\in\zoo^t}\psi(z)f(z) =1.
	\label{eqn:weight-psi-correl}
\end{align}
Let $\Psi$ be the
$(n,t,2^{-n}(n/t)^{-t}\psi)$-pattern matrix. Then 
(\ref{eqn:weight-psi-bounded}) and (\ref{eqn:weight-psi-correl})
show that
\begin{equation}
\|\Psi\|_1 =1,  \qquad    \langle F, \Psi\rangle = 1.
\label{eqn:weight-K-bounded,K-M-correl}
\end{equation}
It remains to calculate $\|\Psi\|.$ By (\ref{eqn:weight-psi-bounded})
and Proposition~\ref{prop:fourier-coeff-bound},
\begin{equation}
\max_{S\subseteq[t]} |\hat\psi(S)| \leq 2^{-t}.
	\label{eqn:weight-max-fourier-coeff-psi}
\end{equation}
Theorem~\ref{thm:pattern-spectrum} yields, in view of
(\ref{eqn:weight-psi-fourier-coeffs}) and
(\ref{eqn:weight-max-fourier-coeff-psi}):
\begin{equation}
\|\Psi\|
\leq  \max\left\{\left(\frac{t}{n}\right)^{d/2},
\left(\frac{2t}{W(f,d-1)}\right)^{1/2}
         \right\}
      \left(2^{n+t} \left(\frac{n}{t}\right)^t\right)^{-1/2}.
\label{eqn:weight-K-norm}
\end{equation}
Now (\ref{eqn:quantum-weight}) follows from 
(\ref{eqn:weight-K-bounded,K-M-correl}),
(\ref{eqn:weight-K-norm}), and
Theorem~\ref{thm:discrepancy-method}.
\qquad
\end{proof}

Recall from Theorem~\ref{thm:E-vs-W} that the quantities $E(f,d)$
and $W(f,d)$ are related for all $f$ and $d.$ In particular,
the lower bounds for small-bias communication in
Theorems~\ref{thm:main-cc}
and~\ref{thm:pattern-matrix-method-small-bias} are quite close, and
either one can be approximately deduced from the other. In deriving
both results from scratch, as we did, our motivation was to obtain
the tightest bounds and to illustrate the pattern matrix method in
different contexts. We will now see that the lower bound in
Theorem~\ref{thm:pattern-matrix-method-small-bias} is close to
optimal, even for classical protocols.

\begin{theorem}
Let $F$ be the $(n,t,f)$-pattern matrix, 
where $f\colon \zoo^t\to\moo$ is given.
Then for every integer $d\geq\degthr(f),$ 
\begin{align*}
Q^*_{1/2-\gamma/2}(F) &\leq R_{1/2-\gamma/2}(F)
\leq d\log\left(\frac nt\right) + 3,
\end{align*}
where $\gamma = 1/W(f,d).$
\label{thm:classical-upper-bound}
\end{theorem}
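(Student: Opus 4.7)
The plan is to exhibit an explicit public-coin randomized protocol achieving the stated bias. The hypothesis $d \geq \degthr(f)$ gives $W(f,d) < \infty$, so there exist integers $\{\lambda_S\}_{|S|\leq d}$ with $\sum_{|S|\leq d}|\lambda_S| = W(f,d)$ and $f(z) \equiv \sign\bigl(\sum_{|S|\leq d}\lambda_S\chi_S(z)\bigr)$. I would use these coefficients as a probability distribution over low-weight characters.

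Protocol: using public randomness, Alice and Bob sample $S\subseteq[t]$ with $|S|\leq d$ according to the distribution $|\lambda_S|/W(f,d)$. Because $S$ only names $|S|\leq d$ blocks of the partition underlying $\VV(n,t)$, Bob can specify, for each $i\in S$, the unique element of $V$ lying in the $i$-th block using $\lceil\log(n/t)\rceil$ bits; this costs at most $d\log(n/t)+O(1)$ bits total. Alice then knows which at most $d$ coordinates of $x$ contribute to $\chi_S(x|_V)$; she computes the XOR of those bits and sends the resulting single bit to Bob. Finally Bob outputs $Z := \sign(\lambda_S)\cdot\chi_S(x|_V)\cdot\chi_S(w)\in\moo$, which he can evaluate since he knows $w$ and has received Alice's bit.

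For the analysis, set $p(z)=\sum_{|S|\leq d}\lambda_S\chi_S(z)$. Conditioning on input $(x,V,w)$ and averaging over $S$,
\begin{align*}
\Exp[Z] \;=\; \sum_{|S|\leq d}\frac{|\lambda_S|}{W(f,d)}\,\sign(\lambda_S)\,\chi_S(x|_V\oplus w)
\;=\; \frac{p(x|_V\oplus w)}{W(f,d)}.
\end{align*}
Since $p$ is an integer-valued polynomial with $\sign(p)\equiv f$, we have $p(z)\cdot f(z)\geq 1$ pointwise, whence $\Exp[Z\cdot f(x|_V\oplus w)] \geq 1/W(f,d) = \gamma$. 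Converting this correlation to an error probability gives $\Pr[Z\neq f(x|_V\oplus w)] \leq \tfrac12-\tfrac{\gamma}{2}$, as required.

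The main thing to track is the bit count: the $|S|$ block-local indices cost at most $d\log(n/t)$ bits and Alice's reply costs $1$ bit, with a small additive slack absorbed into the $+3$. The only mildly subtle point is arguing that $p(z)$ is nonzero for every $z$ (needed for $|p|\geq 1$), which is immediate from $f(z)=\sign p(z)\in\{-1,+1\}$. Everything else is bookkeeping, and the inequality $Q^*_{1/2-\gamma/2}(F)\leq R_{1/2-\gamma/2}(F)$ is the standard fact that public-coin randomized protocols simulate themselves in the quantum model at no extra cost.
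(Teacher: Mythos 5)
Your protocol is the same one the paper uses: sample a character $\chi_S$ ($|S|\leq d$) with probability $|\lambda_S|/W(f,d)$ using shared randomness, communicate which coordinates of $x$ Bob's set $V$ selects within the blocks indexed by $S$, have the other party supply the missing bit, and output $\sign(\lambda_S)\chi_S(x|_V\oplus w)$; the correlation calculation and the conversion $\Exp[Zf]\geq\gamma \Rightarrow \Pr[Z\neq f]\leq\tfrac12-\tfrac{\gamma}{2}$ are identical to the paper's. The one slip is in the bit count: sending the block-local indices one at a time with $\lceil\log(n/t)\rceil$ bits each costs $|S|\lceil\log(n/t)\rceil\leq d\lceil\log(n/t)\rceil$, which can exceed $d\log(n/t)$ by as much as $d$, not $O(1)$, so this slack does not fit inside the $+3$. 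The paper instead encodes the whole tuple of at most $d$ block-local indices jointly; since there are at most $(n/t)^d$ such tuples, $\lceil d\log(n/t)\rceil\leq d\log(n/t)+1$ bits suffice, and the remaining $+2$ covers the two single-bit exchanges. With that adjustment, your argument matches the paper's.
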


\begin{proof}
The communication protocol that we will describe is standard and
has been used in one form or another in several works,
e.g.,~\cite{paturi86cc, GHR92, sherstov07halfspace-mat,
sherstov07ac-majmaj}.  Put $W=W(f,d)$ and fix a representation 
\begin{align*}
f(z) \equiv
\sign\left(
	\sum_{S\subseteq[t], \; |S|\leq d} \lambda_S \chi_S(z)
\right),
\end{align*}
where the integers $\lambda_S$ satisfy $\sum |\lambda_S|=W.$ On
input $(x,(V,w)),$ the protocol proceeds as follows. Let
$i_1<i_2<\cdots<i_t$ be the elements of $V.$ Alice and Bob use their
shared randomness to pick a set $S\subseteq[t]$ with $|S|\leq d,$
according to the probability distribution $|\lambda_S|/W.$ Next,
Bob sends Alice the indices $\{i_j:j\in S\}$ as well as the bit
$\chi_S(w).$ With this information, Alice computes the product
$\sign(\lambda_S)\chi_S(x|_V)\chi_S(w)=\sign(\lambda_S)\chi_S(x|_V\oplus
w)$ and announces the result as the output of the protocol.

Assuming an optimal encoding of the messages, the communication cost
of this protocol is bounded by 
\begin{align*}
\left\lceil
\log \left(\frac nt\right)^d 
\right\rceil +2
\leq d\log\left(\frac nt\right) + 3,
\end{align*}
as desired.  On each input $x,V,w,$ the output of the protocol is a
random variable $P(x,V,w)\in\moo$ that obeys
\begin{align*}
f(x|_V\oplus w)\Exp[P(x,V,w)] 
    &= f(x|_V\oplus w)\sum_{|S|\leq d} \frac{|\lambda_S|}{W} 
		\sign(\lambda_S)\chi_S(x|_V\oplus w)\\
    &= \frac1W
	  \left|\sum_{|S|\leq d} \lambda_S\chi_S(x|_V\oplus w) \right|\\
    &\geq \frac 1W,
\end{align*}
which means that the protocol produces the correct answer with probability
$\frac 12  + \frac1{2W}$ or greater.
\qquad
\end{proof}

\section{Discrepancy of pattern matrices} \label{sec:disc}
We now restate some of the results of the previous section
in terms of \emph{discrepancy}, a key notion already mentioned
in Section~\ref{sec:discrepancy}. This quantity figures
prominently in the study of small-bias communication as well as
various applications, such as  learning theory and circuit complexity.

For a Boolean function
$f\colon X\times Y\to\moo$ and a probability distribution $\lambda$ on
$X\times Y,$ the \emph{discrepancy} of $f$ under
$\lambda$ is defined by
\begin{align*}
 \disc_\lambda(f) &=
	   \max_{\rule{0mm}{4mm}\substack{S\subseteq X,\\T\subseteq Y}}
    \left|\sum_{x\in S} \sum_{y\in T} \lambda(x,y) f(x,y)\right|.
\intertext{We put}
 \disc(f) &= \min_{\lambda} \disc_\lambda(f). 
\end{align*} 
As usual, we will identify a 
function $f\colon X\times Y\to\moo$ with its communication
matrix~$F=[f(x,y)]_{x,y}$ and use the conventions
$\disc_\lambda(F)=\disc_\lambda(f)$ and $\disc(F)=\disc(f).$

The above definition of discrepancy is not convenient to work with, and
we will use a well-known matrix-analytic reformulation; cf.~Kushilevitz
\& Nisan~\cite[Ex.~3.29]{ccbook}. For matrices $A=[A_{xy}]$ and
$B=[B_{xy}],$ recall that their \emph{Hadamard product} is given
by $A\circ B=[A_{xy}B_{xy}].$

\begin{proposition}
Let $X,Y$ be finite sets, $f\colon X\times Y\to\moo$ 
a given function. Then
\[ \disc_P(f) \leq \sqrt{|X|\, |Y|}\, \| P\circ F\|, \]
where $F=[f(x,y)]_{x\in X,\, y\in Y}$ and $P$ is any matrix whose
entries are nonnegative and sum to~$1$ $($viewed as a probability
distribution$).$ In particular,
\[ \disc(f) \leq \sqrt{|X|\, |Y|} \min_{P} \| P\circ F\|, \]
where the minimum is over matrices $P$ whose entries are
nonnegative and sum to $1.$
\label{prop:disc2spectral}
\end{proposition}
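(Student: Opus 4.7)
The plan is to rewrite the discrepancy as a bilinear form in indicator vectors and then pass to the spectral norm via the standard variational characterization.

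First, I would observe that for any choice of $S \subseteq X$ and $T \subseteq Y$, the quantity $\sum_{x\in S}\sum_{y\in T} P(x,y)f(x,y)$ equals $\mathbf{1}_S^{\sf T} (P \circ F) \mathbf{1}_T$, where $\mathbf{1}_S \in \{0,1\}^X$ and $\mathbf{1}_T \in \{0,1\}^Y$ denote the characteristic vectors of $S$ and $T$. Therefore
\[
\disc_P(f) = \max_{S \subseteq X,\, T \subseteq Y} \left| \mathbf{1}_S^{\sf T} (P \circ F) \mathbf{1}_T \right|.
\]

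Second, I would apply the variational definition of the spectral norm, which gives $|u^{\sf T} M v| \leq \|u\| \cdot \|M\| \cdot \|v\|$ for any real matrix $M$ and any real vectors $u,v$ of compatible dimensions. Taking $M = P \circ F$, $u = \mathbf{1}_S$, $v = \mathbf{1}_T$, and using the trivial norm bounds $\|\mathbf{1}_S\| \leq \sqrt{|X|}$ and $\|\mathbf{1}_T\| \leq \sqrt{|Y|}$, yields
\[
\left|\mathbf{1}_S^{\sf T}(P \circ F)\mathbf{1}_T\right| \leq \sqrt{|X|\,|Y|} \; \|P \circ F\|
\]
uniformly in $S$ and $T$. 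Taking the maximum over $S,T$ proves the first inequality. The second inequality follows by minimizing both sides over probability distributions $P$, since the left-hand side attains $\disc(f)$ at its minimizer.

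There is no real obstacle here — the only thing to be slightly careful about is the two-step reduction (indicator vectors, then operator norm) and the observation that replacing $F$ with $P \circ F$ is exactly what absorbs the distribution $P$ into the matrix so that $\mathbf{1}_S, \mathbf{1}_T$ appear without weights. The whole proof should fit in a few lines.
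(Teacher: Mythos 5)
Your proof is correct and is essentially identical to the paper's: both rewrite $\disc_P(f)$ as $\max_{S,T}|\mathbf{1}_S^{\sf T}(P\circ F)\mathbf{1}_T|$ and then bound this by $\|\mathbf{1}_S\|\,\|P\circ F\|\,\|\mathbf{1}_T\|\leq \sqrt{|X|\,|Y|}\,\|P\circ F\|$. No differences worth noting.
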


\begin{proof}
We have
\begin{align*}
\disc_P(f) 
	&= \max_{S,T} \left|\1_S\tr\, (P\circ F)\, \1_T \right| \\
    &\leq \max_{S,T}
		 \Big\{ \|\1_S\|\cdot \|P\circ F\|\cdot \|\1_T\| \Big\}\\
	&= \|P\circ F\| \sqrt{|X|\, |Y|},
\end{align*}
as claimed.
\qquad
\end{proof}

We will need one last ingredient, a well-known 
lower bound on communication complexity in terms of discrepancy.

\begin{proposition}[see {\cite[pp. 36--38]{ccbook}}] 
For every function $f\colon X\times Y\to\moo$ and
every $\gamma\in(0,1),$
\begin{align*}
R_{1/2-\gamma/2}(f) 
\geq 
\log\frac{\gamma}{\disc(f)}.
\end{align*}
\label{prop:disc-method}
\end{proposition}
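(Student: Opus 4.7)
The plan is to use the standard two-step reduction: first, pass from randomized protocols to deterministic ones via an averaging (min-max / Yao's principle) argument, and second, exploit the fact that a deterministic protocol of cost $c$ partitions $X\times Y$ into at most $2^c$ monochromatic combinatorial rectangles.

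First I would fix an arbitrary probability distribution $\lambda$ on $X\times Y$ and a public-coin randomized protocol $P$ of cost $c = R_{1/2-\gamma/2}(f)$ that errs with probability at most $\tfrac12-\tfrac\gamma2$ on every input. A one-line calculation shows that the expected correlation of $P$'s output with $f$ under $\lambda$ is at least $\gamma$; hence by averaging over the public coins there exists a deterministic protocol $\Pi$ of cost $\leq c$ satisfying
\[
\sum_{(x,y)\in X\times Y} \lambda(x,y)\, f(x,y)\,\Pi(x,y) \;\geq\; \gamma,
\]
where $\Pi(x,y)\in\moo$ denotes the output.

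Next I would invoke the basic structural fact that any deterministic protocol of cost $c$ induces a partition of $X\times Y$ into monochromatic rectangles $R_1,\ldots,R_N$ with $N\leq 2^c$, where on each $R_i$ the output $\Pi$ takes a constant value $v_i\in\moo$. Substituting this into the previous inequality and taking absolute values rectangle-by-rectangle gives
\[
\gamma \;\leq\; \sum_{i=1}^{N} \left\lvert \sum_{(x,y)\in R_i} \lambda(x,y)\, f(x,y)\right\rvert \;\leq\; N\cdot\disc_\lambda(f) \;\leq\; 2^{c}\,\disc_\lambda(f),
\]
since each term is bounded by $\disc_\lambda(f)$ by the definition of discrepancy (a rectangle $R_i$ is of the form $S_i\times T_i$).

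Finally, since $\lambda$ was arbitrary, I can choose it to be the distribution that minimizes the discrepancy, yielding $\gamma \leq 2^{c}\,\disc(f)$ and therefore $c \geq \log(\gamma/\disc(f))$, which is the claimed inequality. The only mildly delicate step is the averaging that extracts a deterministic protocol retaining correlation $\gamma$ under $\lambda$; everything else is routine. No obstacle of substance is expected, as this is a textbook argument (see~\cite[pp.~36--38]{ccbook}), and all ingredients are standard.
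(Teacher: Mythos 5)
Your proof is correct and is precisely the standard Kushilevitz--Nisan argument that the paper itself invokes by citation without reproducing (the paper gives no proof of Proposition~\ref{prop:disc-method}, deferring entirely to~\cite[pp.~36--38]{ccbook}). All the steps check out: the per-input error bound $\le \tfrac12 - \tfrac\gamma2$ yields per-input correlation $\Exp[P(x,y)f(x,y)]\ge\gamma$, averaging over the public coins extracts a deterministic protocol of cost $\le c$ retaining correlation $\ge\gamma$ under the fixed $\lambda$, the at-most-$2^c$ leaf rectangles each contribute at most $\disc_\lambda(f)$, and minimizing over $\lambda$ gives $\gamma\le 2^c\disc(f)$.
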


Using Theorems~\ref{thm:pattern-matrix-method-small-bias} and
\ref{thm:classical-upper-bound}, we will now characterize the
discrepancy of pattern matrices in terms of threshold weight.

\begin{theorem}[discrepancy of pattern matrices]
Let $F$ be the $(n,t,f)$-pattern matrix, 
where $f\colon \zoo^t\to\moo$ is given.
Then for every integer $d\geq0,$
\begin{align}
\disc(F) &\geq \frac1{8W(f,d)} \left( \frac tn \right)^d
      \label{eqn:disc-lower}
\intertext{and}
\disc(F)^2 &\leq \max\left\{ \frac {2t}{W(f,d-1)}, 
                             \left(\frac tn\right)^d
             \right\}.
       \label{eqn:disc-upper}
\intertext{In particular,}
\disc(F) &\leq \left(\frac tn\right)^{\degthr(f)/2}.
       \label{eqn:disc-upper-thrdeg}
\end{align}
\label{thm:pattern-matrix-discrepancy}
\end{theorem}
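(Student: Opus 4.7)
My plan is to derive all three bounds by harvesting work already done in Sections~\ref{sec:pattern-matrix-method} and~\ref{sec:additional-results}, rather than rerun the duality argument.

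\textbf{Upper bound (\ref{eqn:disc-upper}).} I would reopen the proof of Theorem~\ref{thm:pattern-matrix-method-small-bias} and observe that the dual witness $\Psi$ constructed there is of the special form $\Psi=P\circ F,$ where $P$ is the $(n,t,2^{-n}(n/t)^{-t}\mu)$-pattern matrix and $\mu$ is the distribution on $\zoo^t$ satisfying~(\ref{eqn:ortho-distr}). Two checks are needed: (i) the pointwise identity $2^{-n}(n/t)^{-t}\psi(x|_V\oplus w)=f(x|_V\oplus w)\cdot 2^{-n}(n/t)^{-t}\mu(x|_V\oplus w),$ which is immediate from $\psi=f\mu$; and (ii) $P$ is a probability distribution, which follows from the fact that for each fixed $V\in\VV(n,t),$ the map $(x,w)\mapsto x|_V\oplus w$ sends $\zoon\times\zoo^t$ to $\zoo^t$ uniformly with multiplicity $2^n.$ Once $\Psi=P\circ F$ is established, Proposition~\ref{prop:disc2spectral} gives
\begin{align*}
\disc(F)\leq\sqrt{2^n\cdot(n/t)^t 2^t}\,\|\Psi\|,
\end{align*}
and the bound~(\ref{eqn:weight-K-norm}) on $\|\Psi\|$ from the proof of Theorem~\ref{thm:pattern-matrix-method-small-bias} yields exactly $\disc(F)^2\leq\max\{2t/W(f,d-1),(t/n)^d\}.$

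\textbf{Specialization (\ref{eqn:disc-upper-thrdeg}).} Take $d=\degthr(f)$ in~(\ref{eqn:disc-upper}). By the convention $W(f,\degthr(f)-1)=\infty,$ the first term in the maximum vanishes and only $(t/n)^{\degthr(f)}$ survives; taking square roots gives~(\ref{eqn:disc-upper-thrdeg}).

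\textbf{Lower bound (\ref{eqn:disc-lower}).} Here I would combine the upper bound on randomized communication complexity from Theorem~\ref{thm:classical-upper-bound} with the standard discrepancy lower bound of Proposition~\ref{prop:disc-method}. Concretely, with $\gamma=1/W(f,d),$ Theorem~\ref{thm:classical-upper-bound} gives $R_{1/2-\gamma/2}(F)\leq d\log(n/t)+3,$ while Proposition~\ref{prop:disc-method} gives $\disc(F)\geq\gamma\cdot 2^{-R_{1/2-\gamma/2}(F)}.$ Substituting and simplifying produces $\disc(F)\geq \tfrac{1}{8W(f,d)}(t/n)^d,$ as desired.

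The main obstacle, and really the only nontrivial step, is the bookkeeping in (i)--(ii) above: one needs to recognize that the proof of Theorem~\ref{thm:pattern-matrix-method-small-bias} actually produces a matrix $\Psi$ that factors as $P\circ F$ with a bona fide probability distribution $P,$ since otherwise Proposition~\ref{prop:disc2spectral} does not apply. Once that structural observation is made, the rest is routine substitution into previously proved bounds.
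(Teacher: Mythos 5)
Your proposal matches the paper's proof almost verbatim: the upper bound is obtained by noting that the witness $\Psi$ from the proof of Theorem~\ref{thm:pattern-matrix-method-small-bias} factors as $P\circ F$ with $P$ a probability distribution (your checks (i)--(ii) are precisely what the paper extracts from~(\ref{eqn:weight-K-bounded,K-M-correl}), since $\|\Psi\|_1=1$ already certifies that the entries of $P$ sum to $1$), and then applying Proposition~\ref{prop:disc2spectral} together with~(\ref{eqn:weight-K-norm}); the specialization~(\ref{eqn:disc-upper-thrdeg}) and the lower bound~(\ref{eqn:disc-lower}) via Theorem~\ref{thm:classical-upper-bound} and Proposition~\ref{prop:disc-method} are also exactly the paper's steps. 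The only detail worth flagging is that Theorem~\ref{thm:classical-upper-bound} requires $d\geq\degthr(f)$, whereas~(\ref{eqn:disc-lower}) is claimed for all $d\geq0$; this is harmless because for $d<\degthr(f)$ one has $W(f,d)=\infty$ and the claimed bound is vacuous, but it deserves an explicit sentence.
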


\begin{proof}
The lower bound (\ref{eqn:disc-lower}) is immediate from
Theorem~\ref{thm:classical-upper-bound} and
Proposition~\ref{prop:disc-method}. For the upper
bound~(\ref{eqn:disc-upper}), construct the matrix $\Psi$ as in the
proof of Theorem~\ref{thm:pattern-matrix-method-small-bias}.
Then~(\ref{eqn:weight-K-bounded,K-M-correl}) shows that $\Psi= F\circ
P$ for a nonnegative matrix $P$ whose entries sum to~$1.$ As a
result,~(\ref{eqn:disc-upper}) follows from~(\ref{eqn:weight-K-norm})
and Proposition~\ref{prop:disc2spectral}.
Finally, (\ref{eqn:disc-upper-thrdeg}) follows by taking $d=\degthr(f)$ in 
(\ref{eqn:disc-upper}), since $W(f,d-1)=\infty$ in that case.
\qquad
\end{proof}

This settles Theorem~\ref{thm:main-discrepancy} from the introduction.
Theorem~\ref{thm:pattern-matrix-discrepancy} follows up and considerably
improves on our earlier result, the \emph{Degree/Discrepancy
Theorem}~\cite{sherstov07ac-majmaj}:

\begin{theorem}[Sherstov]\label{thm:deg2disc}
Let $f\colon\zoo^t\to\moo$ be given. Fix an integer $n\geq t.$ 
Let $M=[f(x|_S)]_{x,S},$ where the row
index $x$ ranges over $\zoon$ and the column index $S$ ranges over all
$t$-element subsets of $\{1,2,\dots,n\}.$ Then
\begin{align*}
 \disc(M) \leq \left(\frac{4\e t^2}{n\degthr(f)}\right)^{\degthr(f)/2}.
\end{align*}
\end{theorem}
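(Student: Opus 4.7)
The plan is to apply the generalized discrepancy method directly to $M$ in the spirit of Theorem~\ref{thm:pattern-matrix-discrepancy}, compensating for the fact that $M$ is indexed by \emph{all} $t$-element subsets of $[n]$ rather than the structured family $\VV(n,t)$, so that the clean Kronecker decomposition of Theorem~\ref{thm:pattern-spectrum} is no longer available.

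Set $d = \degthr(f)$ and invoke Theorem~\ref{thm:thresh-or-distr} to obtain a distribution $\mu$ on $\zoo^t$ with $\Exp_{z\sim\mu}[f(z)\chi_T(z)] = 0$ for all $|T| < d$, and put $\psi := f\mu$; then $\hat\psi(T) = 0$ for $|T| < d$, while $\sum_z|\psi(z)| = \sum_z\psi(z)f(z) = 1$. As the discrepancy witness, take the matrix $\Lambda$ of the same shape as $M$ defined by $\Lambda_{x,S} = c\,\psi(x|_S)$, where $c = (2^{n-t}\binom{n}{t})^{-1}$ is chosen so that $P_{x,S} := c\,\mu(x|_S)$ is a probability distribution on pairs $(x,S)$. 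One checks $\Lambda = M \circ P$, and Proposition~\ref{prop:disc2spectral} reduces the theorem to establishing $\|\Lambda\| \le (4\e t^2/(nd))^{d/2}/\sqrt{2^n\binom{n}{t}}$.

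To control $\|\Lambda\|$, I would expand $\psi = \sum_{|T|\ge d}\hat\psi(T)\chi_T$ and examine
\[
(\Lambda\tr\Lambda)_{S,S'}
= c^2 \cdot 2^n\!\sum_{\substack{T,T'\subseteq[t]\\ S_T\, =\, S'_{T'}}}\!\hat\psi(T)\,\hat\psi(T'),
\]
where $S_T$ denotes the $|T|$-element subset of $S$ selected by the positions in $T$ under the natural ordering of $S$. Since $|T|,|T'|\ge d$ is required for a nonzero term and $S_T = S'_{T'}$ forces $|S\cap S'|\ge d$, the matrix $\Lambda\tr\Lambda$ is supported on the high-intersection layers of the Johnson scheme. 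Combining the row-sum bound on the spectral norm, the coarse estimate $|\hat\psi(T)|\le 2^{-t}$ from Proposition~\ref{prop:fourier-coeff-bound}, and a count of pairs $(T',S')$ compatible with a given $(T,S)$ at each level $k = |T|$---namely $\binom{t}{k}\binom{n-k}{t-k}$---one gets, via $\binom{t}{k}\le(\e t/k)^k$ and $\binom{n-k}{t-k}/\binom{n}{t}\le(t/n)^k$, a contribution of order $(\e t^2/(kn))^k$ summed over $k\ge d$. The dominant term is $k = d$, and the final bound $(4\e t^2/(nd))^{d/2}$ emerges once the normalization factor $\sqrt{2^n\binom{n}{t}}$ is absorbed.

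The main obstacle is exactly this combinatorial step on the Johnson scheme. For a true pattern matrix the Fourier summands are pairwise orthogonal (equations~(\ref{eqn:aat})--(\ref{eqn:ata})), so Lemma~\ref{lem:disjoint-spectra} reads off the singular values exactly; here the analogous summands $[\chi_T(x|_S)]_{x,S}$ fail that orthogonality, and all cross-terms must be controlled simultaneously. The extra factor $(4\e t/d)^{d/2}$ beyond the sharper $(t/n)^{d/2}$ bound of Theorem~\ref{thm:pattern-matrix-discrepancy} is precisely the overhead of working over the unstructured family of all $t$-subsets of $[n]$, which is what the present paper's cleaner pattern-matrix argument ultimately removes.
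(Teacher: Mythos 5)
This theorem is stated in the paper only as a citation to the earlier work~\cite{sherstov07ac-majmaj}; the manuscript itself gives no proof, so there is nothing internal to compare against. Judged on its own terms, your argument is essentially the original Degree/Discrepancy proof and it is correct. You take the threshold-degree dual distribution $\mu$ from Theorem~\ref{thm:thresh-or-distr}, form $\psi=f\mu$, tile it into a witness matrix $\Lambda=M\circ P$, pass to the spectral norm via Proposition~\ref{prop:disc2spectral}, and because the unstructured family of all $t$-subsets destroys the orthogonality that Theorem~\ref{thm:pattern-spectrum} exploits, you fall back on the $\ell_\infty$ row-sum bound for $\|\Lambda\tr\Lambda\|$ together with a Johnson-scheme count. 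This is exactly the right substitute for Lemma~\ref{lem:disjoint-spectra}, and the arithmetic closes: with $c=(2^{n-t}\binom nt)^{-1}$ one gets
$\|\Lambda\tr\Lambda\|\le 2^{-n}\binom nt^{-2}\sum_{k\ge d}\binom tk\binom{n-k}{t-k}$,
and after multiplying by $2^n\binom nt$ the normalization cancels to leave
$\disc(M)^2\le\sum_{k\ge d}\binom tk\binom{n-k}{t-k}/\binom nt\le\sum_{k\ge d}(\e t^2/kn)^k$,
which the geometric tail estimate (plus the trivial case $4\e t^2/(nd)\ge1$) turns into the stated bound.

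Two small imprecisions worth tightening. First, the count you attribute to ``pairs $(T',S')$ compatible with a given $(T,S)$'' is really $\binom{n-k}{t-k}$; the extra $\binom tk$ arises only after you also sum over $T$ with $|T|=k$, so the phrase as written overcounts. Second, the claim that ``the dominant term is $k=d$'' needs the hypothesis $\e t^2/(dn)\le1/2$, which you should secure by noting the bound is vacuous when $4\e t^2/(nd)\ge1$; with that in hand the sum is at most $2(\e t^2/(dn))^d$ and the factor $\sqrt2\le2^d$ absorbs into the constant $4$. Neither affects correctness, and your closing observation---that the $(4\e t/d)^{d/2}$ overhead relative to Theorem~\ref{thm:pattern-matrix-discrepancy} is precisely the price of losing the Kronecker structure---is exactly the point the present paper makes when it supersedes this bound.
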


Note that (\ref{eqn:disc-upper-thrdeg}) is already stronger than 
Theorem~\ref{thm:deg2disc}. In Section~\ref{sec:app-discrepancy},
we will see an example when Theorem~\ref{thm:pattern-matrix-discrepancy}
gives an exponential improvement on Theorem~\ref{thm:deg2disc}.

Threshold weight is typically easier to analyze than the
approximate degree.  For completeness, however, we will now
supplement Theorem~\ref{thm:pattern-matrix-discrepancy} with an
alternate bound on the discrepancy of a pattern matrix in terms
of the approximate degree. 

\begin{theorem}
\label{thm:pattern-matrix-discrepancy-adeg}
Let $F$ be the $(n,t,f)$-pattern matrix, 
for a given function $f\colon \zoo^t\to\moo.$
Then for every $\gamma>0,$
\begin{align*}
\disc(F) \leq \gamma + \left(\frac tn\right)^{\deg_{1-\gamma}(f)/2}.
\end{align*}
\end{theorem}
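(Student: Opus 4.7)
The plan is to mimic the proof of Theorem~\ref{thm:pattern-matrix-discrepancy}, but this time using the dual characterization of the approximate degree (Theorem~\ref{thm:dual-approx}) rather than the dual characterization of threshold weight. Let $d=\deg_{1-\gamma}(f)$. Theorem~\ref{thm:dual-approx} furnishes a function $\psi\colon\zoo^t\to\Re$ with $\hat\psi(S)=0$ for $|S|<d$, $\sum_z|\psi(z)|=1$, and $\sum_z\psi(z)f(z)>1-\gamma$. Unlike in the threshold-weight proof, $\psi$ need not satisfy $\psi(z)f(z)\ge 0$ pointwise, so the $(n,t,c\psi)$-pattern matrix $\Psi$ (with $c=2^{-n}(n/t)^{-t}$) cannot be written as $F\circ P$ for an honest probability distribution $P$. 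This sign mismatch is the only obstacle, and it will be handled by a short triangle-inequality argument.

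To circumvent this, I would set $\mu(z)=|\psi(z)|$ (a probability distribution on $\zoo^t$) and $\sigma(z)=\psi(z)/\mu(z)\in\{-1,+1\}$ where $\mu(z)>0$. The correlation inequality $\sum_z\mu(z)\sigma(z)f(z)>1-\gamma$ translates to $\Pr_\mu[\sigma\ne f]<\gamma/2$. Let $P$ be the $(n,t,c\mu)$-pattern matrix; then $P$ is a probability distribution over the index set of $F$, and the entries of $P\circ F-\Psi$ are $c\mu(z)(f(z)-\sigma(z))$, which are nonzero only on the $(x,(V,w))$ for which $\sigma(x|_V\oplus w)\ne f(x|_V\oplus w)$. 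Summing over all indices and using that each $z\in\zoo^t$ is hit $2^{n-t}(n/t)^t 2^t$ times, one obtains
\[
\|P\circ F-\Psi\|_1 = 2c\cdot 2^{n-t}(n/t)^t 2^t\cdot\Pr_\mu[\sigma\ne f] = 2\Pr_\mu[\sigma\ne f]<\gamma.
\]

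For the spectral part, Proposition~\ref{prop:fourier-coeff-bound} gives $|\hat\psi(S)|\le 2^{-t}$ for all $S$, and since $\hat\psi(S)=0$ for $|S|<d$, Theorem~\ref{thm:pattern-spectrum} yields
\[
\|\Psi\|\;\le\;\sqrt{2^{n+t}(n/t)^t}\,\cdot c\cdot 2^{-t}(t/n)^{d/2},
\]
so that $\|\Psi\|\sqrt{|X|\,|Y|}\le (t/n)^{d/2}$, where $X=\zoon$ and $Y=\VV(n,t)\times\zoo^t$. For any $S\subseteq X$ and $T\subseteq Y$, the triangle inequality together with $|\mathbf 1_S\tr\Psi\mathbf 1_T|\le\|\Psi\|\sqrt{|X|\,|Y|}$ and $|\mathbf 1_S\tr(P\circ F-\Psi)\mathbf 1_T|\le\|P\circ F-\Psi\|_1$ give
\[
|\mathbf 1_S\tr(P\circ F)\mathbf 1_T|\;\le\;(t/n)^{d/2}+\gamma.
\]
Taking the maximum over $S,T$ bounds $\disc_P(F)$, and hence $\disc(F)$, by $\gamma+(t/n)^{d/2}$. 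The main (minor) obstacle is the sign-mismatch step described in the second paragraph; once that is absorbed into the $\ell_1$ error bound, the rest is a direct application of the pattern-matrix spectrum calculation and Proposition~\ref{prop:disc2spectral}.
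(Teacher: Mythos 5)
Your proof is correct and takes essentially the same route as the paper: both write $\Psi = H\circ P$ with $H$ a sign matrix (your $\sigma$ lifted to a pattern matrix) and $P=|\Psi|$ the associated distribution, bound $\disc_P(H)\le (t/n)^{d/2}$ via the spectral estimate from Theorem~\ref{thm:pattern-spectrum}, and absorb the sign mismatch by the triangle inequality, with $\|(F-H)\circ P\|_1 = 1-\langle F,\Psi\rangle < \gamma$. Your explicit counting of index hits is just the paper's one-line evaluation of this $\ell_1$ norm carried out at the level of the base function.
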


\begin{proof}
Let $d=\deg_{1-\gamma}(f)\geq1.$ Define $\epsilon=1-\gamma$ and
construct the matrix $\Psi$ as in the proof of Theorem~\ref{thm:main-cc}.
Then~(\ref{eqn:K-bounded,K-M-correl}) shows that $\Psi= H\circ P,$
where $H$ is a sign matrix and $P$ is a nonnegative matrix whose
entries sum to~$1.$ Viewing $P$ as a probability distribution, we
infer from~(\ref{eqn:K-norm}) and Proposition~\ref{prop:disc2spectral}
that
\begin{align}
	\disc_P(H) \leq \left(\frac tn\right)^{d/2}.
	\label{eqn:disc-P-H-adeg}
\end{align}
Moreover,
\begin{align}
  \disc_P(F) &\leq \disc_P(H) + \|(F-H)\circ P\|_1  \nonumber\\
  &=\disc_P(H) + 1 - \langle F,H\circ P\rangle  \nonumber\\
  &\leq \disc_P(H) + \gamma,
	\label{eqn:disc-P-F-adeg}
\end{align}
where the last step follows because $\langle F,\Psi\rangle
>\epsilon=1-\gamma$ by (\ref{eqn:K-bounded,K-M-correl}).
The proof is complete in view of (\ref{eqn:disc-P-H-adeg}) and
(\ref{eqn:disc-P-F-adeg}).
\qquad
\end{proof}

\section{Approximate rank and trace norm of pattern matrices}
\label{sec:approx-rank}
We will now use the results of the previous sections to analyze the
approximate rank and approximate trace norm of pattern matrices.
These notions were originally motivated by lower bounds on quantum
communication~\cite{yao93quantum, buhrman-dewolf01polynomials,
razborov03quantum}. However, they 
also arise in learning theory~\cite{colt07rankeps} and
are natural matrix-analytic quantities in their own right.  In
particular, Klivans and Sherstov~\cite{colt07rankeps} proved
exponential lower bounds on the approximate rank of disjunctions,
majority functions, and decision lists, with applications to agnostic
learning. In what follows, we broadly generalize these results to
\emph{any} functions with high approximate degree or high threshold
weight.

\begin{theorem}
Let $F$ be the $(n,t,f)$-pattern matrix, 
where $f\colon \zoo^t\to\moo$ is given.
Let $s=2^{n+t}(n/t)^t$ be the number of entries in $F.$
Then for every $\epsilon\in[0,1)$ and every $\delta\in[0,\epsilon],$
\begin{align}
\|F\|_{\Sigma,\delta} \geq (\epsilon - \delta) \left(\frac
nt\right)^{\degeps(f)/2}\sqrt s
\label{eqn:bounded-error-trace-norm}
\end{align}
and
\begin{align}
\rk_\delta F \geq 
\left(\frac{\epsilon - \delta}{1 + \delta}\right)^2 
\left(\frac nt\right)^{\degeps(f)}.
\label{eqn:bounded-error-approx-rank}
\end{align}
\label{thm:bounded-error-approx-rank}
\end{theorem}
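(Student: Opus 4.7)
The plan is to recycle verbatim the dual witness $\Psi$ constructed in the proof of Theorem~\ref{thm:main-cc} and feed it into the two matrix-analytic bounds in Section~\ref{sec:matanal}, namely Propositions~\ref{prop:approximate-trace-norm} and~\ref{prop:approx-rank-approx-trace-norm}. The approximate trace norm bound~(\ref{eqn:bounded-error-trace-norm}) is the principal quantity; the approximate rank bound~(\ref{eqn:bounded-error-approx-rank}) then follows as a one-line corollary using the fact that $F$ is a sign matrix.

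Concretely, I would begin by setting $d = \degeps(f)$ and invoking Theorem~\ref{thm:dual-approx} to obtain a function $\psi\colon \zoo^t\to\Re$ that is orthogonal to all characters of degree less than $d$, has $\ell_1$-norm equal to $1$, and correlates with $f$ at level strictly exceeding $\epsilon$. I then form the $(n,t,2^{-n}(n/t)^{-t}\psi)$-pattern matrix $\Psi$, exactly as in the proof of Theorem~\ref{thm:main-cc}. The three conclusions carried over from that proof are:
\begin{align*}
\|\Psi\|_1 = 1,\qquad \langle F,\Psi\rangle > \epsilon,\qquad
\|\Psi\| \leq \left(\frac{t}{n}\right)^{d/2}\!\left(2^{n+t}\left(\frac{n}{t}\right)^{t}\right)^{-1/2} = \left(\frac{t}{n}\right)^{d/2}\!\frac{1}{\sqrt s},
\end{align*}
where the spectral-norm bound uses Theorem~\ref{thm:pattern-spectrum} together with the vanishing of the low-order Fourier coefficients of $\psi$ and the crude bound $|\hat\psi(S)|\leq 2^{-t}$ from Proposition~\ref{prop:fourier-coeff-bound}.

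To obtain~(\ref{eqn:bounded-error-trace-norm}), I apply Proposition~\ref{prop:approximate-trace-norm} to $F$ with error parameter $\delta$ and dual variable $\Psi$:
\begin{align*}
\|F\|_{\Sigma,\delta}
\;\geq\; \frac{\langle F,\Psi\rangle - \delta\|\Psi\|_1}{\|\Psi\|}
\;>\; \frac{\epsilon - \delta}{(t/n)^{d/2}/\sqrt s}
\;=\; (\epsilon-\delta)\left(\frac{n}{t}\right)^{d/2}\!\sqrt s.
\end{align*}
For~(\ref{eqn:bounded-error-approx-rank}), since $F$ is a sign matrix with $s$ entries, $\sum_{i,j}(|F_{ij}|+\delta)^2 = s(1+\delta)^2$, so Proposition~\ref{prop:approx-rank-approx-trace-norm} yields
\begin{align*}
\rk_\delta F \;\geq\; \frac{(\|F\|_{\Sigma,\delta})^2}{s(1+\delta)^2} \;\geq\; \left(\frac{\epsilon-\delta}{1+\delta}\right)^{\!2}\!\left(\frac{n}{t}\right)^{\!d}.
\end{align*}

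Honestly, there is no real obstacle in this proof: all the hard work was done in constructing the dual witness $\Psi$ back in the proof of Theorem~\ref{thm:main-cc}, and the two propositions from Section~\ref{sec:matanal} are tailored to convert such a witness into trace-norm and rank lower bounds. The only point requiring a moment of attention is the bookkeeping that the number of rows times the number of columns of $F$ is exactly $s = 2^{n+t}(n/t)^t$, so that the factor of $\sqrt s$ coming from the singular-value bound on $\|\Psi\|$ cancels cleanly against the $s(1+\delta)^2$ normalization in the rank step.
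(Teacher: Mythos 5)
Your proposal is essentially identical to the paper's proof: recycle the dual witness $\Psi$ from Theorem~\ref{thm:main-cc}, apply Proposition~\ref{prop:approximate-trace-norm} for the trace-norm bound, and pass to Proposition~\ref{prop:approx-rank-approx-trace-norm} for the rank bound. The only detail the paper adds is the observation that one may assume $\degeps(f)\geq 1$ (so that Theorem~\ref{thm:dual-approx} applies), since otherwise $f$ is constant and both inequalities hold trivially.
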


\begin{proof}
We may assume that $\degeps(f)\geq1,$ since otherwise $f$ is a
constant function and the claims hold trivially.
Construct $\Psi$ as in the proof of
Theorem~\ref{thm:main-cc}. Then the claimed
lower bound on $\|F\|_{\Sigma,\delta}$ follows from
(\ref{eqn:K-bounded,K-M-correl}), (\ref{eqn:K-norm}), and
Proposition~\ref{prop:approximate-trace-norm}.
Finally, (\ref{eqn:bounded-error-approx-rank}) follows from
(\ref{eqn:bounded-error-trace-norm}) and
Proposition~\ref{prop:approx-rank-approx-trace-norm}.
\qquad
\end{proof}

We prove an additional lower bound in the case of small-bias approximation.

\begin{theorem}
Let $F$ be the $(n,t,f)$-pattern matrix, 
where $f\colon \zoo^t\to\moo$ is given.
Let $s=2^{n+t}(n/t)^t$ be the number of entries in $F.$
Then for every $\gamma\in(0,1)$ and every integer $d\geq1,$
\begin{align}
\|F\|_{\Sigma,1-\gamma} &\geq \gamma 
		\min\left\{\left(\frac nt\right)^{d/2},
		\left(\frac{W(f,d-1)}{2t}\right)^{1/2}
				 \right\} \sqrt s
\label{eqn:small-bias-trace-norm}
\intertext{and}
\rk_{1-\gamma} F &\geq 
		\left(\frac{\gamma}{2-\gamma}\right)^2
		\min\left\{\left(\frac nt\right)^d,
		\frac{W(f,d-1)}{2t}
				 \right\}.
\label{eqn:small-bias-approx-rank}
\intertext{In particular,}
\|F\|_{\Sigma,1-\gamma} &\geq \gamma 
		 \left(\frac nt\right)^{\degthr(f)/2} \sqrt s
\label{eqn:degthr-trace-norm}
\intertext{and}
\rk_{1-\gamma} F &\geq 
		\left(\frac{\gamma}{2-\gamma}\right)^2
			\left(\frac nt\right)^{\degthr(f)}.
\label{eqn:degthr-approx-rank}
\end{align}
\label{thm:small-bias-approx-rank}
\end{theorem}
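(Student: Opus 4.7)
The plan is to mirror the proof of Theorem~\ref{thm:bounded-error-approx-rank} but start from the threshold-weight construction of Theorem~\ref{thm:pattern-matrix-method-small-bias} instead of the approximate-degree construction of Theorem~\ref{thm:main-cc}. Concretely, I would first dispose of the degenerate cases (when $d \leq \degthr(f)$ and $W(f,d-1) = \infty$, the minimum over the two terms collapses to $(n/t)^{d/2}$, so the special cases in (\ref{eqn:degthr-trace-norm}) and (\ref{eqn:degthr-approx-rank}) will follow immediately by taking $d = \degthr(f)$ at the end).

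For the main bounds (\ref{eqn:small-bias-trace-norm}) and (\ref{eqn:small-bias-approx-rank}), let $\Psi$ be the matrix built in the proof of Theorem~\ref{thm:pattern-matrix-method-small-bias} from the distribution $\mu$ supplied by Theorem~\ref{thm:dual-weight}. That construction already yields the three ingredients I need: $\|\Psi\|_1 = 1$, $\langle F, \Psi\rangle = 1$, and
\begin{equation*}
\|\Psi\| \;\leq\; \max\left\{\left(\frac{t}{n}\right)^{d/2},\; \left(\frac{2t}{W(f,d-1)}\right)^{1/2}\right\} \left(2^{n+t}\left(\frac{n}{t}\right)^t\right)^{-1/2}.
\end{equation*}
Plugging these into Proposition~\ref{prop:approximate-trace-norm} with error parameter $1-\gamma$ gives
\begin{equation*}
\|F\|_{\Sigma, 1-\gamma} \;\geq\; \frac{\langle F,\Psi\rangle - (1-\gamma)\|\Psi\|_1}{\|\Psi\|} \;=\; \frac{\gamma}{\|\Psi\|},
\end{equation*}
and substituting the bound on $\|\Psi\|$ (with $s = 2^{n+t}(n/t)^t$) yields exactly (\ref{eqn:small-bias-trace-norm}).

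To pass from the trace-norm bound to the rank bound, I would invoke Proposition~\ref{prop:approx-rank-approx-trace-norm}. Since $F$ is a sign matrix with $s$ entries, the denominator there evaluates to $\sum_{i,j}(|F_{ij}| + 1-\gamma)^2 = s(2-\gamma)^2$, so
\begin{equation*}
\rk_{1-\gamma} F \;\geq\; \frac{(\|F\|_{\Sigma, 1-\gamma})^2}{s(2-\gamma)^2} \;\geq\; \left(\frac{\gamma}{2-\gamma}\right)^2 \min\left\{\left(\frac{n}{t}\right)^d,\; \frac{W(f,d-1)}{2t}\right\},
\end{equation*}
which is (\ref{eqn:small-bias-approx-rank}). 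Finally, (\ref{eqn:degthr-trace-norm}) and (\ref{eqn:degthr-approx-rank}) are obtained by setting $d = \degthr(f)$, since $W(f, \degthr(f)-1) = \infty$ eliminates the second term in the minimum.

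There is no real obstacle here: all the heavy lifting — the dual witness from threshold weight, the spectral bound on pattern matrices via Theorem~\ref{thm:pattern-spectrum}, and the two propositions relating $\|\cdot\|_{\Sigma,\epsilon}$ and $\rk_\epsilon$ to inner products and singular values — is already in place. The only thing to be careful about is tracking the factor $(2-\gamma)^2$ correctly (rather than a loose $4$), which comes from the exact evaluation of $\sum_{i,j}(|F_{ij}| + 1-\gamma)^2$ on a sign matrix. If anything counts as delicate, it is verifying that the same $\Psi$ produced by Theorem~\ref{thm:pattern-matrix-method-small-bias} simultaneously witnesses high correlation and small spectral norm; but this is exactly what was already established in that proof, so the present theorem follows by reusing its output through the two propositions.
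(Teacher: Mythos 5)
Your proposal is correct and follows exactly the same route as the paper's proof: construct the matrix $\Psi$ from the proof of Theorem~\ref{thm:pattern-matrix-method-small-bias}, feed the three facts $\|\Psi\|_1=1$, $\langle F,\Psi\rangle=1$, and the spectral-norm bound~(\ref{eqn:weight-K-norm}) into Proposition~\ref{prop:approximate-trace-norm}, then pass to rank via Proposition~\ref{prop:approx-rank-approx-trace-norm}, and finally specialize $d=\degthr(f)$. The only cosmetic slip is attributing the distribution $\mu$ solely to Theorem~\ref{thm:dual-weight}; the construction in Theorem~\ref{thm:pattern-matrix-method-small-bias} actually invokes Theorem~\ref{thm:thresh-or-distr} when $d\le\degthr(f)$ and Theorem~\ref{thm:dual-weight} when $d>\degthr(f)$, but this does not affect any step.
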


\begin{proof}
Construct $\Psi$ as in the proof of
Theorem~\ref{thm:pattern-matrix-method-small-bias}. Then the claimed
lower bound on $\|F\|_{\Sigma,\delta}$ follows from
(\ref{eqn:weight-K-bounded,K-M-correl}), (\ref{eqn:weight-K-norm}),
and Proposition~\ref{prop:approximate-trace-norm}.
Now (\ref{eqn:small-bias-approx-rank}) follows from
(\ref{eqn:small-bias-trace-norm}) and
Proposition~\ref{prop:approx-rank-approx-trace-norm}.
Finally, (\ref{eqn:degthr-trace-norm}) and (\ref{eqn:degthr-approx-rank})
follow by taking $d=\degthr(f)$ in (\ref{eqn:small-bias-trace-norm})
and (\ref{eqn:small-bias-approx-rank}), respectively, since
$W(f,d-1)=\infty$ in that case.
\qquad
\end{proof}

Theorems~\ref{thm:bounded-error-approx-rank} 
and~\ref{thm:small-bias-approx-rank} settle
Theorem~\ref{thm:main-approx-rank} from the introduction.

Recall that Theorem~\ref{thm:pattern-spectrum} gives an easy way to
calculate the trace norm and rank of a pattern matrix. 
In particular, it is straightforward to verify that the lower bounds 
in~(\ref{eqn:bounded-error-approx-rank}) 
and (\ref{eqn:small-bias-approx-rank}) 
are close to optimal for various
choices of $\epsilon,\delta,\gamma.$ For example, one has
$\|F-A\|_{\infty}\leq 1/3$ by taking $F$ and $A$ to be the $(n,t,f)$- and
$(n,t,\phi)$-pattern matrices, where $\phi\colon\zoo^t\to\Re$ is any
polynomial of degree $\deg_{1/3}(f)$ with $\|f-\phi\|_\infty\leq1/3.$

\section{Application: quantum complexity of symmetric functions}
\label{sec:razborovs-result}

As an illustrative application of the pattern matrix method, we now give a
short and elementary proof of Razborov's optimal lower bounds for every
predicate $D\colon \zodn\to\moo.$
We first solve the problem for all predicates $D$ that change value
close to~$0.$ Extension to the general case will require an additional
step.

\begin{theorem}
Let $D\colon \zodn\to\moo$ be a given predicate. 
Suppose that $D(\ell)\ne D(\ell-1)$ 
for some $\ell\leq \oneeighth n.$ 
Then \[ \qcc(D) \geq \Omega(\sqrt{n\ell}). \]
\label{thm:quantum-l-small}
\end{theorem}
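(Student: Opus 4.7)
The plan is to exhibit a suitable pattern matrix as a submatrix of the communication matrix of $D$ and then apply the pattern matrix method (Theorem~\ref{thm:main-cc}) together with Paturi's approximate-degree estimate (Theorem~\ref{thm:paturi}). Set $N=n/2$ and $t=n/4$, assuming the needed divisibility (which costs nothing asymptotically), and take the base function $f_0\colon\zoo^t\to\moo$ given by $f_0(z)=D(|z|)$. Let $F$ be the resulting $(N,t,f_0)$-pattern matrix. The aim is then to show $\qcc(D)\ge\qcc(F)=\Omega(\sqrt{n\ell})$.

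First I would embed $F$ as a submatrix of $[D(|u\wedge v|)]_{u,v\in\zoon}$ via the standard doubling trick. To a row index $x\in\zoo^N$ assign $u\in\zoon$ by $u_{2j-1}=x_j$ and $u_{2j}=1-x_j$; to a column index $(V,w)$ with $V=\{v_1,\dots,v_t\}$ assign $v\in\zoon$ by $v_{2v_i-1}=\neg w_i$, $v_{2v_i}=w_i$, and zeros elsewhere. A direct computation yields $|u\wedge v|=\sum_i(x_{v_i}\oplus w_i)=|x|_V\oplus w|$, so $D(|u\wedge v|)=f_0(x|_V\oplus w)$; both encodings are injective, confirming the submatrix embedding.

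Next I would lower-bound $\adeg(f_0)$ using Paturi. Since $f_0$ is the symmetric function on $t$ variables associated with $D$ restricted to $\{0,1,\dots,t\}$, Theorem~\ref{thm:paturi} gives $\adeg(f_0)=\Theta(\sqrt{t\,\ell_0(f_0)}+\sqrt{t\,\ell_1(f_0)})$. The change $D(\ell)\ne D(\ell-1)$ forces $\ell\le\ell_0(f_0)$ or $\ell\ge t-\ell_1(f_0)+1$, and the second alternative is ruled out since $\ell\le n/8\le\lfloor t/2\rfloor\le t-\ell_1(f_0)$, using the Paturi constraint $\ell_1(f_0)\le\lceil t/2\rceil$. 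Hence $\ell_0(f_0)\ge\ell$ and $\adeg(f_0)=\Omega(\sqrt{t\ell})=\Omega(\sqrt{n\ell})$.

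Finally, I would apply equation~(\ref{eqn:pattern-matrix-bounded-error}) of Theorem~\ref{thm:main-cc} to $F$: with $\log(N/t)=1$, one obtains $Q^*_{1/7}(F)\ge\tfrac14\adeg(f_0)-3=\Omega(\sqrt{n\ell})$. Standard error reduction relates $Q^*_{1/7}$ and $Q^*_{1/3}$ up to a constant factor, so $\qcc(F)=\Omega(\sqrt{n\ell})$, and the submatrix embedding then yields $\qcc(D)\ge\qcc(F)=\Omega(\sqrt{n\ell})$. The main substantive point to verify is the submatrix embedding via the doubling trick and the case analysis for Paturi's $\ell_0,\ell_1$ when $\ell$ approaches the threshold $t/2$; everything else is a streamlined invocation of the machinery developed earlier in the paper.
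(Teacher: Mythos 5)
Your proof is correct and follows the same route as the paper: define $f_0(z)=D(|z|)$ on $t\approx n/4$ variables, embed the $(n/2,n/4,f_0)$-pattern matrix as a submatrix of $[D(|u\wedge v|)]_{u,v}$, bound $\adeg(f_0)=\Omega(\sqrt{n\ell})$ via Paturi, and invoke Theorem~\ref{thm:main-cc}. The only difference is that you spell out the doubling embedding and the $\ell_0/\ell_1$ case check explicitly, whereas the paper leaves these routine verifications to the reader.
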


\begin{proof}
It suffices to show that $Q^*_{1/7}(D)\geq\Omega(\sqrt{n\ell}).$
Define $f\colon \zoo^{\lfloor n/4\rfloor}\to\moo$ by $f(z)=D(|z|).$
Then $\adeg(f) \geq \Omega(\sqrt{n\ell})$ by Theorem~\ref{thm:paturi}.
Theorem~\ref{thm:main-cc} implies that
\[ Q^*_{1/7}(F) \geq \Omega(\sqrt{n\ell}), \]
where $F$ is the $(2\lfloor n/4\rfloor,\lfloor n/4\rfloor,f)$-pattern
matrix.  Since $F$ occurs as a submatrix of $[D(|x\wedge y|)]_{x,y},$  
the proof is complete.
\qquad
\end{proof}

The remainder of this section is a simple if tedious exercise in
shifting and padding. We note that Razborov's proof concludes in a
similar way (see~\cite{razborov03quantum}, beginning of Section~5).

\begin{theorem}
Let $D\colon \zodn\to\moo$ be a given predicate. 
Suppose that $D(\ell)\ne D(\ell-1)$ 
for some $\ell>\oneeighth n.$ Then 
\begin{equation}
\qcc(D) \geq c(n-\ell)
\label{eqn:quantum-l-large}
\end{equation}
for some absolute constant $c>0.$
\label{thm:quantum-l-large}
\end{theorem}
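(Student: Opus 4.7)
My plan is to reduce to the already-established Theorem~\ref{thm:quantum-l-small} by a padding argument. If $D$ has its change point $\ell$ at a large fraction of $n$, I restrict attention to inputs in which Alice and Bob both carry $1$'s in the last $n-m$ coordinates. On such inputs $|x\wedge y|$ is shifted by $n-m$, so the induced predicate $D'(t) = D(t + (n-m))$ on $\{0,1,\dots,m\}$ inherits the change point, but at position $\ell' = \ell - (n-m)$, which sits at a \emph{small} fraction of $m$ provided $m$ is chosen appropriately.

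Concretely, I would take $m$ of order $\tfrac{8}{7}(n-\ell)$ and verify three things: (i) $m \le n$, which follows from the hypothesis $\ell > n/8$; (ii) $\ell' \ge 1$, which reduces to $m \ge n - \ell + 1$; and (iii) $\ell' \le m/8$, which reduces to $m \le 8(n-\ell)/7$. All three conditions are linear in $m$ and leave a nonempty window of length $\Theta(n-\ell)$. With such an $m$, Theorem~\ref{thm:quantum-l-small} applies to $D'$ on $m$ bits and yields
\begin{align*}
\qcc(D') \;\geq\; \Omega\bigl(\sqrt{m\,\ell'}\bigr) \;=\; \Omega(n-\ell),
\end{align*}
since both $m$ and $\ell'$ are $\Theta(n-\ell)$.

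The last step is to transfer this lower bound back to $D$. Given any quantum protocol $P$ computing $D$ on $n$-bit pairs with error $1/3$, Alice and Bob can compute $D'$ on shorter inputs $(x',y')\in\zoo^m\times\zoo^m$ by privately appending $1^{n-m}$ to each input and invoking $P$; the padded AND has weight $|x'\wedge y'| + (n-m)$, so the output equals $D'(|x'\wedge y'|)$ with the same error probability. Therefore $\qcc(D) \ge \qcc(D')$, which together with the preceding bound establishes~(\ref{eqn:quantum-l-large}).

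The only real subtlety lies in the arithmetic of simultaneously satisfying conditions (i)--(iii) when $n-\ell$ is very small (say $n-\ell < 7$), where the window for $m$ may collapse. In that regime, however, the claimed bound $c(n-\ell)$ is an absolute constant, so the inequality holds trivially for any sufficiently small $c$, since $D$ is nonconstant and hence $\qcc(D)\geq 1$.
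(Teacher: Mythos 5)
Your proposal is correct and follows essentially the same route as the paper: fix the last $n-m$ coordinates of both inputs to $1$, observe the resulting shifted predicate has its jump at $\ell'=\ell-(n-m)$, choose $m$ near $\tfrac{8}{7}(n-\ell)$ so that $\ell'\le m/8$ and both $m,\ell'=\Theta(n-\ell)$, and apply Theorem~\ref{thm:quantum-l-small}. The paper phrases the same restriction in terms of $k=n-m$ fixed bits and a specific choice $k=\ell-\lfloor\tfrac{1}{7}(n-\ell)\rfloor$, and likewise handles the degenerate small-$(n-\ell)$ case by absorbing it into the constant $c$, exactly as you do.
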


\begin{proof}
Consider the communication problem of computing $D(|x\wedge y|)$
when the last $k$ bits in $x$ and $y$ are fixed to $1.$ In other
words, the new problem is to compute $D_{k}(|x'\wedge y'|),$ where
$x',y'\in\zoo^{n-k}$ and the predicate $D_{k}\colon \{0,1,\dots,n-k\}\to\moo$
is given by $D_{k}(i) \equiv D(k+i).$
Since the new problem is a restricted version of the original, we have
\begin{equation}
\qcc(D) \geq \qcc(D_{k}).
\label{eqn:shift-cc}
\end{equation}
We complete the proof by placing a lower bound on $\qcc(D_{k})$ for 
\[ k   =    \ell - 
\left\lfloor \frac{\alpha}{1-\alpha}\cdot (n-\ell)\right\rfloor,\]
where $\alpha=\oneeighth.$ Note that 
$k$ is an integer between $1$ and $\ell$ (because $\ell>\alpha
n$).  The equality $k=\ell$ occurs if and only if $\big\lfloor
\frac{\alpha}{1-\alpha}(n-\ell)\big\rfloor=0,$ in which case
(\ref{eqn:quantum-l-large}) holds trivially for 
$c$ suitably small. Thus, we can
assume that $1\leq k\leq \ell-1,$ in which case $D_{k}(\ell-k)\ne
D_{k}(\ell-k-1)$ and $\ell-k \leq \alpha (n-k).$ Therefore,
Theorem~\ref{thm:quantum-l-small} is applicable to $D_{k}$ and
yields:
\begin{equation}
\qcc(D_{k}) \geq C\sqrt{(n-k)(\ell-k)},
\end{equation}
where $C>0$ is an absolute constant. Calculations reveal:
\begin{equation}
n-k = \left\lfloor \frac{1}{1-\alpha}\cdot (n-\ell) \right\rfloor,
\qquad\quad
\ell-k = \left\lfloor \frac{\alpha}{1-\alpha}\cdot (n-\ell) \right\rfloor.
\label{eqn:l-k-lower}
\end{equation}
The theorem is now immediate from 
(\ref{eqn:shift-cc})--(\ref{eqn:l-k-lower}).
\qquad
\end{proof}

Together, Theorems~\ref{thm:quantum-l-small} and~\ref{thm:quantum-l-large}
give the main result of this section:

\begin{restatetheorem}{thm:razborov03quantum}
Let $D\colon\zodn\to\moo.$ Then
\[ Q^*_{1/3}(D) \geq
\Omega(\sqrt{n\ell_0(D)}+\ell_1(D)), \]
where $\ell_0(D)\in\{0,1,\dots,\lfloor n/2\rfloor\}$ and
$\ell_1(D)\in\{0,1,\dots,\lceil n/2\rceil\}$
are the smallest integers such that $D$ is constant in the range
$[\ell_0(D),n-\ell_1(D)].$ 
\end{restatetheorem}

\begin{proof}
If $\ell_0(D)\ne0,$ set $\ell   =       \ell_0(D)$ and note that
$D(\ell)\ne D(\ell-1)$ by definition. One of Theorems~\ref{thm:quantum-l-small}
and~\ref{thm:quantum-l-large} must be applicable, and therefore
$\qcc(D)\geq \min\{ \Omega(\sqrt{n\ell}),\; \Omega(n-\ell)\}.$ 
Since $\ell \leq n/2,$ this simplifies to
\begin{equation}
\qcc(D) \geq \Omega(\sqrt{n\ell_0(D)}).  \label{eqn:l-1}
\end{equation}

If $\ell_1(D)\ne 0,$ set $\ell   =        n-\ell_1(D)+1\geq n/2$
and note that $D(\ell)\ne D(\ell-1)$ as before. By
Theorem~\ref{thm:quantum-l-large},
\begin{equation}
\qcc(D) \geq \Omega\left(\ell_1(D)\right).   \label{eqn:l-2}
\end{equation}
The theorem follows from (\ref{eqn:l-1}) and (\ref{eqn:l-2}).
\qquad
\end{proof}

\section{Application: discrepancy of constant-depth circuits}
\label{sec:app-discrepancy}

As another application of the pattern matrix method, we revisit the
discrepancy of $\AC^0,$ the class of polynomial-size constant-depth
circuits with AND, OR, NOT gates. In an earlier
work~\cite{sherstov07ac-majmaj}, we obtained the first exponentially
small upper bound on the discrepancy of a function in $\AC^0,$ with
applications to threshold circuits.  Independently, Buhrman et
al.~\cite{buhrman07pp-upp} exhibited another function in $\AC^0$
with exponentially small discrepancy.  We revisit these two discrepancy
bounds below, considerably sharpening the bound
in~\cite{sherstov07ac-majmaj} and giving a new and simple proof
of the bound in~\cite{buhrman07pp-upp}.

Consider the function $\mip_m\colon\zoo^{4m^3}\to\moo$ given by
\begin{align*}
\mip_m(x) = \bigvee_{i=1}^m\bigwedge_{j=1}^{4m^2}x_{ij}.
\end{align*}
This function was originally defined and studied by Minsky and
Papert~\cite{minsky88perceptrons} 
in their seminal monograph on perceptrons. Using this function
and the Degree/Discrepancy Theorem (Theorem~\ref{thm:deg2disc}),
an upper bound of $\exp\{-\Omega(n^{1/5})\}$ was derived
in~\cite{sherstov07ac-majmaj} on the discrepancy of an explicit
$\AC^0$ circuit $f\colon\zoon\times\zoon\to\moo$ of depth~$3.$
We will now sharpen that bound to
$\exp\{-\Omega(n^{1/3})\}.$

\begin{restatetheorem}{thm:main-mip}
Let $f(x,y)=\mip_m(x\vee y).$ Then 
\begin{align*}
\disc(f) = \exp\{-\Omega(m)\}. 
\end{align*}
\end{restatetheorem}

\begin{proof}
Put $d=\lfloor m/2\rfloor.$ A well-known result of Minsky and
Papert~\cite{minsky88perceptrons} states that $\degthr(\mip_d)\geq d.$
Since the $(8d^3,4d^3,\mip_d)$-pattern matrix is a submatrix of
$[f(x,y)]_{x,y},$ the proof is complete in view of 
equation (\ref{eqn:disc-upper-thrdeg}) of
Theorem~\ref{thm:pattern-matrix-discrepancy}.
\qquad
\end{proof}

We now turn to the result of Buhrman et al.
The ODD-MAX-BIT function $\OMB_n\colon\zoon\to\moo,$
due to Beigel~\cite{beigel94perceptrons}, is given by 
\begin{align}
\OMB_n(x) = \sign\left(1 + \sum_{i=1}^n (-2)^i x_i\right).
\label{eqn:omb-def}
\end{align}
It is straightforward to compute $\OMB_n$ by a linear-size DNF
formula and even a decision list.  In particular,
$\OMB_n$ belongs to the class $\AC^0.$ Buhrman et
al.~\cite[\S3.2]{buhrman07pp-upp} proved the following result.

\begin{theorem}[Buhrman et al.]
Let $f(x,y)=\OMB_n(x\wedge y).$ Then 
\begin{align*}
\disc(f) = \exp\{-\Omega(n^{1/3})\}.
\end{align*}
   \label{thm:BVW}
\end{theorem}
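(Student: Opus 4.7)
The plan is to embed an $(n/4, t, \OMB_t)$-pattern matrix as a submatrix of $[f(x,y)]_{x,y \in \zoon}$ and then apply (\ref{eqn:disc-upper}) of Theorem~\ref{thm:pattern-matrix-discrepancy} together with a known threshold-weight lower bound for $\OMB_t$.

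The embedding exploits the fact that, since the weights $(-2)^i$ in (\ref{eqn:omb-def}) grow geometrically, $\OMB_N(z) = (-1)^{k^*}$ where $k^* = \max\{i : z_i = 1\}$ (with $k^* = 0$ when $z = 0$). First I would partition a set of $n/2$ distinguished positions of $[n]$ into super-blocks $B_1 < B_2 < \cdots < B_t$ of size $n/(2t)$ each, arranged so that every position of $B_k$ is $\equiv k \pmod{2}$; concretely, take $B_k$ to consist of every other position of the $k$-th length-$n/t$ window of $[n]$, aligned to the parity of $k$. Label the positions of $B_k$ by pairs $(j, b)$ with $j$ ranging over the $k$-th block of a partition of $[n/4]$ into $t$ equal blocks and $b \in \zoo$. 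For a pattern-matrix row $x \in \zoo^{n/4}$, put $x'(x)_{(j,b)} = [x_j = b]$ on distinguished positions and $0$ elsewhere; for a pattern-matrix column $(V, w) \in \VV(n/4, t) \times \zoo^t$, let $y'(V, w)$ be the characteristic vector of $\{(V_k, \neg w_k) : 1 \leq k \leq t\}$. Inside $B_k$, $x'(x) \wedge y'(V, w)$ is nonzero only at position $(V_k, \neg w_k)$, where it equals $[x_{V_k} = \neg w_k] = x_{V_k} \oplus w_k$. Since all positions of $B_k$ share the parity of $k$, it follows that $\OMB_n(x'(x) \wedge y'(V, w)) = (-1)^{\max\{k : x_{V_k} \oplus w_k = 1\}} = \OMB_t(x|_V \oplus w)$, giving the desired submatrix.

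Since discrepancy is monotone under taking submatrices, $\disc(f)$ is at most the discrepancy of the $(n/4, t, \OMB_t)$-pattern matrix. Applying (\ref{eqn:disc-upper}) with $t = n/8$ and $d = c\, n^{1/3}$ for a small constant $c > 0$, the term $(t/(n/4))^d = 2^{-d}$ is $\exp(-\Omega(n^{1/3}))$, and invoking Beigel's classical bound $W(\OMB_t, d-1) \geq \exp(\Omega(t^{1/3}))$ (valid in the regime $d = \Theta(t^{1/3})$) gives $2t/W(\OMB_t, d-1) = \exp(-\Omega(n^{1/3}))$ as well, completing the proof. The main obstacle is the embedding step, which must faithfully simulate the pattern matrix's XOR mask $\oplus w$ using only the pointwise AND in $\OMB_n(x \wedge y)$; this is accomplished by duplicating each of Alice's bits alongside its negation, letting Bob select the appropriate copy via his choice of $w_k$, while the super-block parity convention transfers the sign structure of $\OMB_t$ to the ambient $\OMB_n$.
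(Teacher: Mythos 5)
Your proof is correct and follows the paper's approach: embed a pattern matrix built from $\OMB$ inside $[\OMB_n(x\wedge y)]_{x,y}$, invoke Beigel's threshold-weight lower bound, and apply inequality~(\ref{eqn:disc-upper}) of Theorem~\ref{thm:pattern-matrix-discrepancy}. The paper merely asserts the embedding, whereas you construct it explicitly---duplicating each of Alice's bits alongside its negation so that Bob's column index realizes the XOR mask via pointwise AND, and routing logical coordinate $k$ through a super-block of positions all of parity $k\bmod 2$ so that $\OMB_n$'s alternating sign structure correctly reproduces that of $\OMB_t$---which is exactly the right way to fill the gap; your parameter choice (a $(n/4,\,n/8,\,\OMB_{n/8})$-pattern matrix occupying $n/2$ positions of $[n]$) is also arguably more careful than the paper's stated $m=\lfloor n/4\rfloor$, which under the natural embedding would want about $2n$ positions, though this constant-factor issue does not affect the $\exp\{-\Omega(n^{1/3})\}$ conclusion.
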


Using the results of this paper, we can give a short alternate proof of
this theorem.

\begin{proof}
Put $m=\lfloor n/4\rfloor.$ A well-known result due to
Beigel~\cite{beigel94perceptrons} shows that $W(\OMB_m,
cm^{1/3})\geq\exp(cm^{1/3})$ for some absolute constant $c>0.$
Since the $(2m,m,\OMB_m)$-pattern matrix is a submatrix of
$[f(x,y)]_{x,y},$ the proof is complete by
Theorem~\ref{thm:pattern-matrix-discrepancy}.
\qquad
\end{proof}

\begin{remark}{\rm
The above proofs illustrate that the characterization of the
discrepancy of pattern matrices in this paper
(Theorem~\ref{thm:pattern-matrix-discrepancy}) is a substantial
improvement on our earlier result (Theorem~\ref{thm:deg2disc}). In
particular, the representation (\ref{eqn:omb-def}) makes it clear
that $\degthr(\OMB_n)=1$ and therefore Theorem~\ref{thm:deg2disc}
cannot yield an upper bound better than $n^{-\Omega(1)}$ on the
discrepancy of $\OMB_n(x\wedge y).$
Theorem~\ref{thm:pattern-matrix-discrepancy}, on the other hand,
gives an exponentially better upper bound.}
\end{remark}

It is well-known~\cite{GHR92, hajnal93threshold-const-depth,
nisan93threshold} that the discrepancy of a function $f$ implies a
lower bound on the size of depth-$2$ majority circuits that compute
$f.$ Following~\cite{sherstov07ac-majmaj}, we record the consequences
of Theorems~\ref{thm:main-mip} and~\ref{thm:BVW} in this regard.

\begin{theorem}
Any majority vote of threshold gates that computes the function
\begin{align*}
f(x,y)=\mip_m(x\vee y)
\end{align*}
has size $\exp\{\Omega(m)\}.$ Analogously, any majority vote of
threshold gates that computes the function 
\begin{align*}
f(x,y)=\OMB_n(x\wedge y)
\end{align*}
has size $\exp\{\Omega(n^{1/3})\}.$
\end{theorem}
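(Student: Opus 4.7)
The plan is to combine the discrepancy bounds from Theorems~\ref{thm:main-mip} and~\ref{thm:BVW} with the classical connection between discrepancy and the size of depth-$2$ majority-of-threshold circuits. Specifically, I would invoke the standard fact, established independently by Goldmann--H\aa stad--Razborov~\cite{GHR92}, Hajnal et al.~\cite{hajnal93threshold-const-depth}, and Nisan~\cite{nisan93threshold}, that any function $f\colon X\times Y\to\moo$ computed by a majority vote of $s$ linear threshold gates satisfies
\begin{equation*}
s \;\geq\; \frac{1}{2\,\disc(f)}.
\end{equation*}
This is the content cited in the sentence immediately preceding the theorem, so no fresh proof of it is required; I would simply state it as a lemma and reference the three sources.

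Granting this inequality, both halves of the theorem follow directly. For the first function, Theorem~\ref{thm:main-mip} has already shown that $\disc(\mip_m(x\vee y)) = \exp\{-\Omega(m)\}$, whence any majority-of-threshold circuit computing it must have size at least $\exp\{\Omega(m)\}$. For the second function, Theorem~\ref{thm:BVW} gives $\disc(\OMB_n(x\wedge y)) = \exp\{-\Omega(n^{1/3})\}$, and the same reasoning yields size $\exp\{\Omega(n^{1/3})\}$.

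The substantive work is therefore carried entirely by the two discrepancy theorems, both of which have already been proved in this section via the pattern matrix machinery (Theorem~\ref{thm:pattern-matrix-discrepancy}, applied to $\degthr(\mip_d)\geq d$ and to Beigel's lower bound $W(\OMB_m, cm^{1/3})\geq\exp(cm^{1/3})$, respectively). The present statement is essentially a packaging step. The one mild subtlety worth verifying is that the discrepancy-vs-size bound applies in precisely the form stated---a \emph{majority} of threshold gates, as opposed to a weighted sign or an arbitrary symmetric combining gate---but this is exactly what is proved in the cited references, so I anticipate no additional technical obstacle.
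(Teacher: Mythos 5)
Your proposal is correct and matches the paper's intent: the paper simply cites the analogous argument in \cite[\S7]{sherstov07ac-majmaj}, which is exactly the standard discrepancy-to-size bound for majority-of-threshold circuits (from \cite{GHR92, hajnal93threshold-const-depth, nisan93threshold}) combined with the two discrepancy upper bounds just established. There is nothing to add; the packaging step you describe is the whole proof.
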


\begin{proof}
Analogous to the proof given in~\cite[\S7]{sherstov07ac-majmaj}.
\qquad
\end{proof}

\section{Pattern matrices and the log-rank conjecture} \label{sec:logrank}

In previous sections, we characterized various matrix-analytic
and combinatorial properties of pattern matrices, including their classical
and quantum communication complexity, discrepancy, approximate rank, and
approximate trace norm. We conclude this study with another interesting
fact about pattern matrices. Specifically, we show that they 
satisfy the well-known \emph{log-rank conjecture}~\cite[p.~26]{ccbook}.

In a seminal paper, Mehlhorn and Schmidt~\cite{mehlhorn-schmidt82rank-cc}
observed that the deterministic communication complexity of a sign
matrix $F$ satisfies $D(F)\geq \log \rk F.$ The log-rank
conjecture is that this lower bound is always tight up to a polynomial
factor, i.e., $D(F)\leq (\log\rk F)^{O(1)}.$ Using
the results of the previous sections, we can give a short proof of
this hypothesis in the case of pattern matrices.

\begin{theorem}[on the log-rank conjecture]
Let $f\colon\zoo^t\to\moo$ be a given function, $d=\deg(f).$ 
Let $F$ be the $(n,t,f)$-pattern matrix. Then 
\begin{align}
\rk F \geq \left(\frac{n}{t}\right)^{d}  \label{eqn:rk-F}
      \geq \exp\{\Omega(D(F)^{1/4})\}.   
\end{align}
In particular, $F$ satisfies the log-rank conjecture.
\end{theorem}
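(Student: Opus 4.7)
The proof breaks into two independent inequalities, neither of which requires new ideas beyond what has been developed in the paper.

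\emph{Step 1: $\rk F \geq (n/t)^d$.} I would read the rank directly off the exact singular value decomposition supplied by Theorem~\ref{thm:pattern-spectrum}. That theorem says the rank, being the number of nonzero singular values of $F$ counted with multiplicity, equals
\[ \rk F \;=\; \sum_{S\subseteq[t]:\,\hat f(S)\neq 0}\left(\frac{n}{t}\right)^{|S|}. \]
Because $d=\deg(f)$, at least one set $S^{*}\subseteq[t]$ satisfies $|S^{*}|=d$ and $\hat f(S^{*})\neq 0$; retaining only that term already gives $\rk F\geq (n/t)^{d}$.

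\emph{Step 2: $(n/t)^d\geq\exp\{\Omega(D(F)^{1/4})\}$.} Proposition~\ref{prop:det-upper-bound} furnishes $D(F)\leq O(\dt(f)\log(n/t))$. The missing ingredient is a polynomial simulation of decision trees by exact degree, i.e.\ $\dt(f)\leq O(\deg(f)^{c})$ for a small constant $c$; invoking Midrijanis' bound with $c=3$ gives
\[ D(F)\;\leq\; O\!\left(d^{3}\log(n/t)\right). \]
Since $d\geq 1$ and $\log(n/t)\geq 1$ (we have $t\mid n$ and $t<n$, so $n/t\geq 2$), taking fourth roots yields $D(F)^{1/4}\leq O(d^{3/4}\log(n/t)^{1/4})\leq O(d\log(n/t))$, whence $d\log(n/t)\geq\Omega(D(F)^{1/4})$. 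Exponentiating gives the stated inequality, and combining both steps produces $D(F)\leq O((\log\rk F)^{4})$, which is the log-rank conjecture with exponent $4$.

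\emph{Main obstacle.} Neither step is technically hard; both are short consequences of tools already assembled. The one subtlety worth flagging concerns the choice of the $\dt(f)$-to-$\deg(f)$ conversion. The proof of Proposition~\ref{prop:det-upper-bound} in the excerpt cites Beals et al.'s $\dt(f)\leq O(\adeg(f)^{6})$, which after $\adeg(f)\leq\deg(f)$ would yield only $D(F)^{1/6}$ in the exponent; to reach the stated $D(F)^{1/4}$ one genuinely needs a sharper relation such as Midrijanis' $\dt(f)\leq O(\deg(f)^{3})$. Any polynomial bound $\dt(f)\leq O(\deg(f)^{c})$ with $c\leq 4$ suffices for the claim as written.
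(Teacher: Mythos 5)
Your proof is correct and takes essentially the same route as the paper's: read the rank off Theorem~\ref{thm:pattern-spectrum}, then combine the intermediate bound $D(F)\leq O(\dt(f)\log(n/t))$ from Proposition~\ref{prop:det-upper-bound} with a polynomial relation between $\dt(f)$ and $\deg(f)$. The one difference is the cited relation: the paper uses Nisan and Smolensky's $\dt(f)\leq 2\deg(f)^4$, which already yields $D(F)\leq O(d^4\log(n/t))\leq O\bigl((d\log(n/t))^4\bigr)\leq O\bigl((\log\rk F)^4\bigr)$ because $\log(n/t)\geq1$; your invocation of Midrijanis' $\dt(f)\leq O(\deg(f)^3)$ works equally well and is in fact slightly stronger. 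The ``subtlety'' you flag about Beals et al.'s $\adeg(f)^6$ bound is moot: the proof does not use the chained inequality $D(F)\leq O(\deg_{1/3}(f)^6\log(n/t))$ from Proposition~\ref{prop:det-upper-bound}, only the step $D(F)\leq O(\dt(f)\log(n/t))$, after which the $\dt$-versus-exact-degree bound is applied separately.
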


\begin{proof}
Since $\hat f(S)\ne 0$ for some set $S$ with $|S|=d,$ 
Theorem~\ref{thm:pattern-spectrum} implies that $F$ has at least
$(n/t)^{d}$ nonzero singular values. This settles the first inequality 
in~(\ref{eqn:rk-F}).

Proposition~\ref{prop:det-upper-bound} implies that $D(F)\leq
O(\dt(f)\log(n/t)),$ where $\dt(f)$ denotes the least depth of a decision tree
for $f.$ Nisan and Smolensky~\cite[Thm.~12]{buhrman-dewolf02DT-survey} 
prove that $\dt(f)\leq 2\deg(f)^4$ for all $f.$ Combining these two
observations establishes the second inequality in (\ref{eqn:rk-F}).
\qquad
\end{proof}

\section{Related work} \label{sec:shi-zhu}
Shi and Zhu~\cite{shi-zhu07block-composed-arxiv-updated} independently
obtained a result related to our lower
bound~(\ref{eqn:pattern-matrix-bounded-error}) on bounded-error
communication.  Fix functions $f\colon \zoon\to\moo$ and $g\colon
\zoo^k\times\zoo^k\to\zoo.$ Let $f\circ g^n$ denote the composition
of $f$ with $n$ independent copies of $g.$ More formally, the
function $f\circ g^n\colon \zoo^{nk}\times\zoo^{nk}\to\moo$ is given
by
	\[ (f\circ g^n)(x,y)    =    f\Big(\;g(x^{(1)},y^{(1)}),\;\; \dots,\;\; 
	g(x^{(n)},y^{(n)})\; \Big), \]
where $x=(x^{(1)},\dots,x^{(n)})\in\zoo^{nk}$ and
$y=(y^{(1)},\dots,y^{(n)})\in\zoo^{nk}.$ Shi and Zhu study the
communication complexity of $f\circ g^n.$ Their main
result~\cite[Lem.~3.5]{shi-zhu07block-composed-arxiv-updated} is 
that
\[ Q^*_{1/3} (f\circ g^n) \geq \Omega(\deg_{1/3}(f))
\qquad\text{provided that} \qquad
\rho(g)\leq \frac{\deg_{1/3}(f)}{2\e n}, 
\]
where $\rho(g)$ is a new variant of discrepancy that the authors introduce. 
As an illustration, they re-prove a weaker version of Razborov's lower bounds in
Theorem~\ref{thm:razborov03quantum}. In our terminology
(Section~\ref{sec:discrepancy}), their proof also fits in the
framework of the Klauck-Razborov generalized discrepancy method.

Shi and Zhu's result revolves around the quantity $\rho(g),$ which
needs to be small.  This poses two complications.
First, the function $g$ will generally need to depend on 
many variables, from $k=\Theta(\log n)$ to $k=n^{\Theta(1)},$
which weakens the final lower bounds on
communication.  For example, the lower bounds obtained
in~\cite{shi-zhu07block-composed-arxiv-updated} for symmetric functions are
polynomially weaker than optimal (Theorem~\ref{thm:razborov03quantum}).

A second complication, as the authors note, is that ``estimating
$\rho(g)$ is unfortunately difficult in
general"~\cite[\S4.1]{shi-zhu07block-composed-arxiv-updated}.
For example, re-proving Razborov's lower bounds
reduces to estimating $\rho(g)$ for $g(x,y)=x_1y_1\vee
\cdots\vee x_ky_k.$ Shi and Zhu accomplish this using 
Hahn matrices, an advanced tool that is the centerpiece
of Razborov's own proof (Razborov's use of Hahn matrices
is somewhat more demanding).

Our method avoids these complications altogether.  
For example, we prove (by taking $n=2t$ in
the pattern matrix method,
Theorem~\ref{thm:main-cc}) that
\[Q^*_{1/3} (f\circ g^n) \geq \Omega(\deg_{1/3}(f))\] 
for any function $g\colon \zoo^k\times\zoo^k\to\zoo$ such that the
matrix $[g(x,y)]_{x,y}$ contains the following submatrix, up to
permutations of rows and columns:
\[
\begin{bmatrix}
1 & 0 & 1 & 0 \\
1 & 0 & 0 & 1 \\
0 & 1 & 1 & 0 \\
0 & 1 & 0 & 1 
\end{bmatrix}.
\]
To illustrate, one can take $g$ to be
\[ g(x,y) \;\;=\;\; x_1y_1\;\;\vee\;\; x_2y_2\;\;\vee\;\;
x_3y_3\;\;\vee\;\; x_4y_4 \]
or 
\[ g(x,y)\;\;=\;\;
	x_1y_1y_2
		\;\;\vee\;\;
	\overline{x_1}\,y_1\overline{y_2}
		\;\;\vee\;\;
	x_2\,\overline{y_1}\,y_2
		\;\;\vee\;\;
	\overline{x_2}\,\overline{y_1}\,\overline{y_2}.
\]
In summary, there is a simple function $g$ on $k=2$ variables
that works universally for all $f.$ 
This means no technical conditions to check, such as
$\rho(g),$ and no blow-up in the number of variables.
As a result, we are able
to re-prove Razborov's optimal lower bounds exactly.  Moreover, the technical
machinery of this paper is self-contained and disjoint from Razborov's
proof.

A further advantage of the pattern matrix method is that it extends
in a straightforward way to the multiparty
model~\cite{lee-shraibman08disjointness, chatt-ada08disjointness, 
pitassi08np-rp, david-pitassi-viola08bpp-np,
beame-huyn-ngoc08multiparty-eccc}.  This extension depends on 
the fact that the rows of a pattern matrix are applications of
the same function to different subsets of the variables. In the
general context of block composition, it is unclear how to carry
out this extension.  Further details can be found in the
survey~\cite{dual-survey}.

These considerations do not diminish the technical merit of Shi and
Zhu's method, which is of much interest.  The proofs
in~\cite{shi-zhu07block-composed-arxiv-updated} and this paper start out with the
same duality transformation (Theorem~\ref{thm:dual-approx}) 
but diverge substantially from then on, which explains
the differences in our results.  Specifically, we introduce and
analyze pattern matrices, while Shi and Zhu construct a much different
family of matrices.  

\section*{Acknowledgments}

I would like to thank Adam Klivans, James Lee, Sasha Razborov,
Yaoyun Shi, Avi Wigderson, and Ronald de Wolf for their feedback
on an earlier version of this manuscript.  
I am thankful to Ronald de Wolf for his permission to include his
remark on Theorem~\ref{thm:main-cc}.  It is also to Ronald that I
owe my interest in quantum communication.  This research was supported
by Adam Klivans' NSF CAREER Award and NSF Grant CCF-0728536.

{
\bibliographystyle{siam2}
\bibliography{refs}
}

\appendix

\section{On uniform approximation and sign-representation}
\label{sec:E-vs-W}
The purpose of this appendix is to prove Theorem~\ref{thm:E-vs-W}
on the representation of a Boolean function by real versus integer
polynomials.  Similar statements have been noted earlier by several
authors~\cite{KP98threshold, buhrman07pp-upp}.  We derive our result
by modifying a recent analysis due to Buhrman et
al.~\cite[Cor.~1]{buhrman07pp-upp}.

\begin{restatetheorem}{thm:E-vs-W}
Let $f\colon \zoon\to\moo$ be given. Then for $d=0,1,\dots,n,$
\begin{align*}
\frac1{1-E(f,d)} \leq W(f,d) \leq \frac2{1-E(f,d)}
\left\{
{n\choose 0}+{n\choose 1}+\cdots+{n\choose d}\right\}^{3/2},
\end{align*}
with the convention that $1/0=\infty.$
\end{restatetheorem}

\begin{proof}
One readily verifies that $W(f,d)=\infty$ if and only if $E(f,d)=1.$
In what follows, we focus on the complementary case when $W(f,d)<\infty$
and $E(f,d)<1.$

For the lower bound on $W(f,d),$ fix integers $\lambda_S$
with $\sum_{|S|\leq d}|\lambda_S|=W(f,d)$ such that the polynomial
$p(x)=\sum_{|S|\leq d}\lambda_S\chi_S(x)$ satisfies $f(x)\equiv
\sign p(x).$ Then $1\leq f(x)p(x)\leq W(f,d)$ and therefore
\begin{align*}
E(f,d)\leq \left\|f-\frac1{W(f,d)}\,p\right\|_\infty\leq 1-\frac1{W(f,d)}.
\end{align*}

To prove the upper bound on $W(f,d),$ fix any degree-$d$ polynomial
$p$ such that $\|f-p\|_\infty=E(f,d).$ Define $\delta=1-E(f,d)>0$
and $N=\sum_{i=0}^d {n\choose i}.$ For a real $t,$ let $\rnd t$
be the result of rounding $t$ to the closest integer, so that $|t-\rnd
t|\leq 1/2.$ We claim that the polynomial
\begin{align*}
 q(x)=  \sum_{|S|\leq d} \rnd(M\hat p(S))\chi_S(x),
\end{align*}
where $M=3N/(4\delta),$ satisfies $f(x)\equiv \sign q(x).$
Indeed,
\begin{align*} 
\left|f(x) - \frac1M q(x)\right|
 &\leq |f(x) - p(x)| + \frac1M |Mp(x) - q(x)|\\
 &\leq 1 - \delta  + \frac1M \sum_{|S|\leq d} 
    |M\hat p(S) - \rnd(M\hat p(S))|\\
 &\leq 1 - \delta + \frac{N}{2M}\\
 &<1.
\end{align*}
It remains to examine the sum of the coefficients of $q.$ We have:
\begin{align*}
\sum_{|S|\leq d} |\rnd(M\hat p(S))|
&\leq \frac12 N  + M\sum_{|S|\leq d} |\hat p(S)|\\
&\leq \frac12 N  + M\left(N \Exp_x\left[p(x)^2\right]\right)^{1/2}\\
&\leq \frac{2N\sqrt N}\delta,
\end{align*}
where the second step follows by an application of 
the Cauchy-Schwarz inequality and Parseval's identity (\ref{eqn:parsevals}). 
\quad
\end{proof}

\end{document}